\definecolor{lightgray}{gray}{0.92}
\definecolor{lightviolet}{RGB}{235,230,250}
\tikzset{
  obs/.style={circle, draw=black, fill=white, inner sep=1pt, minimum size=8mm},
  param/.style={circle, draw=black, fill=gray!20, inner sep=1pt, minimum size=8mm},
  hyper/.style={rectangle, draw=black, fill=blue!20, inner sep=2pt, minimum size=6mm},
}
\newcommand\mc{\mathcal}
\newcommand\mbb{\mathbb}
\newtheorem{theorem}{Theorem}
\newtheorem{lemma}[theorem]{Lemma}
\newtheorem{definition}[theorem]{Definition}
\title{\vspace{-2em}Scalable Bayesian inference for high-dimensional mixed-type multivariate spatial data}
\author{Arghya Mukherjee}
\author{Arnab Hazra}
\author{Dootika Vats}
\affil{Indian Institute of Technology Kanpur}
\date{}
\begin{document}
\maketitle
\begin{abstract}
\noindent Spatial generalized linear mixed-effects models are popularly used to analyze spatially indexed univariate responses. However, with modern technology, it is common to observe vector-valued mixed-type responses, e.g., a combination of binary, count, or continuous types, at each location. Methods for jointly modeling such mixed-type multivariate spatial responses are rare. Using multivariate Gaussian processes (GPs) in the latent layer, we present a class of Bayesian spatial methods applicable to any combination of exponential family responses. Since multivariate GP-based methods can suffer from computational bottlenecks when the number of spatial locations is high, we further employ a computationally efficient Vecchia approximation for fast posterior inference and prediction. Key theoretical properties of the proposed model, such as identifiability and the structure of the induced covariance, are established. Our approach employs a Markov chain Monte Carlo-based inference method that uses elliptical slice sampling within a blocked Metropolis-within-Gibbs sampling framework. We illustrate the efficacy of the proposed method through simulation studies and a real-data application on joint modeling of wildfire counts and burnt areas across the United States.
\end{abstract}

\setstretch{1.15}
\begin{keywords}
\noindent Large spatial data analysis, Latent Gaussian models, Mixed-type spatial responses, Multivariate spatial Bayesian modeling, Markov chain Monte Carlo, Vecchia approximation.
\end{keywords}



\section{Introduction}

In recent years, the scale and ubiquity of vast, spatially indexed datasets have grown significantly. This expansion is largely fueled by technological advancements, such as the Global Positioning System and Remote Sensing, as well as the relentless growth in digital storage capacity. Spatially indexed responses can be multivariate and, more specifically, ``mixed-type'', i.e., a combination of continuous, skewed continuous, binary, count, and other types. In this work, we introduce a simple and interpretable Bayesian model for spatial point-referenced multivariate data that accommodates mixed-type responses from the exponential family.

Flexible statistical modeling and inference for mixed-type spatial data is important due to two types of dependence: (a) spatial dependence across locations, and (b) dependence across different response variables. Understanding the relationship between different types of spatial responses is crucial across numerous scientific domains, including environmental science, epidemiology, and risk assessment. For instance, to improve predictive accuracy in wildfire modeling, \cite{cisneros2023combined} jointly analyze count-type responses, such as wildfire frequencies, and continuous-type responses, such as total burnt area. Similarly, \cite{yadav2023joint} develop Bayesian hierarchical marked point-process models for the joint prediction of landslide counts and the size of affected areas to quantify regional landslide hazard better. More broadly, climate scientists may be interested in studying whether incorporating both types of information enhances inference in assessing environmental hazards. Similarly, in public health research, the National Academy of Sciences of the United States has emphasized the importance of monitoring daily fine particulate matter ($\textrm{PM}_{2.5}$) due to its strong association with adverse health outcomes \citep{Burnett18}. Elevated $\textrm{PM}_{2.5}$ levels have been linked to increased mortality risks, including lung cancer in individuals with no prior history of smoking \citep{Turner2011} and heightened cardiovascular mortality \citep{Brook2010}. The interdependence among different responses suggests that jointly modeling $\textrm{PM}_{2.5}$ concentrations (continuous) and disease incidence or mortality counts (discrete) could yield significant insights into their relationships. Our methodological framework aims to address such challenges by providing a flexible and scalable approach for high-dimensional mixed-type spatial data.

Most existing methods for mixed-type response models have been developed for specific applications and do not apply to generic spatial models. For example, \cite{Gueorguieva01StatM} propose joint models for dependent discrete and continuous outcomes in biomedical studies, and \cite{Goldstein09} extend multilevel models to deal with incomplete and structured data. Generalized linear latent variable models offer a more flexible framework for modeling mixed-type responses \citep{Sammel02}, including extensions to longitudinal settings \citep{Yang2014}. More recent approaches also account for zero inflation and over-dispersion in continuous and count responses \citep{Kassahun15, Molenberghs10}. Their methods are designed for specific paired combinations of response types, such as continuous and count, or continuous and binary. They are not flexible enough to be extended to spatially correlated data with various possible types of multi-response data.

We develop a Bayesian hierarchical framework for modeling mixed-type spatial responses. The model flexibly accommodates data comprising different response types by employing a unified latent process structure, enabling joint inference across varied measurement scales. To address the computational challenges inherent in analyzing large spatial datasets, we use the Vecchia approximation \citep{Vecchia88} in our model that significantly reduces the cost of matrix operations typically associated with GPs, making posterior inference via Markov chain Monte Carlo (MCMC) amenable even in high-dimensional spatial settings. The proposed approach is particularly well-suited for applications involving a large number of spatial locations and a moderate number of response variables, which is a common scenario.

The structure of the paper is as follows. Section~\ref{sec:methodology} provides a concise review of spatial multivariate models. Building on this foundation, we highlight key challenges and introduce the core components of our proposed mixed-type multivariate spatial model. We then extend this model to high-dimensional settings by incorporating the Vecchia approximation for improved scalability. In Section~\ref{sec:bayesian_computation}, we present an MCMC algorithm for scalable Bayesian inference. Section~\ref{sec:predictive_modeling} focuses on fast predictive inference, while Section~\ref{sec:data_analysis} explores the performance of our approach through both simulated and real data analysis. Finally, in Section~\ref{sec:discussions}, we summarize our contributions and outline promising directions for future research. Additional discussions and numerical experiments are provided in the Supplementary Materials, with sections therein denoted by the prefix ``S''. Equations referred from the Supplementary Materials are denoted by ``E\S''.

\section{Methodology}\label{sec:methodology}

We present the ingredients for a multivariate spatial model that accommodates mixed-type outcomes, briefly review the existing approaches to model multivariate point-referenced spatial data, and subsequently develop our proposed model. 

\subsection{Background}

Suppose for each spatial site $\bm{s}$ within a compact domain $\mc{D} \subset \mbb{R}^d$, has multivariate response $\mc{Y}(\bm{s}) = \big[Y_1(\bm{s}), Y_2(\bm{s}),\ldots, Y_q(\bm{s})\big]^{\top}$ on $q$ variables of interest, along with information on spatially varying vector of covariates denoted by $\mc{X}(\bm{s}) \in \mbb{R}^p$. Classical multivariate spatial models \citep{Zhang21mult_geo} specify all components of $\mathcal{Y}(\bm{s})$ to be continuous and equipped with a latent multivariate zero-mean spatially colored process  $\mc{W}(\bm{s}) = \big[{W}_1(\bm{s}), {W}_2(\bm{s}), \ldots, {W}_q(\bm{s})\big]^{\top}$ such that for different $j \in \{1, \cdots, q\}$, $W_j(\bm{s})$ stitches the cross-dependence among the $q$ outcomes. Each component of $\mc{Y}(\bm{s})$ is thus modeled using a spatial regression model given by
\begin{equation}\label{splm}
    Y_j(\bm{s}) = \mc{X}(\bm{s})^{\top}\bm{\beta}_j + W_j(\bm{s}) + \varepsilon_j(\bm{s}),\; j = 1,2, \cdots ,q,
\end{equation}
where  the $p$-dimensional regression coefficient corresponding to the covariate $\mc{X}(\bm{s})$ for the $j$-th response is denoted by $\bm{\beta}_j$ and $\varepsilon_j(\bm{s}) \overset{\text{iid}}{\sim} \mc{N}(0, \tau^2)$ captures measurement error or nugget effect independently of $\bm{s}$. The dependence structure of the process $\mc{W}(\cdot)$ is specified by a $q$-dimensional matrix-valued covariance kernel, denoted as $\bm{C}(\cdot, \cdot)$. At locations $\bm{s}$ and $\bm{s}'$ in $ \mathcal{D}$, spatial covariance between the $i$-th and $j$-th components of the process can be specified by the $(i,j)$-th element of the matrix-valued covariance function as
\begin{equation}\label{cov_kernel}
    C_{ij}(\bm{s}, \bm{s}') = \textrm{Cov}\left[W_i(\bm{s}), W_j(\bm{s}') \right] \;  \textrm{for all}\; i,j = 1, \ldots, q.
\end{equation}
\cite{Genton15} review diverse choices of covariance functions of the form compatible with \eqref{cov_kernel}, which can yield a flexible class of models discussed in \eqref{splm}. 
Among these, multivariate Mat\'{e}rn constructions provide a particularly interpretable and widely used parametric class with controllable smoothness and cross-dependence parameters \citep{gneiting2010matern, porcu2016spatio}.
However, such models in \eqref{splm} assume Gaussian marginals for each $Y_j(\bm{s})$ and do not naturally extend to mixed-type outcomes.

A major challenge of Bayesian spatial generalized linear mixed-effects models (spGLMMs) in high dimensions lies in posterior sampling, due to the additional complexity of intractable likelihoods \citep{Christensen06}. One of the earliest works in this direction is by \cite{Zhu05}, who proposed a frequentist generalized latent variable model for replicated multivariate spatiotemporal data. More recently, conjugate Bayesian models have been introduced for multivariate responses from exponential family distributions using Diaconis–Ylvisaker conjugate priors, which enable MCMC-free inference and thereby offer computational advantages \citep{Bradley2024, Nandy2022, Zhou2024}. However, these methods often impose simplified or fixed spatial dependence structures and may not scale well with increasing numbers of response types or spatial locations.

Some studies use copula-based models to capture dependence in mixed-type responses \citep{deLeon11, Fitzmaurice95, Song09}. While useful, copulas do not uniquely specify the marginal dependence with discrete outcomes, and computation is challenging in high-dimensional spatial models \citep{hazra2021estimating, hazra2025efficient}. In contrast, hierarchical models offer a more tractable way to model marginal distributions with an explainable dependence structure jointly. In the non-spatial setting, mixed-type response models have been studied using latent factor models \citep{Jiryaie16}, or multivariate generalized linear models \citep{Ekvall22} that link each response to its own linear predictor. While these methods are effective for small-scale and non-spatial problems, they often struggle with computational scalability in high dimensions. The sparse scientific literature in the spatial statistics community has led us to develop a simple yet useful Bayesian hierarchical model that jointly handles mixed-type responses and spatial dependence within a coherent latent variable framework.

\subsection{Proposed joint model}

For any arbitrary spatial location $\bm{s} \in \mc{D}$, let the $q$-variate mixed-type response vector be $\mathcal{Y}(\bm{s})~=~\big[Y_1(\bm{s}), \ldots, Y_q(\bm{s})\big]^{\top}$ with a latent multivariate process $\mathcal{W}(\bm{s})~=~\big[W_1(\bm{s}), \ldots, W_q(\bm{s})\big]^{\top}$. For each $j = 1, \ldots, q$ and $\bm{s} \in \mc{D}$, we model $Y_j(\bm{s})$ as conditionally independent given $W_j(\bm{s})$. We specify a known canonical link function $g_j$ for each $j$-th response type, according to \cite{diggle1998model} and define the conditional mean of $Y_j(\bm{s})$ through
\begin{equation}\label{link_functions}
    g_j\left(\textrm{E}[Y_j(\bm{s}) \mid W_j(\bm{s})]\right) = W_j(\bm{s}).
\end{equation}
Let the cumulant function be denoted by $b_j(\cdot)$ and the known dispersion parameter associated with the $j$-th response be $\psi_j$. We assume $Y_j(\bm{s})$ follows an exponential family distribution, denoted by $\textrm{EF} \left(Y_j(\bm{s}) \mid W_j(\bm{s}) = w, \psi_j \right)$, with a Lebesgue or count measure dominated density, evaluated at $y$ as 
\begin{equation}\label{spatial_exp_fam}
    f_j\left(y \mid w, \psi_j\right) 
    = h\left(y, \psi_j\right) 
    \exp\left\{ \dfrac{y w - b_j(w)}{\psi_j}  \right\}.
\end{equation}
We adopt a centered parameterization proposed by \cite{Christensen06} for the latent multivariate process $\mathcal{W}(\cdot) = \left[W_1(\cdot), \ldots, W_q(\cdot)\right]^{\top}$ and assume
\begin{equation}\label{centered_gp}
    \mathcal{W}(\cdot) \sim \mathcal{GP}_q\left(\bm{B}^\top \mathcal{X}(\cdot),\; \bm{C}(\cdot, \cdot)\right),
\end{equation}
where $\bm{B} = (\bm{\beta}_1, \ldots, \bm{\beta}_q) \in \mathbb{R}^{p \times q}$ denotes the regression coefficient matrix with each element $(\bm{\beta}_{ij})_{1 \leq i \leq p, 1 \leq j \leq q}$ being the fixed effect corresponding to the $i$-th covariate and $j$-th response, $\mathcal{GP}_q$ denotes a $q$-variate Gaussian process (GP), and $\bm{C}(\cdot, \cdot)$ is the multivariate covariance kernel. To enable efficient computation and parsimonious inference on the covariance function, we impose a separable structure
\begin{equation}\label{sep_kernel}
    \bm{C}(\bm{s}, \bm{s}') = \mathcal{K}(\bm{s}, \bm{s}') \bm{\Sigma}, \quad \bm{s}, \bm{s}' \in \mathcal{D},
\end{equation}
where $\bm{\Sigma} = (\bm{\Sigma}_{ij})_{1 \leq i, j \leq q}$ is a $q \times q$-dimensional spatially-invariant matrix that captures cross-response covariance, and $\mathcal{K}(\bm{s}, \bm{s}')$ is a valid univariate spatial correlation function. We use the commonly used Mat\'ern kernel \citep{Matern60} as a flexible choice for $\mathcal{K}$ given by
\begin{equation}\label{eq:matern_kernel}
    \mathcal{K}(\bm{s}, \bm{s}') = \dfrac{1}{2^{\nu-1} \Gamma(\nu)} 
    \left( \dfrac{\norm{\bm{s} - \bm{s}'}_2 }{\phi} \right)^{\nu} 
    K_{\nu}\left( \dfrac{\norm{\bm{s} - \bm{s}'}_2}{\phi} \right), 
    \quad \phi > 0,\; \nu > 0,
\end{equation}
where $K_{\nu}$ is the modified Bessel function of the second kind. Here, the parameter $\phi$ controls the range of spatial dependence, and the smoothness parameter $\nu$, which we assume to be fixed, determines the smoothness of the process. Hereafter, we denote the Mat\'ern kernel $\mc{K}$ in \eqref{eq:matern_kernel} as $\mc{K}_{\phi}$ with parameter $\phi$. 
In a mixed-type response model, if the observed data do not admit a direct transformation to a Gaussian scale, variogram-based methods are not applicable for learning the smoothness parameter, $\nu$. Moreover, $\nu$ is weakly identifiable, particularly under non-Gaussian likelihoods, and cannot be estimated reliably from the data \citep{Stein1999}. Consequently, a common effective strategy is to fix $\nu$ at a prespecified value, selecting from a range of plausible smoothness levels based on the application. Typical choices include $\nu \in \{0.5,\ldots, 2.5\}$, where $\nu = 0.5$ corresponds to the exponential covariance function yielding rough spatial surfaces. Larger values of $\nu$ imply a smoother process. As $\nu \to \infty$, $\mc{W}(\cdot)$ approaches to a $q$-variate GP with a squared-exponential covariance kernel, suitable for modeling an oversmoothed latent process. In practice, we recommend selecting $\nu$ based on prior knowledge of process smoothness, or by conducting sensitivity analyses over a small set of candidate values, as inference is often robust to such choices relative to other covariance parameters \citep{geoga2023fitting}.

Before introducing our joint model, we define the parameters and describe prior specifications used in our proposed Bayesian framework. Conditioning on the $q \times q$ response cross-covariance matrix $\bm{\Sigma}$, we specify a Matrix-Normal prior on the $p \times q$-dimensional regression coefficient matrix $\bm{B}$, denoted by $\mc{MN}_{p,q}$, given by $ \bm{B} \mid \bm{\Sigma} \sim \mathcal{MN}_{p, q}( \bm{M}, \bm{V},  \bm{\Sigma})$. Here $\bm{M}$ is a $p \times q$ mean matrix and $\bm{V}$ is a $p \times p$ dimensional positive definite matrix expressing the row-wise covariance matrix of $\bm{B}$. The prior distribution of $\bm{B}$ depends on the data-level covariance matrix $\bm{\Sigma}$. While this prior covariance may differ from $\bm{\Sigma}$, it would significantly increase the computational burden \citep{hazra2020multivariate}. For the cross-response covariance matrix $\bm{\Sigma}$, we adopt a commonly used Inverse-Wishart prior $ \bm{\Sigma} \sim \mathcal{IW}(\bm{S}, v)$, with a positive definite scale matrix $\bm{S}$ and degrees of freedom $v$. The spatial correlation is governed by a Mat\'ern kernel in \eqref{eq:matern_kernel} with $\phi$ and $\nu$. Since consistent estimation of the Mat\'ern parameters is challenging \citep{Zhang2004inconsistent}, we fix $\nu$ at a reasonable value and assign a uniform prior for $\phi$ as $ \phi \sim \mathcal{U}(0, b_{\phi})$, where $b_\phi$ is chosen so that the effective range corresponds to a minimal correlation (say, $0.01$ or $0.05$) at the domain diameter $\Delta: = \underset{{k,l}}{\max}\lVert \bm{s}_k - \bm{s}_l \rVert_2$, ensuring sufficient posterior learning through data. We will discuss the reasoning for the prior of $\phi$ in Section~\ref{sec:bayesian_computation}. We further assume that the conditional cumulant functions $b_j(\cdot)$ for one-parameter regular exponential families are strictly convex for the $j$-th response.
Consequently, our proposed hierarchical joint model is
\begin{equation}\label{our_model}
\begin{aligned}
\text{Data level:} \quad
& Y_j(\bm{s}) \mid W_j(\bm{s}) \;\overset{\mathrm{ind}}{\sim}\; \mathrm{EF}\big(Y_j(\bm{s}) \mid W_j(\bm{s}), \psi_j \big), 
\quad j = 1, \ldots, q,\quad \bm{s} \in \mc{D},\\[0.3em]
\text{Process level:} \quad
& \mc{W}(\cdot) \mid \bm{B}, \bm{\Sigma}, \phi \;\sim\; \mc{GP}_q\big(\bm{B}^\top \mc{X}(\cdot),\; \mc{K}_\phi(\cdot, \cdot)\, \bm{\Sigma} \big), \\[0.3em]
\text{Parameter level:} \quad
& \bm{B} \mid \bm{\Sigma} \;\sim\; \mathcal{MN}_{p, q}\left(\bm{M}, \bm{V}, \bm{\Sigma}\right), \\
& \bm{\Sigma} \;\sim\; \mathcal{IW}(\bm{S}, v), \\
& \phi \;\sim\; \mathcal{U}(0, b_{\phi}).
\end{aligned}
\end{equation}

We observe the data at $n$ spatial locations $\mathcal{S} = \{\bm{s}_1, \ldots, \bm{s}_n\} \subset \mathcal{D}$, where each site leads to a $q$-variate response $\mathcal{Y}(\bm{s}_i) \in \mbb{R}^q$ and covariates $\mathcal{X}(\bm{s}_i) \in \mathbb{R}^p$. The stacked observed response matrix, the latent spatial random effects matrix, and the covariate matrix, respectively, are
$$
\bm{Y} = 
\begin{bmatrix}
\mc{Y}(\bm{s}_1)^\top \\
\mc{Y}(\bm{s}_2)^\top \\
\vdots \\
\mc{Y}(\bm{s}_n)^\top
\end{bmatrix}_{n \times q},
\;\;
\bm{W} =
\begin{bmatrix}
\mc{W}(\bm{s}_1)^\top \\
\mc{W}(\bm{s}_2)^\top \\
\vdots \\
\mc{W}(\bm{s}_n)^\top
\end{bmatrix}_{n \times q},
\;\;
\bm{X} =
\begin{bmatrix}
\mc{X}(\bm{s}_1)^\top \\
\mc{X}(\bm{s}_2)^\top \\
\vdots \\
\mc{X}(\bm{s}_n)^\top
\end{bmatrix}_{n \times p}
.$$
 Under the separable covariance $\bm{C}(\bm{s}, \bm{s}') = \mathcal{K}_{\phi}(\bm{s}, \bm{s}') \bm{\Sigma}$, the induced covariance of the vectorized latent process, denoted as $\operatorname{vec}(\bm{W})$ across all spatial sites is 
$\bm{\Omega}_{\phi}: = \bm{\Sigma} \otimes \bm{K}_{\phi}$, 
where $\bm{K}_{\phi} = \big(\mathcal{K}_\phi(\bm{s}_i, \bm{s}_j)\big)_{1 \leq i,j \leq n}$. The $\operatorname{vec}(\cdot)$ operator transforms a matrix into a column vector by vertically stacking the columns of the matrix. The latent matrix $\bm{W}$ thus follows a Matrix-Normal distribution specified as
\begin{equation} \label{W_likelihood}
    \bm{W} | \bm{B}, \bm{\Sigma}, \phi \sim \mathcal{MN}_{n,q}(\bm{X} \bm{B}, \bm{K}_\phi, \bm{\Sigma})
    \Leftrightarrow 
    \operatorname{vec}(\bm{W}) | \bm{B}, \bm{\Sigma}, \phi  \sim \mathcal{N}_{nq}( (\bm{I}_q \otimes \bm{X}) \operatorname{vec}(\bm{B}), \bm{\Omega}_{\phi}).
\end{equation}
We assume that the underlying data over $\mc{S}$ is generated from our proposed model in \eqref{our_model}. To be specific, each column of $\bm{Y}$ is conditionally independent of the corresponding columns of $\bm{W}$. We demonstrate our hierarchical model using a directed acyclic graph in Figure~\ref{fig:dag}.
\begin{figure}[]
\centering
\begin{tikzpicture}[node distance=8mm and 8mm, >=stealth]

\node[param] (Wj) {$W_j(\bm{s}_i)$};
\node[obs, right=of Wj] (Yj) {$Y_j(\bm{s}_i)$};

\node[param, above left=5mm and 5mm of Wj] (B) {$\bm{B}$};
\node[param, below left=5mm and 5mm of Wj] (Sigma) {$\bm{\Sigma}$};

\node[param, below=of Wj] (phi) {$\phi$};
\node[hyper, below=of phi] (bphi) {$b_{\phi}$};


\node[hyper, left=of B] (M) {$\bm{M}$};
\node[hyper, below=of M] (V) {$\bm{V}$};

\node[hyper, left=of Sigma] (S) {$\bm{S}$};
\node[hyper, below=of S] (v) {$v$};

\draw[->] (M) -- (B);
\draw[->] (V) -- (B);
\draw[->] (Sigma) -- (B);

\draw[->] (S) -- (Sigma);
\draw[->] (v) -- (Sigma);

\draw[->] (bphi) -- (phi);

\draw[->] (B) -- (Wj);
\draw[->] (Sigma) -- (Wj);
\draw[->] (phi) -- (Wj);

\draw[->] (Wj) -- (Yj);

\node[
  draw,
  fit=(Wj)(Yj),
  inner sep=5pt,
  label=below right:{\normalsize
    \shortstack{$i = 1,\ldots,n$ \\ $j = 1,\ldots,q$}}
] (plate1) {};

\end{tikzpicture}
\caption{Directed acyclic graph representation of our model. Gray shaded nodes denote the model parameters and violet nodes denote the fixed hyperparameters.}
\label{fig:dag}
\end{figure}

\subsection{Model properties}\label{sec:model_properties}

We discuss properties of our model that are essential for reliable posterior inference. Although the simple structure of our proposed model facilitates interpretability, it also raises concerns about potential misspecification and its impact on inference. In the context of spatial mixed-type response models, this typically involves selecting a parameterized covariance structure for the random effects. However, even with a well-specified, non-overparameterized, and separable covariance matrix for the responses, models for data with non-replicated observations may still suffer from non-identifiability \citep{Zhang2004inconsistent}. We discuss some model properties using the marginal mean and covariance of $\textrm{vec}(\bm{Y})$. Denoting $\bm{g} = (g_1, g_2, \cdots, g_q)^{\top}$ to be the vectorized link function corresponding to $q$-types of responses, we write  
\begin{equation*}\label{cond_exp_var}
    \textrm{E}[\textrm{vec}({\bm{Y}})] = \bm{g}^{-1}(\textrm{vec}{(\bm{W}})), 
\quad 
\mathrm{cov}[\textrm{vec}(\bm{Y})] = \mathrm{cov}[\bm{g}^{-1}(\textrm{vec}(\bm{W}))] + \textrm{E}[\mathrm{cov}[\textrm{vec}(\bm{Y}) \mid \textrm{vec}(\bm{W})]].
\end{equation*}
From the above equation, several observations follow. For any $k \in \{1,2\ldots,q\}$, we define $b_{k}^{''}(\cdot)$ to be the second derivative of the cumulant function. We can write the conditional covariance as
$$ \mathrm{cov}[\textrm{vec}(\bm{Y}) \mid \textrm{vec}(\bm{W})] = \textrm{block-diag}(\bm{A}_1, \cdots, \bm{A}_q), \; \bm{A}_k = \psi_{k} \;\textrm{diag}\left(\textrm{E}\big[b_{k}^{''}(W_k(\bm{s}_i))\big]\right)_{1 \leq i \leq n}, $$ which is block-diagonal and the cross-dependence across the $q$ responses is determined by $\mathrm{cov}[\bm{g}^{-1}(\textrm{vec}(\bm{W}))]$. Due to having different dispersion parameters $\psi_k$ and cumulant functions $b_k$, the block-diagonal matrices $\bm{A}_k$ are thus distinct for $q$-different responses. Hence, $\mathrm{cov}[\textrm{vec}(\bm{Y}) \mid \textrm{vec}(\bm{W})]$ exhibits a non-separable covariance structure. This structure implies that, even though a separable covariance is assumed for the latent matrix $\bm{W}$, the marginal covariance of $\bm{Y}$ can still be non-separable and spatially varying. As a result, our model retains the flexibility to capture heterogeneous covariance both across response types and over space. Second, for any spatial location $\bm{s}$, the univariate distribution of $Y_j$ fully determines both $\textrm{E}[Y_j(\bm{s})]$ and $\textrm{E}[Y_j^2(\bm{s})]$. Consequently, it directly follows that the off-diagonal elements of $\bm{\Sigma}$ do not influence the marginal means or variances. Their contribution is restricted to the dependence structure across locations. Therefore, the marginal variability coincides with that obtained from a model in which $\bm{\Sigma}_{ij} = 0$ for $i \neq j$. Third, since $\bm{g}$ and $\mathrm{cov}[\textrm{vec}(\bm{Y}) \mid \textrm{vec}(\bm{W})]$ are generally nonlinear and non-constant (for example $\bm{g}$ can be a vectorized function of logit, log-link, etc.), both $\textrm{E}[\textrm{vec}(\bm{Y})]$ and $\textrm{E}[\mathrm{cov}[\textrm{vec}(\bm{Y}) \mid \textrm{vec}(\bm{W})]]$ depend on $\bm{\beta}$ and the diagonal elements of $\bm{\Sigma}$. Fourth, because $\mathrm{var}[Y_j(\bm{s})] = \psi_{j} \textrm{E}\big[b_{j}^{''}(W_j(\bm{s}))]$ increases with $\psi_j$ while $\mathrm{cov}[\bm{g}^{-1}(\textrm{vec}(\bm{W}))]$ is independent of $\bm{\psi} = (\psi_1, \ldots, \psi_q)$, for any two locations $\bm{s}$ and $\bm{s}'$, the magnitude of correlation $\mathrm{corr}[Y_j(\bm{s}), Y_k(\bm{s}')]$ decreases as $\psi_j$ or $\psi_k$ increase. Intuitively, since different response types are conditionally uncorrelated, a large $\psi_j$ indicates that much of the variation in $Y_j(\bm{s})$ is independent of variation in the other responses. We provide a general definition of weak identifiability of the model parameters, as given in \cite{Ekvall22}, and derive a few results establishing key theoretical properties of our model.
\begin{definition}\label{def:identifiability}
    Suppose a model $\mc{P}_{\bm{\gamma}} = \{F_{\gamma} : \bm{\gamma} \in \bm{\Gamma} \}$ is a class of distribution functions $F_{\bm{\gamma}}$ parameterized by $\bm{\gamma} \in   \bm{\Gamma} \subseteq \mbb{R}^r$. Then $\mc{P}_{\bm{\gamma}}$ is said to be weakly identifiable for any component of $\bm{\gamma}$ with size $k$, say ${\bm{\gamma}}_k : k \in \{1,2, \ldots, d\},$ if the mapping ${\bm{\gamma}}_k \mapsto F_{{\gamma}_k}$ is injective.
\end{definition}
Definition~\ref{def:identifiability} means that two distinct subsets of ${\bm{\gamma}}_k : k \in \{1,2, \ldots, d\}$ should identify two different probability distributions, i.e., if $ \bm{\gamma}_k \neq {\bm{\gamma}}^{*}_k $ then $F_{{\bm{\gamma}}_k} \neq F_{{\bm{\gamma}}^{*}_k}$. Non-identifiable parameters often lead to inconsistent estimation.  Our model on observed locations $\mc{S}$ is parameterized by $[\textrm{vec}(\bm{W})^\top, \textrm{vec}(\bm{B})^\top, \textrm{vec}(\bm{\Sigma})^\top, \phi]^\top \in \mbb{R}^r$ where, $r = nq + pq + q(q+1)/2 +1$. The number of parameters in the model, $r$, clearly dominates the dimension of data of size $nq$, so parameter identifiability plays an important role in our analysis. We provide a result on the non-identifiability of the variance component for Binomial responses with a logit link.
\begin{theorem}\label{logit_identifiability}
Suppose the $j$-th response type in \eqref{our_model} is Binomial with a logit link. Then, model parameters are identifiable if $\bm{\Sigma}_{jj}$ is assumed to be fixed.
\end{theorem}

\begin{proof}
The proof is along with the lines of \citet[Theorem 1]{Ekvall22}. Suppose we fix a location $\bm{s} \in \mc{D}$. For the $j$-th response, 
$$ Y_j(\bm{s}) \;|\; W_j(\bm{s}) \overset{\mathrm{ind}}{\sim} \mathrm{Bin} \left(1,\dfrac{1}{1+\exp\{-W_j(\bm{s})\}}\right).$$
Here, $W_j(\bm{s}) \sim \mc{N}(\mc{X}(\bm{s})^{\top} \bm{\beta}_j, \bm{\Sigma}_{jj})$ with $\bm{\beta}_j$ denoting the regression coefficient vector corresponding to the $j$-th type response of length $p$. The marginal expectation of $Y_j(\bm{s})$ is
$$ \textrm{E}[Y_j(\bm{s})]
=
\textrm{E}_{W_j(\bm{s})} \big[\,\textrm{E}[Y_j(\bm{s})\mid W_j(\bm{s})]\,\big]
=
\int_{\mathbb{R}} \dfrac{1}{1+\exp\{-w\}} \,\pi(w)\,dw.$$
Here, $\pi(w)$ denotes the density of $W_j(\bm{s})$ evaluated at $w$. Under the model assumptions, we may write 
$$ W_j(\bm{s}) = \mc{X}(\bm{s})^{\top}\bm{\beta}_{j}+ \sqrt{\bm{\Sigma}_{jj}} Z,\;\; \text{with}\;\; Z \sim \mc{N}(0, 1).  $$
After a change of variables, the marginal success probability can be expressed as 
$$
\textrm{E}[Y_j(\bm{s})]
=
\int_{\mathbb{R}} \dfrac{1}{1+\exp{-\mc{X}(\bm{s})^{\top}\bm{\beta}_{j} - \sqrt{\bm{\Sigma}_{jj}}z }} \varphi(z)\,dz,$$ where $\varphi(z)$ is the density of a standard normal distribution evaluated at $z$.
Fixing all among $p$ coordinates of $\mc{X}(\bm{s})$ except one, say the $k$-th coordinate, and denoting $\bm{\beta}_{jk}$ the corresponding regression coefficient, we obtain the marginal success probability as a function of $(\bm{\beta}_{jk}, \bm{\Sigma}_{jj} ) $ as
\begin{equation}\label{success_prob}
    \delta(\bm{\beta}_{jk}; \bm{\Sigma}_{jj})
:= \textrm{E}[Y_j(\bm{s})] =
\int_{\mathbb{R}} \dfrac{1}{1+\exp\left\{-\sum_{l=1 (\neq k) }^{p} x_l(\bm{s}) \bm{\beta}_{jl} - \bm{\beta}_{jk} x_k(\bm{s}) - \sqrt{\bm{\Sigma}_{jj}} z\right\}} \;\varphi(z)dz.
\end{equation}
Here, $\sum_{l=1 (\neq k)}^{p} x_l(\bm{s}) \bm{\beta}_{jl}$ denotes the contribution of fixed effects except from the $k$-th component of $\mc{X}(\bm{s})$.  Without loss of generality, let $x_k(\bm{s})>0$ for the argument below. Our assumption is valid as our model includes an intercept term, for which $x_1(\bm{s}) = 1$, at least. The inverse-logit function $u\mapsto g(u):=(1+\exp\{-u\})^{-1}$ is smooth and strictly increasing with derivative $g'(u)=g(u)\{1-g(u)\}>0$. Differentiating with respect to $\bm{\beta}_{jk}$ in \eqref{success_prob} under the integral sign (justified by dominated convergence since $0<g(\cdot)<1$) yields
\begin{equation*}
    \begin{split}
        \dfrac{\partial}{\partial \bm{\beta}_{jk}} \delta(\bm{\beta}_{jk};\bm{\Sigma}_{jj}) &= \int_{\mathbb{R}} g' \big(\sum_{l=1 (\neq k)}^{p} x_l(\bm{s}) \bm{\beta}_{jl}+\bm{\beta}_{jk} x_k(\bm{s}) + \sqrt{\bm{\Sigma}_{jj}} z \big)\, x_k \;\varphi(z)dz\\
        &= x_k \int_{\mathbb{R}} g\big(\sum_{l=1 (\neq k)}^{p} x_l(\bm{s}) \bm{\beta}_{jl}+\bm{\beta}_{jk} x_k(\bm{s}) + \sqrt{\bm{\Sigma}_{jj}} z\big) \\
        & \qquad \times \{1-g\big(\sum_{l=1 (\neq k)}^{p} x_l(\bm{s}) \bm{\beta}_{jl}+\bm{\beta}_{jk} x_k(\bm{s}) + \sqrt{\bm{\Sigma}_{jj}} z \big)\}\,\varphi(z)dz. 
    \end{split}
\end{equation*}
Because $x_k>0$ and $g'(u) > 0, \;\text{for all}\; u \in \mbb{R}$, we have $ \dfrac{\partial}{\partial \bm{\beta}_{jk}} \delta(\bm{\beta}_{jk}; \bm{\Sigma}_{jj}) >0$, so $ \delta(\cdot; \bm{\Sigma}_{jj})$ is strictly increasing and continuous in $\bm{\beta}_{jk}$ for each fixed $\bm{\Sigma}_{jj}$. We next evaluate the limits of $ \delta(\bm{\beta}_{jk}; \bm{\Sigma}_{jj})$ as $\bm{\beta}_{jk}\to\pm\infty$. 
For any fixed $\bm{z}$,
$$ \lim_{\bm{\beta}_{jk}\to -\infty} g\left(\sum_{l=1 (\neq k)}^{p} x_l(\bm{s}) \bm{\beta}_{jl}+\bm{\beta}_{jk} x_k(\bm{s}) + \sqrt{\bm{\Sigma}_{jj}} z\right) = 0, \;\text{and}$$
$$
\lim_{\bm{\beta}_{jk}\to +\infty} g\left(\sum_{l=1 (\neq k)}^{p} x_l(\bm{s}) \bm{\beta}_{jl}+\bm{\beta}_{jk} x_k(\bm{s}) + \sqrt{\bm{\Sigma}_{jj}} z \right) = 1.$$
Since $Z$ is integrable, dominated convergence yields $$ \lim_{\bm{\beta}_{jk}\to -\infty} \delta(\bm{\beta}_{jk};\bm{\Sigma}_{jj})=0,
\qquad
\lim_{\bm{\beta}_{jk}\to +\infty} \delta(\bm{\beta}_{jk}; \bm{\Sigma}_{jj})=1.$$

To prove non-identifiability of $\bm{\Sigma}_{jj}$ we now fix two particular values of $\bm{\Sigma}_{jj}$ such that $\bm{\Sigma}_{jj}^{(1)} \neq \bm{\Sigma}_{jj}^{(2)}$.  By the continuity and strict monotonicity of $\bm{\beta}_{jk}\mapsto \delta(\bm{\beta}_{jk}; \bm{\Sigma}_{jj})$, for a unique $\bm{\beta}_{jk}$, the model can admit the same marginal probability $p_0:=f(\bm{\beta}_{jk} ;\bm{\Sigma}_{jj}^{(i)})\in(0,1) \;\text{for}\; i = 1,2$ and its limits $0$ and $1$ when $\bm{\beta}_{jk} \to \pm\infty$.
Thus, $(\bm{\beta}_{jk},\bm{\Sigma}_{jj}^{(1)})$ distinct from $(\bm{\beta}_{jk},\bm{\Sigma}_{jj}^{(2)})$ yield the same marginal success probability $p_0$ at location $\bm{s}$. Hence, $\delta$ is not an injective function of $\bm{\Sigma}_{jj}$ and thus not identifiable in $\bm{\Sigma}_{jj}$. \end{proof}

Theorem \ref{logit_identifiability} directly shows that different values of $\bm{\Sigma}_{jj}$ lead to identical marginal Binomial probabilities at location $\bm{s}$. Consequently, without an additional constraint, the variance component $\bm{\Sigma}_{jj}$ is not identifiable from marginal success probabilities. The choice of $\bm{\Sigma}_{jj}$ is study-specific and we discuss our choices in Section~\ref{sec:simulation_studies}. For our proposed Algorithm~\ref{alg:mixed_model_algorithm}, at each iteration, we constrain MCMC samples $\bm{\Sigma}^{*l}_{ij} = \bm{\Sigma}^{l}_{ij}/\sqrt{\bm{\Sigma}^{l}_{jj}},\; l = 1, \ldots, L$, where $L$ is the number of MCMC samples and use $\bm{\Sigma}^{*l}_{ij} $ for prediction. We next prove the identifiability of the spatial latent process $\bm{W}$, which eventually concludes the identifiability conditions for the model parameters in the next theorem.

\begin{lemma}\label{lemma_w}
For any fixed $\bm{s}$, let $Y_j(\bm{s}) \mid W_j(\bm{s}) \overset{\mathrm{ind}}{\sim}  \mathrm{EF} \left(Y_j(\bm{s}) \mid W_j(\bm{s}), \psi_j \right),\; j = 1, \cdots, q$.  Define $b_j'(w) = {db_j(w)}/{dw}$ to be the derivative of cumulant function at $w$ and the mean-value parameter as
    \begin{equation}
        \mu_j(\bm{s}_i) = \textrm{E}[Y_j(\bm{s}_i) \mid W_j(\bm{s}_i)] = b_j'(W_j(\bm{s}_i)).
    \end{equation}Under our model assumption on strict convexity of $b_j(\cdot)$, the mapping $ W_j(\bm{s}) \mapsto \mu_j(\bm{s}) $ is one-to-one, ensuring that the canonical parameterization of \eqref{spatial_exp_fam} is uniquely identifiable.
\end{lemma}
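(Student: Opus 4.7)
The plan is to reduce the claim to the classical fact that a strictly convex differentiable function on $\mathbb{R}$ has a strictly increasing derivative, hence an injective gradient map. Fix any spatial site $\bm{s}$ and any response index $j \in \{1,\ldots,q\}$. Since the natural parameter space of a one-parameter regular exponential family is an open interval (or all of $\mathbb{R}$), $b_j$ is $C^2$ there, and the map we need to analyze is $W_j(\bm{s}) \mapsto \mu_j(\bm{s}) = b_j'(W_j(\bm{s}))$.

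First I would invoke the standard characterization of strict convexity for a differentiable function: $b_j$ is strictly convex on its domain if and only if $b_j'$ is strictly increasing. This follows from the first-order characterization $b_j(w_2) > b_j(w_1) + b_j'(w_1)(w_2-w_1)$ for $w_1 \ne w_2$: adding the symmetric inequality with roles swapped yields $(b_j'(w_2)-b_j'(w_1))(w_2-w_1) > 0$, giving strict monotonicity. Because strictly monotone functions are injective, the map $w \mapsto b_j'(w)$ is one-to-one on its domain.

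Second, I would connect this back to the statement. By definition of the exponential family density in \eqref{spatial_exp_fam}, $b_j'(W_j(\bm{s})) = \textrm{E}[Y_j(\bm{s}) \mid W_j(\bm{s})] = \mu_j(\bm{s})$, so injectivity of $b_j'$ means that $\mu_j(\bm{s})$ uniquely determines $W_j(\bm{s})$. Therefore, the canonical parameterization in \eqref{spatial_exp_fam} is uniquely identified by the mean-value parameterization.

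There is essentially no real obstacle here; the lemma is a direct consequence of the strict convexity assumption placed on $b_j$ at the end of the prior specification. The only care needed is to state precisely that differentiability plus strict convexity on an interval implies strict monotonicity (and hence injectivity) of the derivative, and to note that this is exactly the well-known bijection between natural and mean-value parameterizations in regular exponential families that also underlies identifiability of generalized linear models.
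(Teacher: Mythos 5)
Your proposal is correct and follows essentially the same route as the paper's proof: strict convexity of $b_j$ implies $b_j'$ is strictly increasing, hence injective, so the mean-value map $W_j(\bm{s}) \mapsto \mu_j(\bm{s})$ is one-to-one. The only difference is that you additionally derive the monotonicity of $b_j'$ from the first-order characterization of strict convexity, whereas the paper simply asserts this standard fact.
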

\begin{proof}
    To establish identifiability, we must show that at a fix location $\bm{s} \in \mc{D}$, $ W_j(\bm{s}) \mapsto \mu_j(\bm{s}) $ is injective, i.e., for any two values $ W_j(\bm{s}) $ and $ \widetilde{W}_j(\bm{s}) $, the equation
    $$ b_j'(W_j(\bm{s})) = b_j'(\widetilde{W}_j(\bm{s})) \implies W_j(\bm{s}) = \widetilde{W}_j(\bm{s}).$$
    Since $b_j$ is strictly convex by assumption, its derivative $ b_j' $ is strictly increasing, which guarantees that $ b_j' $ is injective. Thus, the mapping $ W_j(\bm{s}) \mapsto \mu_j(\bm{s}) $ is injective. Given that the injectivity is sufficient for identifiability, the canonical parameterization is identifiable.
\end{proof}
Lemma \ref{lemma_w} ensures that the latent process matrix of our model is identifiable. Despite identifiability of $\bm{W}$, consistent estimability cannot be guaranteed due to non-replicated spatial data; this is a major challenge that directly influences the quality of estimation of $\bm{W}$ and, consequently, the model parameters. Since the model parameters are hierarchically dependent on $\bm{W}$ from the second layer in the Figure~\ref{fig:dag}, the consequences of bad estimation of $\bm{W}$ are followed to the next layer as well. We now present a result that provides conditions for the identifiability of the model parameters $\bm{B}, \bm{\Sigma}, \phi$, which directly impact inference.
\begin{theorem}\label{B_Sigma_identifiability}
If the covariate matrix $\bm{X}$ over $n$-locations has full column rank $p < n$, the regression coefficient matrix $\bm{B}$ in our model \eqref{our_model} is identifiable. If $\nu$ is known, then $\bm{\Sigma}$ is identifiable up to a multiplicative constant and ${\bm{\Sigma}}/{\phi^{2\nu}}$ is identifiable.
\end{theorem}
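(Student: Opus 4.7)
The plan is to split the argument into two pieces---first the mean parameter $\bm{B}$, then the covariance parameters $(\bm{\Sigma},\phi)$---and to reduce everything to the identifiability of the Matrix-Normal law of $\bm{W}$. By Lemma~\ref{lemma_w}, the canonical map $W_j(\bm{s}) \mapsto b_j'(W_j(\bm{s}))$ is injective under strict convexity of each $b_j$, so the distribution of $\operatorname{vec}(\bm{W})$ is recoverable from that of $\operatorname{vec}(\bm{Y})$. From \eqref{W_likelihood}, this gives $\operatorname{vec}(\bm{W}) \sim \mc{N}_{nq}((\bm{I}_q \otimes \bm{X})\operatorname{vec}(\bm{B}),\; \bm{\Sigma}\otimes\bm{K}_\phi)$, so both the mean vector and the full $nq \times nq$ covariance matrix are pinned down by the data-generating distribution.

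For $\bm{B}$, the argument is direct. Since $\bm{X}$ has full column rank $p < n$, the Kronecker factor $\bm{I}_q\otimes\bm{X}$ has full column rank $pq$. Hence $(\bm{I}_q\otimes\bm{X})\operatorname{vec}(\bm{B}) = (\bm{I}_q\otimes\bm{X})\operatorname{vec}(\bm{B}')$ forces $\bm{B}=\bm{B}'$, and equivalently $\bm{B} = (\bm{X}^\top\bm{X})^{-1}\bm{X}^\top(\bm{X}\bm{B})$, giving the injectivity demanded by Definition~\ref{def:identifiability}.

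For $(\bm{\Sigma},\phi)$, a naive Kronecker-decomposition argument combined with the unit diagonal of $\bm{K}_\phi$ would appear to produce separate finite-sample identifiability; however, the relevant notion here is the microergodic one associated with infill Mat\'ern asymptotics. The plan is to invoke \cite{Zhang2004inconsistent}: for fixed $\nu$ and a bounded subdomain of $\mbb{R}^d$ with $d\le 3$, two univariate Mat\'ern measures with parameters $(\sigma_i^2,\phi_i)$ are equivalent if and only if $\sigma_1^2/\phi_1^{2\nu} = \sigma_2^2/\phi_2^{2\nu}$. To lift this to the separable kernel $\mc{K}_\phi\bm{\Sigma}$, I would first apply the univariate result to each marginal $W_j(\cdot)\sim\mc{GP}(\,\cdot\,,\bm{\Sigma}_{jj}\mc{K}_\phi)$, recovering $\bm{\Sigma}_{jj}/\phi^{2\nu}$ as the microergodic scalar, and then to the pairwise sums $W_j(\cdot)+W_k(\cdot)$, whose variance is $\bm{\Sigma}_{jj}+2\bm{\Sigma}_{jk}+\bm{\Sigma}_{kk}$ under the same Mat\'ern correlation, to pull out the cross terms $\bm{\Sigma}_{jk}/\phi^{2\nu}$. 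The conclusion is that the matrix $\bm{\Sigma}/\phi^{2\nu}$ is the identifiable object, and consequently $\bm{\Sigma}$ is pinned down only up to the positive scalar $\phi^{2\nu}$.

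The main obstacle will be the careful extension of the univariate Zhang equivalence to the multivariate separable case: in particular, verifying that the linear-combination trick extracts every cross-covariance entry without introducing further degrees of ambiguity beyond the shared scalar $\phi^{2\nu}$, and that no non-trivial rescaling of $\bm{\Sigma}$ can be absorbed into a compensating change of $\phi$ except through the microergodic ratio. A secondary subtlety is reconciling the strict injectivity demanded by Definition~\ref{def:identifiability} with the equivalence-of-measures notion used by \cite{Zhang2004inconsistent}; the plan is to interpret $F_{\bm{\gamma}}$ as the law on continuous-path realizations of $\mc{W}(\cdot)$ over $\mc{D}$, for which Gaussian-measure equivalence is precisely the correct failure mode of identifiability.
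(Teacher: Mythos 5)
Your argument is correct, and the first half is essentially the paper's: for $\bm{B}$ you both reduce to the fact that the Matrix-Normal law of $\bm{W}$ (recoverable from that of $\bm{Y}$ via Lemma~\ref{lemma_w}) has mean $\bm{X}\bm{B}$, and full column rank of $\bm{X}$ makes $\bm{B}\mapsto\bm{X}\bm{B}$ injective; your vectorized form via $\bm{I}_q\otimes\bm{X}$ and the explicit inverse $(\bm{X}^\top\bm{X})^{-1}\bm{X}^\top$ is if anything a cleaner rendering of the paper's likelihood-based phrasing. Where you genuinely diverge is the covariance part. The paper records the Kronecker rescaling ambiguity $(c^{-1}\bm{\Sigma})\otimes(c\bm{K}_\phi)=\bm{\Sigma}\otimes\bm{K}_\phi$ and then obtains identifiability of the microergodic matrix $\bm{\Sigma}/\phi^{2\nu}$ by citing Theorem~3 of \cite{bachoc2022asymptotically}, which treats the multivariate separable Mat\'ern case directly. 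You instead build the multivariate statement from the univariate equivalence theorem of \cite{Zhang2004inconsistent}, applying it to each marginal $W_j(\cdot)$ to get $\bm{\Sigma}_{jj}/\phi^{2\nu}$ and to the sums $W_j(\cdot)+W_k(\cdot)$, whose covariance $(\bm{\Sigma}_{jj}+2\bm{\Sigma}_{jk}+\bm{\Sigma}_{kk})\mc{K}_\phi$ is again Mat\'ern with the same correlation, to polarize out $\bm{\Sigma}_{jk}/\phi^{2\nu}$; since equivalence of the joint laws pushes forward to equivalence of these linear functionals, and positive definiteness of $\bm{\Sigma}$ keeps all the relevant variances strictly positive, the lift goes through without extra ambiguity. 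Your route is more self-contained and makes the mechanism visible, at the cost of depending on Zhang's dimension restriction $d\le 3$ and of leaving implicit the converse direction (that equal microergodic matrices yield equivalent measures, which is what justifies ``only up to a multiplicative constant'' and is precisely what the Bachoc et al.\ citation supplies in one stroke). Your closing remark on reading $F_{\bm{\gamma}}$ in Definition~\ref{def:identifiability} as the law of the path of $\mc{W}(\cdot)$, with Gaussian-measure orthogonality as the operative notion of distinguishability, is a point the paper glosses over and is worth keeping.
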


\begin{proof}
    We define $\bm{\theta} = [\textrm{vec}(\bm{B})^\top, \textrm{vec}(\bm{\Sigma})^\top, \phi]^\top$ to be the vector of finite-dimensional parameters in \eqref{our_model}. Then the likelihood of $\bm{\theta}$, conditioning on the latent random matrix $\bm{W}$, is given by 
$$ 
L(\bm{\theta} | \bm{W}) = \dfrac{1}{\sqrt{(2\pi)^{nq}|\bm{\Sigma}|^{2n} |\bm{K}_{\phi}|^{2q}}}\exp{-\dfrac{1}{2}  \operatorname{tr} \Big( \bm{\Sigma}^{-1} (\bm{W} - \bm{X}\bm{B})^{\top} 
    \bm{K}_{\phi}^{-1} (\bm{W} - \bm{X}\bm{B}) \Big)}.
$$
We denote twice the negative log-likelihood $-2 \log{L(\bm{\theta} | \bm{W})}$ as
\begin{equation}
    l(\bm{\theta} | \bm{W}) = n\log{|\bm{\Sigma}|} + q\log{|\bm{K}_{\phi}|} + \operatorname{tr} \Big( \bm{\Sigma}^{-1} (\bm{W} - \bm{X}\bm{B})^{\top} 
    \bm{K}_{\phi}^{-1} (\bm{W} - \bm{X}\bm{B}) \Big).
\end{equation}
We first show the identifiability of the regression coefficient matrix $\bm{B}$. Suppose we consider two arbitrary vectors of parameters $\bm{\theta}_1 = [\textrm{vec}(\bm{B}_1)^\top, \textrm{vec}(\bm{\Sigma}_1)^\top, \phi_1]^\top$ and $\bm{\theta}_2 = [\textrm{vec}(\bm{B}_2)^\top, \textrm{vec}(\bm{\Sigma}_2)^\top, \phi_2]^\top$. We must show that if
$\bm{\theta}_1 \neq \bm{\theta}_2$, then $l(\bm{\theta}_1 | \bm{W})$ and $l(\bm{\theta}_2 | \bm{W})$ are different on a set of $\bm{W}$ with positive Lebesgue measure. Since the Matrix-Normal distribution of $\bm{W}$ is characterized by its mean matrix $\bm{X}\bm{B}$ and Kronecker-product derived covariance matrix $\bm{\Omega}_{\phi}$ uniquely, it suffices to show that $\bm{\theta}_1 \neq \bm{\theta}_2$ implies either different means or different covariance matrices. Suppose $\bm{B}_1$ and $\bm{B}_2$ are two distinct values of $\bm{B}$ which corresponds to distinct $\bm{\theta}_1$ and $\bm{\theta}_2$. Since $\bm{X}$ has full column rank, then $\bm{X}\bm{B}_1 \neq \bm{X}\bm{B}_2$. Consequently, $l(\bm{\theta}_1 | \bm{W}) \neq l(\bm{\theta}_2 | \bm{W})$. The mapping of $\bm{B} \rightarrow l(\bm{\theta} | \bm{W})$ is thus injective, and we have already shown $\bm{W}$ is identifiable in Lemma~\ref{lemma_w}. Hence, the regression coefficient matrix $\bm{B}$ is identifiable in our model.

Now we derive the conditions of identifiability of $\bm{\Sigma}$ and $\phi$. Since the random matrix $\bm{W}$ is Matrix-Normal with covariance parameters $\bm{K}_{\phi}$ and $\bm{\Sigma}$ is equivalent to assuming that $\textrm{vec}({\bm{W}})$ is multivariate normal with covariance $\bm{\Sigma}  \otimes \bm{K}_{\phi}$ (follows from \eqref{W_likelihood}). We interpret $\bm{K}_{\phi}$  as the common covariance matrix of the columns of $\bm{W}$ and $\bm{\Sigma}$ as the common covariance matrix of the rows of $\bm{W}$.
Indeed, for any $c > 0$,
$ (c^{-1} \bm{\Sigma}) \otimes (c\bm{K}_{\phi}) = \bm{\Sigma} \otimes  \bm{K}_{\phi}.$ However, without further restrictions, this interpretation is problematic since $\bm{K}_{\phi}$ and $\bm{\Sigma}$ are
only identified up to scaling by a constant. 
 The identifiability of the microergodic parameter ${\bm{\Sigma}}/{\phi^{2\nu}}$ is a direct corollary of Theorem 3 of \cite{bachoc2022asymptotically}.
\end{proof}

As discussed by \cite{Zhang2004inconsistent}, the estimability of weakly identified parameters is not possible in multivariate geostatistics under infill asymptotics, where the number of locations increases in a restricted compact domain. However, according to \cite{Stein1999}, we obtain asymptotically equivalent predictions from a model with such inconsistently estimated parameters. Moreover, if $\bm{K}_{\phi}$ and $\bm{\Sigma}$ are unidentified, one usually has to focus on estimation and interpretation only on their Kronecker product $\bm{\Sigma} \otimes  \bm{K}_{\phi}$. For fitting purposes, an identifiability constraint such as $\norm{\bm{\Sigma}} = 1$ (unit spectral norm) or $\bm{K}_{\phi}^{(1,1)} = 1$ (unit leading entry) is often imposed in the literature. Here, the spatial correlation matrix $\bm{K}_{\phi}$ is indeed with $\bm{K}_{\phi}^{(i, i)} = 1 \;\text{for all}\; i = 1,\cdots, n$, hence we do not require any identifiability constraint unless we have a logit link for $j$-th response. The following subsection discusses a scalable approach we adopt for large spatial datasets.

\subsection{Extension to high-dimensions: Vecchia approximation}\label{sec:Vecchia_appx}
A major drawback of GP-based methods is that inference time typically grows cubically with the size of the training set due to the need to invert a dense covariance matrix, making them impractical for large spatial datasets. To enable scalable inference, we adopt the Vecchia approximation \citep{Vecchia88}, which approximates the joint distribution of the latent process $\bm{W}$ by factorizing it into conditionals that depend only on a small subset of ordered locations. Let $\mc{N}(i) = \{1,\ldots, i - 1\} $ be the set of “preceding” ordered indices for the $i$-th ordered location, which is the ``full'' conditioning set. We induce sparsity into the spatial random effect $\bm{W}$ by defining a reduced conditioning index set $ \mc{M}(i) \subset \mc{N}(i) $ for the $i$-th ordered location. Here, we restrict $\mc{M}(i)$ to be a set of size at most $m$ for the $i$-th location. The latent process thus at the $i$-th ordered location is conditionally dependent only on the preceding indices that are contained in $ \mc{M}(i)$ of size less than or equal to $m$. We denote the corresponding latent process as $\mathcal{W}_{\mathcal{M}(i)} = \{ \mathcal{W}(\bm{s}_j) : j \in \mathcal{M}(i) \}.$
The Vecchia approximation to the joint density of $\bm{W}$ with sparsity parameter $m$ is given by  
\begin{equation}\label{eq:Vecchia-cond}
    \widetilde{\pi}_m(\bm{W} \mid \bm{\theta}) 
    = \prod_{i = 1}^{n} \pi\left(\mathcal{W}(\bm{s}_i) \,\middle|\, \mathcal{W}_{\mathcal{M}(i)}, \bm{\theta}\right).
\end{equation}
As the number of nearest neighbors $m$ increases, $\widetilde{\pi}_m(\bm{W} | \bm{\theta})$ becomes a better approximation to the true joint density, and is exact when $m = n - 1$. \cite{schafer2021a} discuss a detailed theoretical analysis of the Kullback-Leibler loss of the Vecchia approximated density with sparsity parameter $m$. In practice, a small $m$ achieves a favorable balance between accuracy and computational cost. 

We use the max–min ordering of spatial locations \citep{guinness2018permutation} with nearest-neighbor conditioning, so that each $\mathcal{M}(i)$ consists of the $\min\{m, i - 1\}$ closest preceding ordered locations. For notational simplicity, we write $\bm{K}$ for the covariance matrix and omit its indexing by $\phi$. Instead, for index sets $A$ and $B$, we denote by $\bm{K}_{A, B}$ the submatrix of $\bm{K}$ with rows indexed by $A$ and columns indexed by $B$. For each $i$, define
\begin{equation}\label{eq:veccia_coeff}
    \bm{A}_i = \bm{K}_{i,\mc{M}(i)} \bm{K}_{\mc{M}(i),\mc{M}(i)}^{-1} \in \mathbb{R}^{1 \times |\mc{M}(i)|}, 
\;
r_i = \bm{K}_{i,i} - \bm{K}_{i,\mc{M}(i)} \bm{K}_{\mc{M}(i),\mc{M}(i)}^{-1} \bm{K}_{\mc{M}(i),i} \in \mathbb{R}_+.
\end{equation}
The resulting conditional distributions are $\mathcal{W}(\bm{s}_i) \mid \mathcal{W}_{\mathcal{M}(i)} 
\sim \mc{N}_{q}\big( \bm{A}_i \mathcal{W}_{\mathcal{M}(i)}, \, r_i \bm{\Sigma} \big).$
These quantities define a sparse upper-triangular matrix $\bm{U}_{\phi,(m)} = (u_{ij})$ via
\[
u_{ii} = r_i^{-1/2}, \quad 
u_{ij} = -A_{ij}\, r_i^{-1/2} \;\; \text{for } j \in \mathcal{M}(i), \quad 
u_{ij} = 0 \;\; \text{otherwise},
\]
which yields the Vecchia precision matrix $\bm{K}^{-1} = \bm{U}_{\phi,(m)}^{\top}\bm{U}_{\phi,(m)}$. Accordingly, we replace the spatial covariance in \eqref{W_likelihood} with its Vecchia approximation and write
\begin{equation}\label{Vecchia_likelihood}
    \bm{W} \sim \mc{MN}_{n,q}\big(\bm{X}\bm{B}, (\bm{U}_{\phi,(m)}^{\top}\bm{U}_{\phi,(m)})^{-1}, \bm{\Sigma}\big).
\end{equation}
Each row of $\bm{U}_{\phi,(m)}$ can be computed in $\mc{O}(m^3)$ time, leading to an overall cost of matrix inversion $\mc{O}(nm^3)$ when $m \ll \sqrt{n}$. We provide the algorithms for fast Vecchia sampling and likelihood computation in Section~\ref{sec:vecchia_matnorm} for our model.

\section{Bayesian Computation}\label{sec:bayesian_computation}

The number of unknown parameters and latent variables in our model  \eqref{our_model}, i.e., the dimension of $(\bm{W}, \bm{B}, \bm{\Sigma}, \phi)$ grows as $\mc{O}(nq^2)$ as the number of locations $n$ and the number of responses $q$ increase. Jointly updating these many model parameters and the latent spatial effect is cumbersome in high dimensions \citep{Christensen06}. We implement a sparsity-informed component-wise sampler to update the unknown parameters, thereby improving the scalability of our algorithm for large spatial datasets.
We propose a Matrix-Normal Inverse-Wishart (MNIW) blocked-Gibbs sampler to update $(\bm{\Sigma}, \bm{B})$ jointly from the posterior as follows
$$ \bm{\Sigma}\; |\; {\bm{W}}, \phi   \sim \mc{IW}_q(\widetilde{\bm{S}} = \bm{S} + {\bm{W}}^{\top} \bm{K}^{-1} {\bm{W}} + {\bm{M}}^{\top} {\bm{V}}^{-1} {\bm{M}} - \widetilde{\bm{M}}^{\top} \widetilde{\bm{V}}
^{-1} \widetilde{\bm{M}} , \widetilde{v} = v + n),$$
$$ \bm{B}\; |\; \bm{\Sigma}, \phi , {\bm{W}} \sim \mc{MN}_{p,q} (\widetilde{\bm{M}}, \widetilde{\bm{V}}, \bm{\Sigma} ),$$
where $\widetilde{\bm{V}} = (\bm{X}^{\top}\bm{K}^{-1} \bm{X} + \bm{V}^{-1})^{-1}, \;\; \widetilde{\bm{M}} = \widetilde{\bm{V}}(\bm{X}^{\top}\bm{K}^{-1} {\bm{W}} + \bm{V}^{-1} \bm{M})$. The derivations are given in Section~\ref{sec:blocked_Gibbs_MNIW}. We update $\phi$ using a truncated normal random-walk proposal centered at the current value in the domain $(0, b_\phi)$.
Updating the spatial random effects $\bm{W} \in \mathbb{R}^{n \times q}$ via random-walk Metropolis–Hastings is computationally challenging due to strong posterior correlations and the spatial cross-covariance structure. These lead to poor mixing, high autocorrelation, and low acceptance rates, which is consistent with observations in \citet{Rue2007}. Moreover, proposal tuning is complex due to heterogeneity in scale across the components of $\bm{W}$. While samplers like preconditioned Crank–Nicolson \citep{cotter2013mcmc, Rudolf2022robust} can exploit the posterior geometry of $\bm{W}_{n\times q}$ in our model, they require $nq$ tuning parameters. 
We employ elliptical slice sampling \citep{murray2010elliptical}, which is well-suited for models with latent Gaussian priors. To account for the heterogeneous variability across components, we update $\bm{W}$ sequentially through its full conditional distributions. This approach enhances the mixing efficiency of the chain as discussed in Section~\ref{sec:quality_of_samplers}. Elliptical slice sampling is tuning-free, requires no gradients, MAP estimates, or Hessian approximations, and is scalable to high-dimensional settings when used with the Vecchia accelerated sampling from the Gaussian proposal in high dimensions. In our model, given $\bm{B}, \bm{\Sigma}, \phi, \bm{Y}$, the Gaussian prior in the full-conditional posterior of $\bm{W}$ ensures compatibility with elliptical slice sampling, making it a robust and efficient choice. In Step~2 of Algorithm~\ref{alg:mixed_model_algorithm}, we update the model parameters $(\bm{W}, \bm{B}, \bm{\Sigma}, \phi)$ in a sequential ordering based on the dimension of the component. We first update $\phi$, then update $(\bm{\Sigma}, \bm{B})$ jointly, and lastly $\bm{W}$. We recommend a warm start for initializing the parameters, so that the chain quickly explores the high-probability region.  We refer to \ref{sec:ess_W} for the detailed steps of the elliptical slice sampling algorithm for updating $\bm{W}$.

\section{Predictive Modeling}\label{sec:predictive_modeling}

An important objective of spatial models is to predict responses at unobserved locations $\mc{U} = \{\bm{s}_1^{*}, \bm{s}_2^{*}, \ldots, \bm{s}_u^{*} \}$ distinct from $\mc{S}$, in the presence of covariates. Let us denote the stacked matrices for responses, latent process, and covariates over the unobserved locations $\mc{U}$, respectively, as
$$
\bm{Y}^{*} = 
\begin{bmatrix}
\mc{Y}(\bm{s}_1^{*})^\top \\
\mc{Y}(\bm{s}_2^{*})^\top \\
\vdots \\
\mc{Y}(\bm{s}_u^{*})^\top
\end{bmatrix}_{u \times q},
\;\;
\bm{W}^{*} =
\begin{bmatrix}
\mc{W}(\bm{s}_1^{*})^\top \\
\mc{W}(\bm{s}_2^{*})^\top \\
\vdots \\
\mc{W}(\bm{s}_u^{*})^\top
\end{bmatrix}_{u \times q},
\;\; \text{ and } \;\;
\bm{X}^{*} =
\begin{bmatrix}
\mc{X}(\bm{s}_1^{*})^\top \\
\mc{X}(\bm{s}_2^{*})^\top \\
\vdots \\
\mc{X}(\bm{s}_u^{*})^\top
\end{bmatrix}_{u \times p}
.$$
The goal in predictive modeling is to compute the posterior predictive distribution (PPD) of $\bm{Y}^{*}$ given the observed data matrix $\bm{Y}$. We denote the PPD as
\begin{equation}\label{eq:ppd}
\pi(\bm{y}^{*} | \bm{y}) = \int \hspace{-0.2cm} \int \pi(\bm{y}^{*} | \bm{w}^{*}) \pi(\bm{w}^{*} | \bm{w}, \bm{B}, \bm{\Sigma}, \phi) \pi(\bm{w} | \bm{B}, \bm{\Sigma}, \phi)  \pi(\bm{w}, \bm{B}, \bm{\Sigma}, \phi | \bm{y})\;  d\bm{w}^{*}\; d\bm{w}\;  d\bm{B}\; d\bm{\Sigma}\; d\phi.
\end{equation}
Under the Gaussian process model, the joint distribution of $[\bm{W}^{\top}, \bm{W}^{*\top}]^{\top}$ for locations in $\mathcal{S}\cup\mathcal{U}$ is
$$ \begin{bmatrix}\bm{W}\\\bm{W}^*\end{bmatrix}
\sim \mc{MN}_{u+n,q}\left(\,
\begin{bmatrix}\bm{X}\bm{B}\\\bm{X}^*\bm{B}\end{bmatrix},\;
\bm{K}^{(n+u)}=
\begin{bmatrix}\bm{K}^{(n,n)} & \bm{K}^{(n,u)}\\
\bm{K}^{(u,n)} & \bm{K}^{(u,u)}\end{bmatrix},\;
\bm{\Sigma}\right).$$
 For our proposed model, denoting $\bm{M}_{W*} = \bm{X}^{*} \bm{B} + \bm{K}^{(u,n)} (\bm{K}^{(n,n)})^{-1} (\bm{W} - \bm{X} \bm{B}) $ and $\bm{K}_{W*} = \bm{K}^{(u,u)} - \bm{K}^{(u,n)} (\bm{K}^{(n,n)})^{-1} \bm{K}^{(n,u)}$, the joint Gaussian distribution of $[\bm{W}^{*\top}, \bm{W}^{\top}]^{\top}$ leads to the following conditional distribution of latent process
\begin{equation}\label{krigging_eq}
\bm{W}^{*} \;|\; \bm{W}, \bm{B}, \bm{\Sigma}, \phi \sim \mc{MN}_{u, q}\left(\bm{M}_{W*}, \bm{K}_{W*} ,\bm{\Sigma}
\right).
\end{equation}
However, direct evaluation of the covariance blocks over $\mc{U} \cup \mc{S}$ becomes computationally prohibitive for large $n$ and $u$. To address this, we extend the Vecchia approximation \eqref{eq:Vecchia-cond} to the augmented set of locations and work with the corresponding precision matrix $\bm{Q}^{(u+n)} = (\bm{K}^{(u+n)})^{-1}$. Let
\begin{equation}\label{precision_parameterization}
\bm{Q}^{(u+n)} =
\begin{bmatrix}
\bm{Q}^{(u,u)} & \bm{Q}^{(u,n)}\\
\bm{Q}^{(n,u)} & \bm{Q}^{(n,n)}
\end{bmatrix}.
\end{equation}
Then \eqref{krigging_eq} can be equivalently written as
\begin{equation}\label{pred_eq}
\bm{W}^{*} \mid \bm{W}, \bm{B}, \bm{\Sigma}, \phi \sim \mc{MN}_{u, q}\left(
\bm{X}^{*} \bm{B} - (\bm{Q}^{(u,u)})^{-1} \bm{Q}^{(u,n)} (\bm{W} - \bm{X} \bm{B}),
(\bm{Q}^{(u,u)})^{-1},
\bm{\Sigma}
\right).
\end{equation}
 We provide the derivation of the posterior latent predictive process in Section~\ref{sec:latent_ppd}. Under the Vecchia approximation, the precision matrix admits the sparse factorization $\bm{Q}^{(u+n)} \approx \bm{U}_{\phi}^{(u+n)\top}\bm{U}_{\phi}^{(u+n)}$, where $\bm{U}_{\phi}^{(u+n)}$ is constructed from the Vecchia coefficients. We further leverage the Vecchia coefficients obtained from the ordered prediction locations, thus using the non-zero entries of the sparse $\bm{U}_{\phi}^{(u+n)}$, and avoid explicit manipulation of dense covariance matrices. Posterior samples from $\bm{Y}^{*} \mid \bm{Y}$ are obtained via Monte Carlo integration in \eqref{eq:ppd}.

For model comparison, we consider the expected log joint predictive density (ELJPD; \citealp{cooper2025cross}) over $\mc{U}$, defined as
$
\mathrm{ELJPD} 
= \mathbb{E}_{\bm{Y}^* \sim \pi_0}
\left[ \log \pi(\bm{Y}^* \mid \bm{Y}) \right],
$
where $\pi_0$ denotes the true data-generating distribution, which is generally not known. In practice, we approximate this quantity using $L$ posterior predictive samples $\{\bm{W}^{*(l)}\}_{l=1}^L$ and compute the Monte Carlo estimator of ELJPD by
\begin{equation*}
    \widehat{\mathrm{ELJPD}} 
= \log \left( \frac{1}{L} \sum_{l=1}^L \exp \left( \sum_{i=1}^u \sum_{j=1}^q 
\log \pi\big(Y_j(\bm{s}_i^*) \mid W_j^{*(l)}(\bm{s}_i^*)\big) \right) \right).
\end{equation*}
Higher ELJPD values indicate a better model selection rule.

The overall computational complexity presented in Algorithm~\ref{alg:mixed_model_algorithm} is broken down across three primary stages. In {Step 1}, we carry out pre-computing foundations for the Vecchia approximation on the training set $\mc{S}$ and the test set $\mc{U}$. We perform fast max-min ordering as suggested in \cite{guinness2018permutation} with leading computational costs of $\mathcal{O}(n^{*2}\log{n^{*}}),$ where $n^{*} = n + u$. Here, Step~2 integrates posterior inference and prediction within a unified MCMC framework as discussed in Section~\ref{sec:bayesian_computation} and Section~\ref{sec:predictive_modeling}. All updates are expressed through local Vecchia regression coefficients as given in \eqref{eq:veccia_coeff}, leading to a per-iteration cost of $\mathcal{O}(nm^3)$ for updating $\phi$ and $\bm{W}$, instead of the $\mathcal{O}(n^3)$ cost of full Gaussian process models.  We generate predictive samples at a set of new spatial locations $\mc{U}$, where all responses are unknown, using the same local conditioning structure via $(\bm{A}_i^*, r_i^*)$. This workflow avoids dense covariance operations and requires $\mathcal{O}(u m q)$ computations. Our algorithm achieves linear scaling in $n$ and $u$ by using a fixed number of local conditioning sets of size $m \ll \text{min}(n,u)$, ensuring practical scalability even in high-dimensional spatial domains.

\begin{algorithm}[]
\caption{Posterior inference and prediction for mixed-type response model}
\label{alg:mixed_model_algorithm}

\begin{adjustbox}{max width=\textwidth}
\begin{minipage}{\textwidth}

\begin{algorithmic}[1]
\vspace{0.5ex}
\Statex \noindent\textbf{Pre-computing steps for Vecchia approximation}
\vspace{-1.5ex}
\Statex \noindent\rule{\linewidth}{0.4pt}
\State \textbf{Require:} $\texttt{family}$ of length $q$, $\mc{S}$, $\mc{U}$ of size $n$ and $u$, $m$
\State $n^* = n + u$; obtain $\texttt{max-min}(\mathcal{S})$ and $\texttt{max-min}(\mathcal{S} \cup \mathcal{U})$ \hfill $\mathcal{O}(n^{*2} \log n^*)$
\State Construct $\texttt{NNarray}_{\mc{S} \cup \mc{U}}$ and store distances $D_{\mc{S} \cup \mc{U}}$
\vspace{-1ex}
\Statex \noindent\rule{\linewidth}{0.4pt}
\vspace{-3ex}
\Statex \noindent\textbf{MCMC workflow of estimation and prediction}
\vspace{-1.5ex}
\Statex \noindent\rule{\linewidth}{0.4pt}

\State Initialize $\{\bm{W}^{(0)}, \bm{B}^{(0)}, \bm{\Sigma}^{(0)}, \phi^{(0)}\}$
\State Compute Vecchia coefficients from $\texttt{NNarray}_{\mc{U} \cup \mc{S}}$ (Section~\ref{sec:vecchia_matnorm})

\For{$l=1,\ldots,L$}

\State Update $\phi^{(l)}$ via Metropolis--Hastings with truncated-normal proposal \hfill $\mathcal{O}(n m^3)$

\For{$i=1,\ldots,n$}
\State Recompute $(\bm{A}_i, r_i)$ given $\phi^{(l)}$ \hfill $\mathcal{O}(n m^3)$
\State Compute $(\bm{U}_{\phi^{(l)}} \bm{W})_i = r_i^{-1/2}\big(\bm{W}_i - \bm{A}_i \bm{W}_{\mc{N}(i)}\big)$ \hfill $\mathcal{O}(n m q)$
\State Compute $(\bm{U}_{\phi^{(l)}} \bm{X})_i = r_i^{-1/2}\big(\bm{X}_i - \bm{A}_i \bm{X}_{\mc{N}(i)}\big)$ \hfill $\mathcal{O}(n m p)$
\EndFor

\State $\widetilde{\bm{V}} = \big((\bm{U}_{\phi^{(l)}}\bm{X})^{\top}(\bm{U}_{\phi^{(l)}}\bm{X}) + \bm{U}_{\bm{V}}^{\top}\bm{U}_{\bm{V}}\big)^{-1}$; $\bm{U}_{\widetilde{\bm{V}}} = \texttt{chol}(\widetilde{\bm{V}})$ \hfill $\mathcal{O}(p^3)$

\State $\widetilde{\bm{M}} = \widetilde{\bm{V}}\big((\bm{U}_{\phi^{(l)}}\bm{X})^{\top}(\bm{U}_{\phi^{(l)}}\bm{W}) + \bm{U}_{\bm{V}}^{\top}\bm{U}_{\bm{V}}\bm{M}\big)$ \hfill $\mathcal{O}(n p q)$

\State $\widetilde{\bm{S}} = \bm{S} + (\bm{U}_{\phi^{(l)}}\bm{W})^{\top}(\bm{U}_{\phi^{(l)}}\bm{W}) + (\bm{U}_{V}\bm{M})^{\top}(\bm{U}_{V}\bm{M}) - (\bm{U}_{\widetilde{V}}\widetilde{\bm{M}})^{\top}(\bm{U}_{\widetilde{V}}\widetilde{\bm{M}})$ \hfill $\mathcal{O}(n q^2)$

\State $\widetilde{v} = v + n$ \hfill $\mathcal{O}(1)$

\State Sample $\bm{\Sigma}^{(l)} \sim \mathcal{IW}(\widetilde{\bm{S}},\widetilde{v})$; $\bm{U}_{\bm{\Sigma}^{(l)}} = \texttt{chol}(\bm{\Sigma}^{(l)})$ \hfill $\mathcal{O}(q^3)$

\For{$j=1,\ldots,q$}
\State Enforce constraints on $\bm{\Sigma}^{(l)}$ if $\texttt{family}[j]=\texttt{Binomial}$
\EndFor

\State Sample $\bm{Z}\sim \mathcal{MN}_{p,q}(0,\bm{I}_p,\bm{I}_q)$ \hfill $\mathcal{O}(pq)$

\State $\bm{B}^{(l)} = \widetilde{\bm{M}} + \bm{U}_{\widetilde{\bm{V}}}\bm{Z}\bm{U}_{\bm{\Sigma}^{(l)}}^{\top}$ \hfill $\mathcal{O}(pq^2+q^2)$

\State Update $\bm{W}^{(l)}$ via component-wise elliptical slice sampling (Section~\ref{sec:ess_W}) \hfill $\mathcal{O}(n m^3)$

\State \textbf{Save} $\{(\bm{W}^{(l)}, \bm{B}^{(l)}, \bm{\Sigma}^{(l)}, \phi^{(l)})\}$

\For{$i=1,\ldots,u$}
\State Extract $(\bm{A}_i^*, r_i^*)$ from $\texttt{NNarray}_{\mc{U}\cup\mc{S}}$ \hfill $\mathcal{O}(u m^2)$
\State Sample $\bm{Z}_i \sim \mc{N}_q(\bm{0},\bm{\Sigma}^{(l)})$ \hfill $\mathcal{O}(u q^2)$
\State $\bm{W}_i^* = \bm{X}_i^*\bm{B}^{(l)} + \bm{A}_i^*(\bm{W}_{\mc{N}(i)}^* - \bm{X}_{\mc{N}(i)}^*\bm{B}^{(l)}) + \sqrt{r_i^*}\bm{Z}_i$ \hfill $\mathcal{O}(u m q)$
\State Generate $Y_j(\bm{s}_i^*) \mid W_j(\bm{s}_i^*) \sim \mathrm{EF}(W_j(\bm{s}_i^*), \psi_j)$ \hfill $\mathcal{O}(u q)$
\State $\log \pi(\bm{Y}^* \mid \bm{W}^{*(l)}) = \sum_{i=1}^u \sum_{j=1}^q \log \pi(Y_j(\bm{s}_i^*) \mid W_j^{*(l)}(\bm{s}_i^*))$ \hfill $\mathcal{O}(u q)$
\EndFor

\EndFor

\State $\widehat{\mathrm{ELJPD}} = \log\left(\frac{1}{L} \sum_{l=1}^L \exp\big(\log \pi(\bm{Y}^* \mid \bm{W}^{*(l)})\big)\right)$ \hfill $\mathcal{O}(L)$

\State \textbf{Save} $\widehat{\mathrm{ELJPD}}$
\vspace{1ex}
\end{algorithmic}

\end{minipage}
\end{adjustbox}

\end{algorithm}

\section{Data analysis}\label{sec:data_analysis}
We evaluate the statistical performance of our proposed joint Bayesian model for mixed-type spatial responses by comparing it with a commonly used alternative of separate modeling. By separate modeling, we refer to the approach detailed in Section~\ref{sec:sep_model}, in which each response is modeled independently, even when the true data-generating process exhibits cross-response dependence (i.e., $\bm{\Sigma} _ {ij} \neq 0$ for $i \neq j$). Under a common spatial covariance structure with shared $\phi$ across components, a separate model inherently assumes $\bm{\Sigma}_{ij} = 0$ and fits independent univariate latent GP with a flat prior on $\bm{\Sigma}_{jj} \overset{\text{ind}}{\sim} \mc{IG}(a = 0.01, b = 0.01)$, for each $j$-th response type. Each component is thus endowed with its own mean structure, thereby ignoring potential dependence between different response types.

We use weak prior information for the regression coefficient matrix $\bm{B}$ by specifying the mean matrix of $\bm{B}$ as $\bm{M} = \bm{0}$ and the covariance matrix $\bm{V} = 100\bm{I}_p$. 
For the cross-response covariance $\bm{\Sigma}$, we choose $\bm{S} = \bm{I}_q$ and $v = q + 1$, providing a weakly-informative prior. The full conditional posterior of $\bm{\Sigma}$ admits a valid density, and thus the posterior is proper. We choose the smoothness parameter $\nu = 0.5$ for the Mat\'ern kernel in \eqref{eq:matern_kernel} based on the empirical semivariogram from exploratory data analysis. We use $m = 20$ as the size of the conditioning set in the Vecchia approximation. We refer the reader to Section~\ref{sec:choice_of_m} for a detailed justification of our choice. The hyperparameter $b_{\phi}$ for the spatial range parameter, $\phi$, is chosen so that the effective range corresponds to a correlation of $0.05$ at the domain diameter $\Delta$. The \texttt{R} codes to reproduce the simulation studies and real data analysis are available on GitHub\footnote{Available at \url{https://github.com/ArghyaStat/Bayesian_mixed_type_spatial_model}.}.

\subsection{Simulation studies}\label{sec:simulation_studies}
 Our analysis is based on studying the quality of estimation and the model performance in terms of the strength of prediction. We consider three bivariate mixed-type case studies on ``Binomial-Gaussian'', ``Binomial-Poisson'', and ``Gaussian-Poisson'' (in Section~\ref{sec:gp_simulations}) response models along with a trivariate model involving ``Binomial--Gaussian--Poisson'' responses. For each case, we generate gridded spatial locations over the unit square $\mc{D} = [0,1]^2$, under both low-dimensional ($n = 100$) and high-dimensional ($n = 2500$) regimes.  We leave out $20\%$ of the randomly chosen locations for predictive analysis. For each of the following cases, we segment our analysis further on weak and strong spatial correlation $\phi_0 \in \{0.1, 0.3\}$, and both independent and dependent structures for the cross-covariance matrix, $\bm{\Sigma}^{(0)}$. The specification of $\bm{\Sigma}^{(0)}$ is tailored to each response configuration, and we provide general guidelines for its construction. At any location $\bm{s}$, we consider the linear covariates as $\mc{X}(\bm{s}) = [1, \textrm{lon}(\bm{s}), \textrm{lat}(\bm{s})]^{\top}$, which is a standard choice in spatial data analysis. We specify the true regression coefficient matrix as $\bm{B}^{(0)} = [(1.0, -0.5)^{\top},  (3,  1.5)^{\top}, (-1.2,  0)^{\top}]$ for all bivariate spatial response models, and $\bm{B}^{(0)} = [(1.0, -0.5, 0.8)^{\top},  (3,  1.5, -2.0)^{\top}, (-1.2,  0, 0.7)^{\top}]$ for ``Binomial-Gaussian-Poisson'' response model. For each model, we implement Algorithm~\ref{alg:mixed_model_algorithm} and evaluate both model fit and predictive performance over $50$ independently generated datasets. The results are summarized through the following metrics:
\begin{enumerate}
    \item Tables detailing the estimation summaries for the cross-covariance parameters in $\bm{\Sigma}$,
    \item Comparative boxplots of the marginal posterior variance of regression coefficients, contrasting the proposed joint model with separate models,
    \item Tables showing the differences in estimated expected log predictive density ($\widehat{\mathrm{ELJPD}}$) between the joint and separate models.
\end{enumerate}

In our simulation studies, we specify $\bm{\Sigma}^{(0)}$ to induce strong cross-response dependence across different model settings in accordance with \citet[Examples 1 and 3]{Ekvall22}. We choose $\mathrm{vec}(\bm{\Sigma}^{(0)}) = [9, 5, 5, 3]^\top$ for ``Binomial-Gaussian'' model, $\mathrm{vec}(\bm{\Sigma}^{(0)}) = [9, 4, 4, 2]^\top$ for ``Binomial-Poisson'' model, $\mathrm{vec}(\bm{\Sigma}^{(0)}) = [3, 9/4, 9/4, 2]^\top$ for ``Gaussian-Poisson'' model, and $\mathrm{vec}(\bm{\Sigma}^{(0)}) = [9, 5, 4, 5, 3, 9/4, 4, 9/4, 2]^\top$ for ``Binomial–Gaussian–Poisson'' model. 


We summarize the performance of the proposed model through posterior inference on the off-diagonal elements of $\bm{\Sigma}$, which characterize cross-dependence among response types. For bivariate models with ``Binomial-Gaussian'' and ``Binomial-Poisson'' response types, this reduces to a single parameter $\bm{\Sigma}_{12}$. In contrast, the trivariate ``Binomial-Gaussian-Poisson'' model involves $\bm{\Sigma}_{12}$, $\bm{\Sigma}_{13}$, and $\bm{\Sigma}_{23}$. Results in Tables~\ref{tab:crosscov_combined}, \ref{tab:crosscov_bgp} show that, when cross-dependence is present, i.e., when $\bm{\Sigma}_{ij} \neq 0,\; i \neq j,$  the posterior credible intervals for the off-diagonal components exclude zero and the corresponding posterior means, averaged over 50 replicated datasets, are consistently bounded away from zero with satisfactory coverage. This finding indicates that our proposed framework effectively recovers joint dependence across response types and highlights the benefits of joint modeling in such settings.
When the true cross-covariance structure is diagonal, the model adapts accordingly. Posterior means remain close to zero, and the credible intervals exhibit good coverage of the true value. This finding demonstrates that the proposed approach remains well-calibrated even in the absence of cross-dependence, providing a flexible alternative that performs comparably to separate modeling strategies in such scenarios.
\begin{table}[!h]
\centering
\caption{Posterior estimation summary of $\bm{\Sigma}_{12}$ (posterior mean, 95\% posterior credible interval, and empirical coverage at the nominal 95\% level) across 50 replicated datasets under varying spatial correlation ($\phi_0$, the true value of $\phi$) and varying cross-covariance matrix ($\bm{\Sigma}^{(0)}$, the true value of $\bm{\Sigma}$) for the Binomial-Gaussian and Binomial-Poisson models. For the Binomial-Gaussian model, we choose the diagonal entries of $\bm{\Sigma}^{(0)}$ to be $\bm{\Sigma}^{(0)}_{11} = 9$ and $\bm{\Sigma}^{(0)}_{22} = 3$. For the Binomial-Poisson model, we choose $\bm{\Sigma}^{(0)}_{11} = 9$ and $\bm{\Sigma}^{(0)}_{22} = 2$.}
\label{tab:crosscov_combined}
\renewcommand{\arraystretch}{1}
\small
\begin{tabular}{c c c c c}
\hline
\noalign{\vskip 1pt}
$\bm{\Sigma}_{12}^{(0)}$ & $\phi_0$ & Posterior mean \text{\scriptsize(SE)} & Credible interval \text{\scriptsize(SE)} & Coverage \text{\scriptsize(SE)} \\[2pt]
\hline
\noalign{\vskip 2pt}
\multicolumn{5}{c}{Binomial--Gaussian} \\[2pt]

\rowcolor{lightgray}
\multicolumn{5}{c}{Sample size: $n = 100$} \\
\multirow{2}{*}{$5$}
& $0.3$ & $3.687\!\,\text{\scriptsize(0.130)}$ & $[2.217\!\,\text{\scriptsize(0.131)},\; 5.397\!\,\text{\scriptsize(0.145)}]$ & $0.68\!\,\text{\scriptsize(0.071)}$ \\
& $0.1$ & $4.608\!\,\text{\scriptsize(0.105)}$ & $[3.206\!\,\text{\scriptsize(0.103)},\; 6.250\!\,\text{\scriptsize(0.122)}]$ & $0.90\!\,\text{\scriptsize(0.043)}$ \\

\rowcolor{lightviolet}
\multicolumn{5}{c}{Sample size: $n = 2500$} \\
\multirow{2}{*}{$5$}
& $0.3$ & $4.052\!\,\text{\scriptsize(0.061)}$ & $[3.538\!\,\text{\scriptsize(0.064)},\; 4.871\!\,\text{\scriptsize(0.038)}]$ & $0.64\!\,\text{\scriptsize(0.061)}$ \\
& $0.1$ & $4.628\!\,\text{\scriptsize(0.049)}$ & $[4.200\!\,\text{\scriptsize(0.045)},\; 5.124\!\,\text{\scriptsize(0.049)}]$ & $0.82\!\,\text{\scriptsize(0.070)}$ \\

\rowcolor{lightgray}
\multicolumn{5}{c}{Sample size: $n = 100$} \\
\multirow{2}{*}{$0$}
& $0.3$ & $0.008\!\,\text{\scriptsize(0.233)}$ & $[-2.418\!\,\text{\scriptsize(0.218)},\; 2.429\!\,\text{\scriptsize(0.236)}]$ & $0.84\!\,\text{\scriptsize(0.052)}$ \\
& $0.1$ & $-0.304\!\,\text{\scriptsize(0.228)}$ & $[-2.849\!\,\text{\scriptsize(0.225)},\; 2.291\!\,\text{\scriptsize(0.234)}]$ & $0.86\!\,\text{\scriptsize(0.050)}$ \\

\rowcolor{lightviolet}
\multicolumn{5}{c}{Sample size: $n = 2500$} \\
\multirow{2}{*}{$0$}
& $0.3$ & $0.023\!\,\text{\scriptsize(0.076)}$ & $[-0.691\!\,\text{\scriptsize(0.075)},\; 0.694\!\,\text{\scriptsize(0.073)}]$ & $0.82\!\,\text{\scriptsize(0.055)}$ \\
& $0.1$ & $-0.037\!\,\text{\scriptsize(0.062)}$ & $[-0.663\!\,\text{\scriptsize(0.062)},\; 0.592\!\,\text{\scriptsize(0.067)}]$ & $0.90\!\,\text{\scriptsize(0.043)}$ \\

\hline
\noalign{\vskip 2pt}
\multicolumn{5}{c}{Binomial--Poisson} \\[2pt]

\rowcolor{lightgray}
\multicolumn{5}{c}{Sample size: $n = 100$} \\
\multirow{2}{*}{$4$}
& $0.3$ & $3.002\!\,\text{\scriptsize(0.115)}$ & $[1.574\!\,\text{\scriptsize(0.136)},\; 4.564\!\,\text{\scriptsize(0.126)}]$ & $0.64\!\,\text{\scriptsize(0.069)}$ \\
& $0.1$ & $3.881\!\,\text{\scriptsize(0.135)}$ & $[2.560\!\,\text{\scriptsize(0.107)},\; 5.517\!\,\text{\scriptsize(0.196)}]$ & $0.92\!\,\text{\scriptsize(0.039)}$ \\

\rowcolor{lightviolet}
\multicolumn{5}{c}{Sample size: $n = 2500$} \\
\multirow{2}{*}{$4$}
& $0.3$ & $3.523\!\,\text{\scriptsize(0.046)}$ & $[3.092\!\,\text{\scriptsize(0.055)},\; 4.090\!\,\text{\scriptsize(0.031)}]$ & $0.66\!\,\text{\scriptsize(0.068)}$ \\
& $0.1$ & $3.795\!\,\text{\scriptsize(0.044)}$ & $[3.419\!\,\text{\scriptsize(0.045)},\; 4.222\!\,\text{\scriptsize(0.042)}]$ & $0.72\!\,\text{\scriptsize(0.064)}$ \\

\rowcolor{lightgray}
\multicolumn{5}{c}{Sample size: $n = 100$} \\
\multirow{2}{*}{$0$}
& $0.3$ & $-0.207\!\,\text{\scriptsize(0.176)}$ & $[-2.255\!\,\text{\scriptsize(0.172)},\; 1.836\!\,\text{\scriptsize(0.185)}]$ & $0.88\!\,\text{\scriptsize(0.046)}$ \\
& $0.1$ & $-0.009\!\,\text{\scriptsize(0.194)}$ & $[-2.237\!\,\text{\scriptsize(0.198)},\; 2.118\!\,\text{\scriptsize(0.201)}]$ & $0.86\!\,\text{\scriptsize(0.050)}$ \\

\rowcolor{lightviolet}
\multicolumn{5}{c}{Sample size: $n = 2500$} \\
\multirow{2}{*}{$0$}
& $0.3$ & $0.008\!\,\text{\scriptsize(0.049)}$ & $[-0.500\!\,\text{\scriptsize(0.050)},\; 0.505\!\,\text{\scriptsize(0.052)}]$ & $0.92\!\,\text{\scriptsize(0.039)}$ \\
& $0.1$ & $0.016\!\,\text{\scriptsize(0.047)}$ & $[-0.461\!\,\text{\scriptsize(0.052)},\; 0.497\!\,\text{\scriptsize(0.046)}]$ & $0.86\!\,\text{\scriptsize(0.050)}$ \\

\hline
\end{tabular}
\end{table}

\clearpage
\thispagestyle{empty}
\begin{table}[!h]
\centering
\caption{Posterior estimation summary of cross-covariances (posterior mean, 95\% posterior credible interval, and empirical coverage at the nominal 95\% level) across 50 replicated datasets for Binomial-Gaussian-Poisson response model. For the Binomial-Gaussian-Poisson model, we choose the diagonal entries of $\bm{\Sigma}^{(0)}$ to be $\bm{\Sigma}^{(0)}_{11} = 9$,  $\bm{\Sigma}^{(0)}_{22} = 3$, and $\bm{\Sigma}^{(0)}_{33} = 2$.}
\label{tab:crosscov_bgp}
\renewcommand{\arraystretch}{1}
\small
\begin{tabular}{c c c c c}
\hline
 & $\phi_0$ & Posterior mean \text{\scriptsize(SE)} & Credible interval \text{\scriptsize(SE)} & Coverage \text{\scriptsize(SE)} \\[2pt]
\hline
\noalign{\vskip 2pt}
$\bm{\Sigma}_{12}^{(0)}$ & \multicolumn{4}{c}{\textbf{$\bm{\Sigma}_{12}$}} \\[2pt]

\rowcolor{lightgray}
\multicolumn{5}{c}{Sample size: $n = 100$} \\
\multirow{2}{*}{$5$}
& $0.3$ & $3.905\!\,\text{\scriptsize(0.120)}$ & $[2.515\!\,\text{\scriptsize(0.122)},\; 5.546\!\,\text{\scriptsize(0.136)}]$ & $0.70\!\,\text{\scriptsize(0.065)}$ \\
& $0.1$ & $4.854\!\,\text{\scriptsize(0.098)}$ & $[3.542\!\,\text{\scriptsize(0.087)},\; 6.411\!\,\text{\scriptsize(0.116)}]$ & $0.92\!\,\text{\scriptsize(0.039)}$ \\

\rowcolor{lightviolet}
\multicolumn{5}{c}{Sample size: $n = 2500$} \\
\multirow{2}{*}{$5$}
& $0.3$ & $3.974\!\,\text{\scriptsize(0.052)}$ & $[3.462\!\,\text{\scriptsize(0.052)},\; 4.895\!\,\text{\scriptsize(0.041)}]$ & $0.64\!\,\text{\scriptsize(0.050)}$ \\
& $0.1$ & $4.669\!\,\text{\scriptsize(0.043)}$ & $[4.242\!\,\text{\scriptsize(0.038)},\; 5.151\!\,\text{\scriptsize(0.048)}]$ & $0.70\!\,\text{\scriptsize(0.065)}$ \\[4pt]

\rowcolor{lightgray}
\multicolumn{5}{c}{Sample size: $n = 100$} \\
\multirow{2}{*}{$0$}
& $0.3$ & $0.003\!\,\text{\scriptsize(0.209)}$ & $[-2.241\!\,\text{\scriptsize(0.194)},\; 2.262\!\,\text{\scriptsize(0.205)}]$ & $0.92\!\,\text{\scriptsize(0.039)}$ \\
& $0.1$ & $-0.140\!\,\text{\scriptsize(0.219)}$ & $[-2.448\!\,\text{\scriptsize(0.202)},\; 2.142\!\,\text{\scriptsize(0.225)}]$ & $0.86\!\,\text{\scriptsize(0.050)}$ \\

\rowcolor{lightviolet}
\multicolumn{5}{c}{Sample size: $n = 2500$} \\
\multirow{2}{*}{$0$}
& $0.3$ & $-0.007\!\,\text{\scriptsize(0.061)}$ & $[-0.630\!\,\text{\scriptsize(0.061)},\; 0.629\!\,\text{\scriptsize(0.068)}]$ & $0.94\!\,\text{\scriptsize(0.034)}$ \\
& $0.1$ & $0.040\!\,\text{\scriptsize(0.049)}$ & $[-0.534\!\,\text{\scriptsize(0.046)},\; 0.632\!\,\text{\scriptsize(0.062)}]$ & $0.94\!\,\text{\scriptsize(0.034)}$ \\[2pt]
\hline
\noalign{\vskip 2pt}
$\bm{\Sigma}_{13}^{(0)}$ & \multicolumn{4}{c}{\textbf{$\bm{\Sigma}_{13}$}} \\[2pt]

\rowcolor{lightgray}
\multicolumn{5}{c}{Sample size: $n = 100$} \\
\multirow{2}{*}{$4$}
& $0.3$ & $3.198\!\,\text{\scriptsize(0.094)}$ & $[2.008\!\,\text{\scriptsize(0.098)},\; 4.670\!\,\text{\scriptsize(0.111)}]$ & $0.76\!\,\text{\scriptsize(0.061)}$ \\
& $0.1$ & $4.069\!\,\text{\scriptsize(0.121)}$ & $[2.850\!\,\text{\scriptsize(0.086)},\; 5.659\!\,\text{\scriptsize(0.227)}]$ & $0.94\!\,\text{\scriptsize(0.034)}$ \\

\rowcolor{lightviolet}
\multicolumn{5}{c}{Sample size: $n = 2500$} \\
\multirow{2}{*}{$4$}
& $0.3$ & $3.221\!\,\text{\scriptsize(0.045)}$ & $[2.807\!\,\text{\scriptsize(0.045)},\; 3.803\!\,\text{\scriptsize(0.036)}]$ & $0.66\!\,\text{\scriptsize(0.052)}$ \\
& $0.1$ & $3.798\!\,\text{\scriptsize(0.034)}$ & $[3.457\!\,\text{\scriptsize(0.032)},\; 4.195\!\,\text{\scriptsize(0.038)}]$ & $0.78\!\,\text{\scriptsize(0.059)}$ \\[4pt]

\rowcolor{lightgray}
\multicolumn{5}{c}{Sample size: $n = 100$} \\
\multirow{2}{*}{$0$}
& $0.3$ & $0.131\!\,\text{\scriptsize(0.164)}$ & $[-1.821\!\,\text{\scriptsize(0.183)},\; 2.055\!\,\text{\scriptsize(0.146)}]$ & $0.88\!\,\text{\scriptsize(0.046)}$ \\
& $0.1$ & $0.006\!\,\text{\scriptsize(0.166)}$ & $[-1.943\!\,\text{\scriptsize(0.163)},\; 1.953\!\,\text{\scriptsize(0.174)}]$ & $0.88\!\,\text{\scriptsize(0.046)}$ \\

\rowcolor{lightviolet}
\multicolumn{5}{c}{Sample size: $n = 2500$} \\
\multirow{2}{*}{$0$}
& $0.3$ & $0.066\!\,\text{\scriptsize(0.061)}$ & $[-0.471\!\,\text{\scriptsize(0.061)},\; 0.616\!\,\text{\scriptsize(0.066)}]$ & $0.90\!\,\text{\scriptsize(0.043)}$ \\
& $0.1$ & $-0.057\!\,\text{\scriptsize(0.042)}$ & $[-0.515\!\,\text{\scriptsize(0.049)},\; 0.402\!\,\text{\scriptsize(0.042)}]$ & $0.80\!\,\text{\scriptsize(0.057)}$ \\[2pt]
\hline
\noalign{\vskip 2pt}
$\bm{\Sigma}_{23}^{(0)}$ & \multicolumn{4}{c}{\textbf{$\bm{\Sigma}_{23}$}} \\[2pt]

\rowcolor{lightgray}
\multicolumn{5}{c}{Sample size: $n = 100$} \\
\multirow{2}{*}{$2.25$}
& $0.3$ & $1.536\!\,\text{\scriptsize(0.080)}$ & $[0.787\!\,\text{\scriptsize(0.052)},\; 2.761\!\,\text{\scriptsize(0.126)}]$ & $0.68\!\,\text{\scriptsize(0.067)}$ \\
& $0.1$ & $2.328\!\,\text{\scriptsize(0.092)}$ & $[1.368\!\,\text{\scriptsize(0.057)},\; 3.808\!\,\text{\scriptsize(0.161)}]$ & $0.92\!\,\text{\scriptsize(0.039)}$ \\

\rowcolor{lightviolet}
\multicolumn{5}{c}{Sample size: $n = 2500$} \\
\multirow{2}{*}{$2.25$}
& $0.3$ & $1.515\!\,\text{\scriptsize(0.036)}$ & $[1.179\!\,\text{\scriptsize(0.031)},\; 2.031\!\,\text{\scriptsize(0.036)}]$ & $0.64\!\,\text{\scriptsize(0.050)}$ \\
& $0.1$ & $2.021\!\,\text{\scriptsize(0.035)}$ & $[1.689\!\,\text{\scriptsize(0.029)},\; 2.421\!\,\text{\scriptsize(0.043)}]$ & $0.76\!\,\text{\scriptsize(0.061)}$ \\[4pt]

\rowcolor{lightgray}
\multicolumn{5}{c}{Sample size: $n = 100$} \\
\multirow{2}{*}{$0$}
& $0.3$ & $0.033\!\,\text{\scriptsize(0.062)}$ & $[-0.649\!\,\text{\scriptsize(0.078)},\; 0.751\!\,\text{\scriptsize(0.078)}]$ & $0.82\!\,\text{\scriptsize(0.055)}$ \\
& $0.1$ & $0.086\!\,\text{\scriptsize(0.060)}$ & $[-0.669\!\,\text{\scriptsize(0.073)},\; 0.858\!\,\text{\scriptsize(0.070)}]$ & $0.86\!\,\text{\scriptsize(0.050)}$ \\

\rowcolor{lightviolet}
\multicolumn{5}{c}{Sample size: $n = 2500$} \\
\multirow{2}{*}{$0$}
& $0.3$ & $0.004\!\,\text{\scriptsize(0.020)}$ & $[-0.219\!\,\text{\scriptsize(0.022)},\; 0.220\!\,\text{\scriptsize(0.022)}]$ & $0.92\!\,\text{\scriptsize(0.039)}$ \\
& $0.1$ & $-0.004\!\,\text{\scriptsize(0.018)}$ & $[-0.211\!\,\text{\scriptsize(0.019)},\; 0.192\!\,\text{\scriptsize(0.021)}]$ & $0.94\!\,\text{\scriptsize(0.034)}$ \\

\hline
\end{tabular}
\end{table}
\clearpage
The advantages of joint modeling are most evident when cross-dependence is present. In this setting, we observe clear reductions in posterior variability of the regression coefficients $\bm{B}$ across all response types, as illustrated by the boxplots in the left panels of Figure~\ref{fig:beta_var_all}, along with improved predictive performance measured by ELJPD (top rows for each model, Table~\ref{tab:elpd_diff_all}). In contrast, when responses are effectively independent, joint and separate models exhibit comparable estimation (right panel, Figure~\ref{fig:beta_var_all}) and predictive performance (bottom rows for each model, Table~\ref{tab:elpd_diff_all}), indicating robustness of the proposed framework to the underlying dependence structure. We refer the reader to Section~\ref{sec:gp_simulations} for a detailed discussion on results obtained for the ``Gaussian-Poisson'' model and Section~\ref{sec:phi_summaries} for an insightful analysis of the Mat\'{e}rn range parameter, $\phi$, corresponding to all the models in the main article. 
\begin{figure}[]
    \centering
    \includegraphics[width=0.9\linewidth]{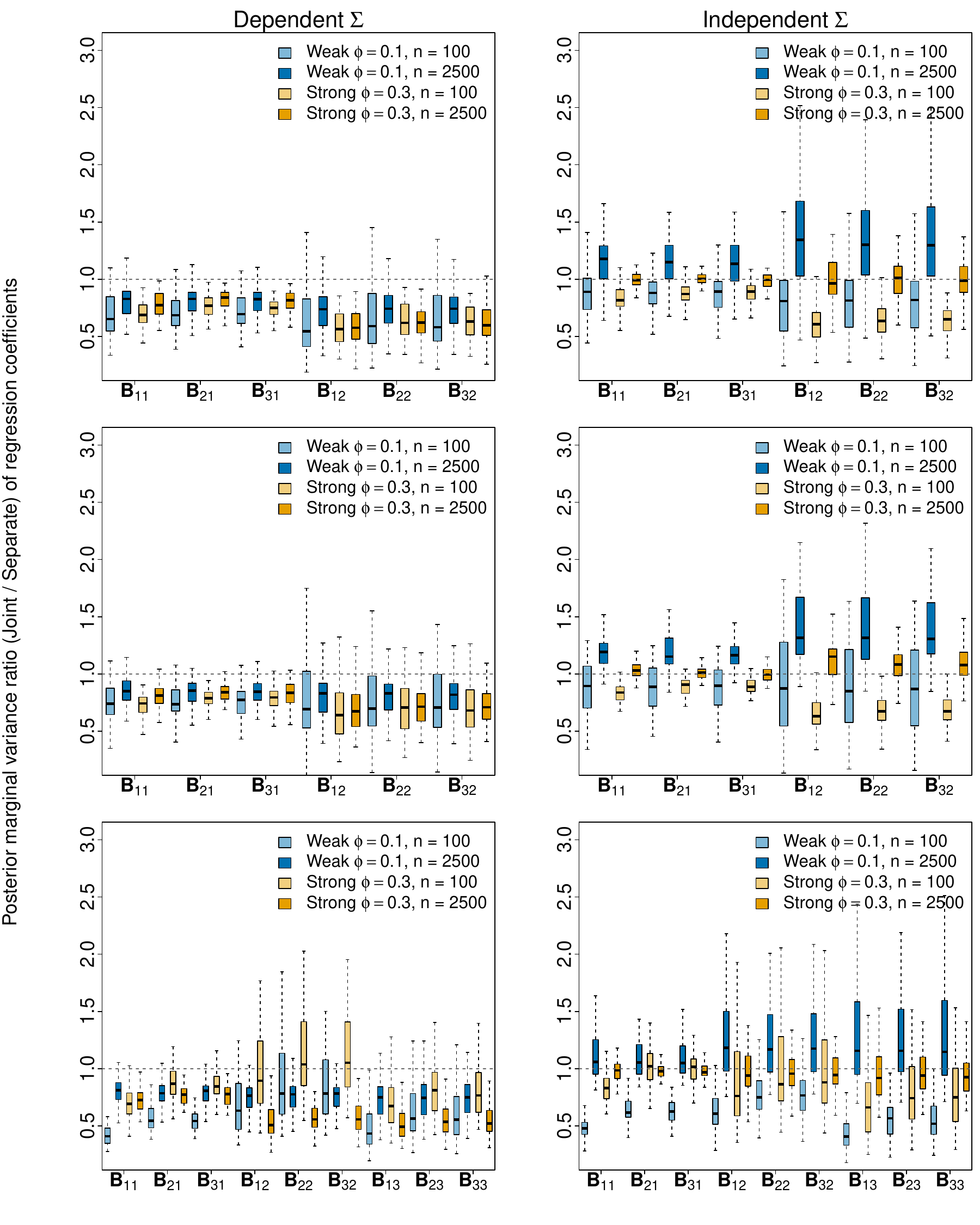}
    \caption{Efficacy in variance reduction in regression coefficient matrix $\bm{B}$ components in the joint model in comparison to the separate model for the Binomial-Gaussian model (top), Binomial-Poisson (middle), and Binomial-Gaussian-Poisson (bottom) case study on 50 replicated datasets.}
    \label{fig:beta_var_all}
\end{figure}
\begin{table}[!h]
\centering
\caption{ELJPD differences of Joint and Separate models across 50 replications for Binomial-Gaussian (top block), Binomial-Poisson (middle block), and Binomial-Gaussian-Poisson (bottom block) response types. Here, $\textrm{vec}(\bm{\Sigma}^{(0)})$ denotes the vectorized form of the true cross-covariance matrix $\bm{\Sigma}^{(0)}$.}
\label{tab:elpd_diff_all}
\renewcommand{\arraystretch}{1.1}
\small
\begin{tabular}{c c c c c}
\hline
\text{Model} & $\textrm{vec}(\bm{\Sigma}^{(0)})$ & $\phi_0$ & \cellcolor{lightgray}{$n = 100$} & \cellcolor{lightviolet}{$n = 2500$} \\ [2pt]
\hline

\multirow{4}{*}{\shortstack{Binomial\\Gaussian}}
& \multirow{2}{*}{$[9,5,5,3]^{\top}$} 
& $0.3$ & $\hphantom{-}1.485\!\,\text{\scriptsize(0.304)}$ & $\hphantom{-}2.364\!\,\text{\scriptsize(0.962)}$ \\
&  & $0.1$ & $\hphantom{-}3.263\!\,\text{\scriptsize(0.519)}$ & $\hphantom{-}9.103\!\,\text{\scriptsize(2.185)}$ \\[2pt]
\cline{2-5}

& \multirow{2}{*}{$[9,0,0,3]^{\top}$} 
& $0.3$ & $-0.577\!\,\text{\scriptsize(0.213)}$ & $-0.350\!\,\text{\scriptsize(0.948)}$ \\
&  & $0.1$ & $\hphantom{-}0.149\!\,\text{\scriptsize(0.430)}$ & $-2.732\!\,\text{\scriptsize(2.379)}$ \\

\hline

\multirow{4}{*}{\shortstack{Binomial\\Poisson}}
& \multirow{2}{*}{$[9,4,4,2]^{\top}$} 
& $0.3$ & $\hphantom{-}1.587\!\,\text{\scriptsize(0.362)}$ & $\hphantom{-}4.135\!\,\text{\scriptsize(3.367)}$ \\
&  & $0.1$ & $\hphantom{-}3.219\!\,\text{\scriptsize(0.673)}$ & $\hphantom{-}14.366\!\,\text{\scriptsize(3.663)}$ \\
\cline{2-5}

& \multirow{2}{*}{$[9,0,0,2]^{\top}$} 
& $0.3$ & $-0.558\!\,\text{\scriptsize(0.366)}$ & $-3.458\!\,\text{\scriptsize(2.172)}$ \\
&  & $0.1$ & $-1.480\!\,\text{\scriptsize(0.765)}$ & $\hphantom{-}8.680\!\,\text{\scriptsize(4.041)}$ \\

\hline

\multirow{4}{*}{\shortstack{Binomial\\Gaussian\\Poisson}}
& \multirow{2}{*}{$[9, 5, 4, 5, 3, \tfrac{9}{4}, 4, \tfrac{9}{4}, 2]^{\top}$} 
& $0.3$ & $\hphantom{-}1.742\!\,\text{\scriptsize(0.748)}$ & $\hphantom{-}13.054\!\,\text{\scriptsize(2.921)}$ \\
&  & $0.1$ & $\hphantom{-}2.519\!\,\text{\scriptsize(1.275)}$ & $\hphantom{-}32.948\!\,\text{\scriptsize(5.226)}$ \\[2pt]
\cline{2-5}

& \multirow{2}{*}{$[9, 0, 0, 0, 3, 0, 0, 0, 2]^{\top}$} 
& $0.3$ & $-5.548\!\,\text{\scriptsize(0.723)}$ & $\hphantom{-}3.704\!\,\text{\scriptsize(2.078)}$ \\
&  & $0.1$ & $-11.022\!\,\text{\scriptsize(1.334)}$ & $\hphantom{-}5.601\!\,\text{\scriptsize(5.367)}$ \\

\hline
\end{tabular}
\end{table}

\subsection{Real data analysis}\label{sec:real_data_analysis}

Wildfires present significant risks to ecosystems, human life, and infrastructure, with far-reaching social and economic consequences. They are also a substantial source of carbon dioxide, contributing to the intensification of the global greenhouse effect. In the United States, the area affected by wildfires has increased nearly fourfold over the past 40 years \citep{iglesias2022us}, resulting in a sharp rise in federal spending on fire control efforts. These trends underscore the need for flexible statistical methods to accurately predict extreme wildfire events across space, an essential component of fire management that informs resource allocation, risk mitigation, and recovery planning.

 We analyze gridded wildfire data comprising aggregated monthly counts of wildfire occurrences ($\mathrm{CNT}$), and corresponding burnt areas ($\mathrm{BA}$), within each cell of a regular grid spanning the mainland United States. The dataset, provided for the Extreme Value Analysis (EVA) 2021 data challenge \citep{Optitz23}, includes monthly observations from 3,503 grid cells over the mainland United States at a spatial resolution of $0.5^\circ \times 0.5^\circ$. It covers a 23-year period (1993–2015) and records data for seven months each year (March through September). Despite the discrete nature of wildfire counts (CNTs), much of the existing literature has relied on transformation-based methods to facilitate joint modeling using a bivariate GP. For instance, \cite{cisneros2023combined} modeled $\log(1 + \textrm{CNT}(\bm{s}))$ combining random forest with bivariate GP. While several alternative approaches, such as multi-stage models, have been developed to accommodate zero inflation \citep{zhang2023joint}, we do not aim to compare our model against those. Nonetheless, we acknowledge that more tailored models may improve both fitting and predictive performance on this dataset. Our primary inferential goal is to evaluate the performance of our proposed multivariate process model relative to conventional separate modeling approaches, detailed in Section~\ref{sec:sep_model}, outlined as follows:
\begin{align*}\label{competing_models}
\mc{M}_1 &: \text{Joint Gaussian-Poisson model: } 
\big(\log(1+\mathrm{BA}(\bm{s})), \mathrm{CNT}(\bm{s}) \big) 
\; \text{with}\; \eqref{our_model},  \\
\mc{M}_2 &: \text{Separate Gaussian-Poisson model: }  \log(1+\mathrm{BA}(\bm{s})) \;\text{and} \;
\mathrm{CNT}(\bm{s}) \; \text{with}\; \eqref{eq:sep_model}.
\end{align*}
For illustration, we focus on time index 140, corresponding to September 2012, which aligns well with our model assumptions regarding spatial range and cross-covariance separability. A detailed statistical summary, including the number of fires and acres burned, is provided in the US monthly climate reports\footnote{Available at \url{https://www.ncei.noaa.gov}.}, which state that ``fire seasons'' mostly span from July to September.
\begin{figure}[ht]
    \centering
    \centering
    \begin{subfigure}[t]{0.49\textwidth}
        \centering
        \includegraphics[width=\linewidth]{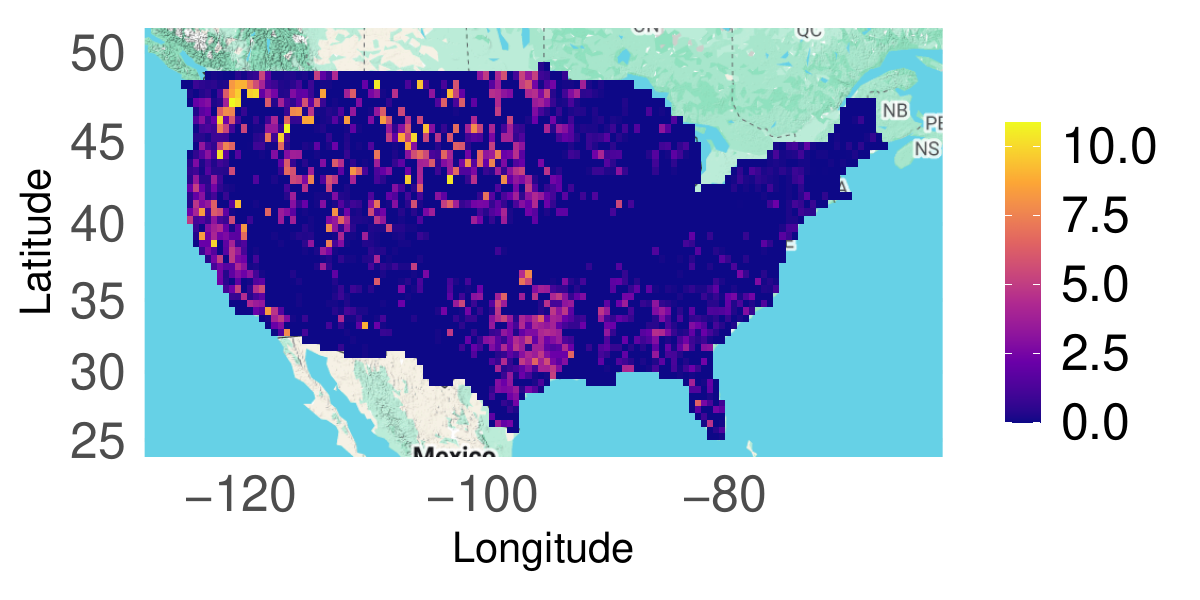}
    \end{subfigure}
    \begin{subfigure}[t]{0.49\textwidth}
        \centering
        \includegraphics[width=\linewidth]{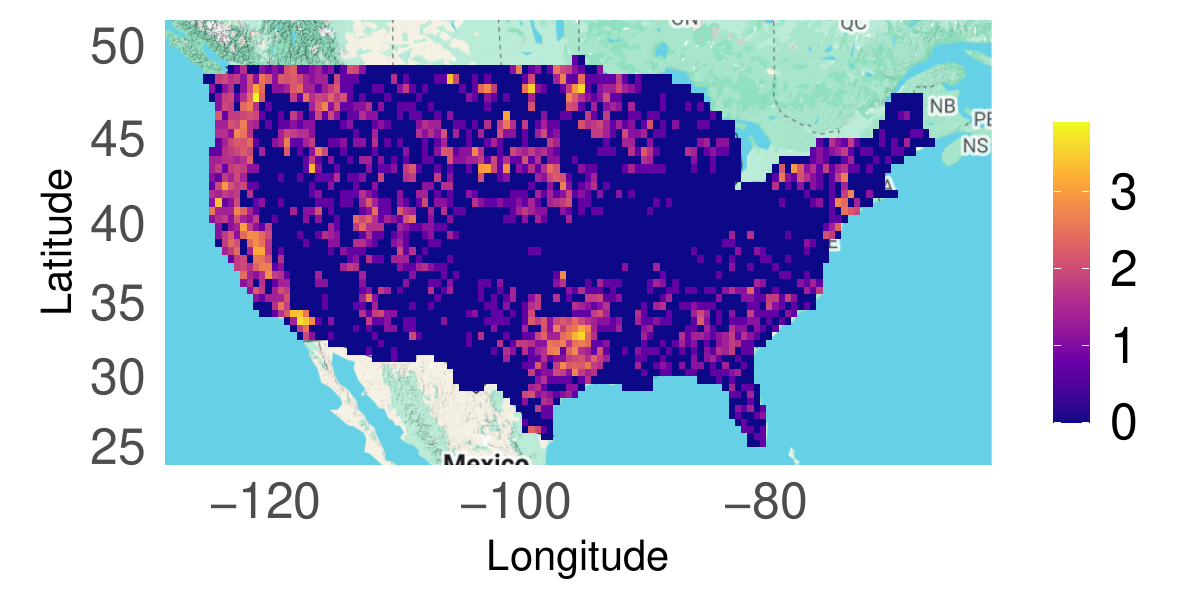}
    \end{subfigure}
     \caption{Spatial maps of $\log{(1+ \textrm{BA})}$ (left panel) and $\log{(1+ \textrm{CNT})}$ (right panel) of September 2012.}
    \label{fig:spatial map}
\end{figure}
The spatial maps in Figure~\ref{fig:spatial map} show that, with high correlations, the responses are spatially localized and concentrated within the domain. 
Several meteorological and land cover variables can be potentially used as covariates if they are useful and contribute significantly to our data analysis. However, including them in the model can be computationally challenging. We have compared the adjusted-$R^2$ by fitting a simple linear regression model with $\log(1 + \textrm{BA}(\bm{s}))$ and $\log(1 + \textrm{CNT}(\bm{s}))$ on the all the covariates and with latitude and longitude as covariates and obtained a mere increase of 0.32 from 0.28 from the later. \cite{cisneros2023combined} study the importance of covariates in their model and demonstrate that predicting $\textrm{BA}$, no other variables except $\textrm{CNT}$ have a significant effect, and vice versa. Hence, we are not including any other variables, except geographical coordinates, as covariates in our analysis. To assess spatial dependence, we plot empirical semivariograms of residuals obtained by separately regressing $\log(1 + \textrm{BA}(\bm{s}))$ and $\log(1 + \textrm{CNT}(\bm{s}))$ on the covariates $[1, \textrm{lon}(\bm{s}), \textrm{lat}(\bm{s})]^{\top}$. The estimated range parameters $\phi$ for burnt area and count responses are $2.622$ and $2.646$, respectively, values that are reasonably consistent with the separability assumption. We fix the smoothness parameter $\nu = 0.5$, as suggested by the fitted variograms, for subsequent analysis.

\begin{figure}[ht]
    \centering
    \includegraphics[width=1\linewidth]{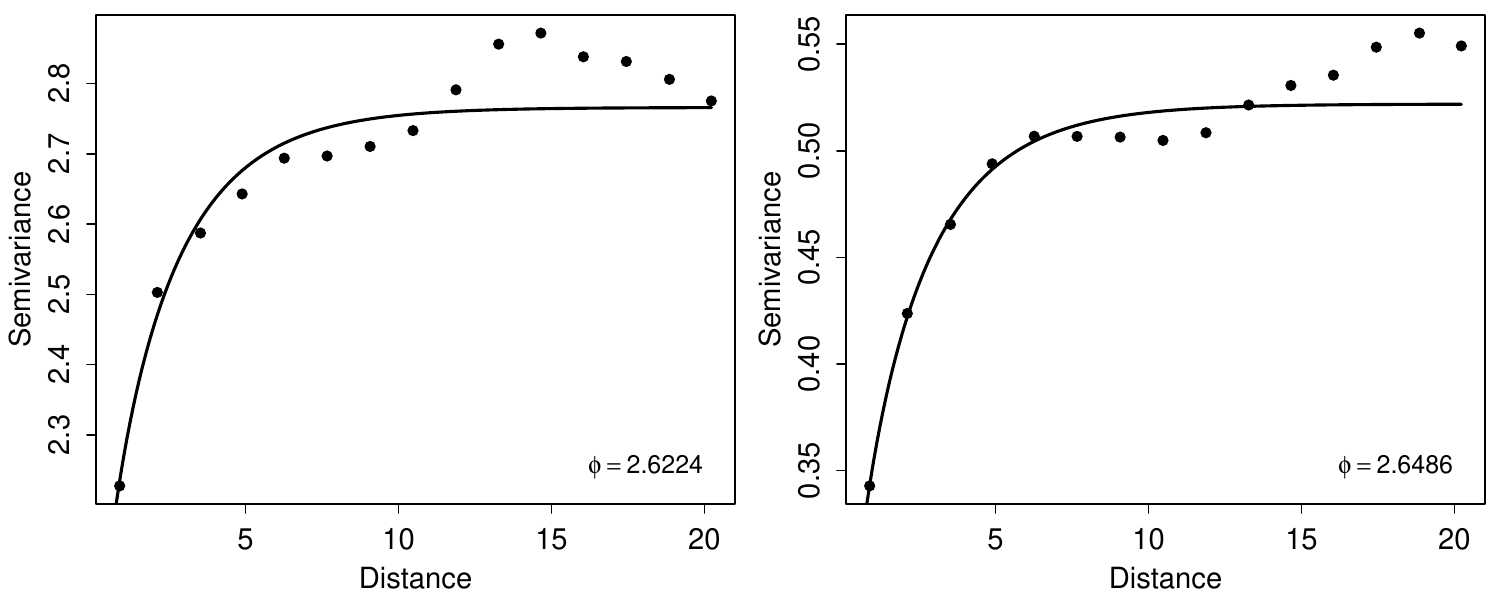}
    \caption{Semivariogram of residuals $\log{(1+ \textrm{BA})}$ (left panel) and $\log{(1+ \textrm{CNT})}$ (right panel) regression on locations as covariates for the dataset in September 2012.}
    \label{fig:semivariogram}
\end{figure}
The spatial variability pattern of the residuals of $\log(1 + \textrm{BA}(\bm{s}))$ and $\log(1 + \textrm{CNT}(\bm{s}))$ in Figure~\ref{fig:semivariogram} clearly illustrates that our proposed model may be a good fit. We present the estimation summary of real data analysis of our model $\mc{M}_1$ and the separate model $\mc{M}_2$ in Table~\ref{tab:real_data_estimation}. From Table~\ref{tab:real_data_estimation}, the posterior credible intervals at the $95\%$ level indicate that all the covariates are meaningful for the Gaussian response $\log(1 + \textrm{BA})$, while only intercept is statistically significant for $\log(1 + \textrm{CNT})$ under both $\mc{M}_1$ and $\mc{M}_2$. This finding indeed suggests that these covariates play an important role, irrespective of whether the responses are modeled jointly or separately. 
The posterior summaries of the covariance matrix $\bm{\Sigma}$ under $\mc{M}_1$ indicate a strong relationship between the two response types. In particular, the $95\%$ credible interval for the cross-covariance parameter $\bm{\Sigma}_{12}$ excludes zero, suggesting that the dependence is statistically significant. Such dependence is explicitly captured by the joint model $\mc{M}_1$, whereas it is inherently disregarded under the separate model $\mc{M}_2$. These findings highlight the practical advantage of joint modeling in the presence of dependence among bivariate ``continuous-count'' mixed-type responses by our approach. We also note that inference on the spatial range parameter $\phi$ is influenced by the truncated prior, resulting in estimates that differ from those obtained from empirical semivariogram analysis of residuals from a fitted linear model.
\begin{table}[!h]
\centering
\caption{Posterior means of the model parameters along with their standard errors for $\mc{M}_1$ and $\mc{M}_2$, based on using 10-fold cross-validation.}
\label{tab:real_data_estimation}
\begin{tabular}{lll}
  \hline
Parameters & $\mc{M}_1$ & $\mc{M}_2$ \\ 
  \hline
$\bm{B}_{11}$ & $\mathbf{-1.893}\!\;\text{\scriptsize($-3.9752$,\,$\hphantom{-}0.2212$)}$ & $\mathbf{-2.165}\!\;\text{\scriptsize($-3.0555$,\,$-1.2740$)}$ \\ 
  $\bm{B}_{12}$ & $\mathbf{-3.717}\!\;\text{\scriptsize($-5.9603$,\,$-1.4834$)}$ & $\mathbf{-4.100}\!\;\text{\scriptsize($-5.0956$,\,$-3.1047$)}$ \\ 
  $\bm{B}_{13}$ & $\mathbf{-0.021}\!\;\text{\scriptsize($-0.0372$,\,$-0.0045$)}$ & $\mathbf{-0.023}\!\;\text{\scriptsize($-0.0299$,\,$-0.0163$)}$ \\ 
  $\bm{B}_{21}$ & $\mathbf{-0.022}\!\;\text{\scriptsize($-0.0398$,\,$-0.0045$)}$ & $\mathbf{-0.027}\!\;\text{\scriptsize($-0.0349$,\,$-0.0194$)}$ \\ 
  $\bm{B}_{22}$ & $\hphantom{-}0.015\!\;\text{\scriptsize($-0.0242$,\,$\hphantom{-}0.0532$)}$ & $\hphantom{-}0.018\!\;\text{\scriptsize($\hphantom{-}0.0007$,\,$\hphantom{-}0.0349$)}$ \\ 
  $\bm{B}_{23}$ & $\hphantom{-}0.017\!\;\text{\scriptsize($-0.0245$,\,$\hphantom{-}0.0588$)}$ & $\hphantom{-}0.017\!\;\text{\scriptsize($-0.0019$,\,$\hphantom{-}0.0360$)}$ \\ 
  $\bm{\Sigma}_{11}$ & $\hphantom{-}1.919\!\;\text{\scriptsize($\hphantom{-}1.6114$,\,$\hphantom{-}2.3752$)}$ & $\hphantom{-}1.385\!\;\text{\scriptsize($\hphantom{-}1.2594$,\,$\hphantom{-}1.5276$)}$ \\ 
  $\bm{\Sigma}_{12}$ & $\hphantom{-}\mathbf{1.636}\!\;\text{\scriptsize($\hphantom{-}1.4135$,\,$\hphantom{-}1.9110$)}$ & -- (--) \\ 
  $\bm{\Sigma}_{22}$ & $\hphantom{-}2.135\!\;\text{\scriptsize($\hphantom{-}1.7275$,\,$\hphantom{-}2.5284$)}$ & $\hphantom{-}1.630\!\;\text{\scriptsize($\hphantom{-}1.4037$,\,$\hphantom{-}1.8168$)}$ \\ 
  $\phi$ & $\hphantom{-}1.143\!\;\text{\scriptsize($\hphantom{-}0.9435$,\,$\hphantom{-}1.3734$)}$ & $\hphantom{-}0.446\!\;\text{\scriptsize($\hphantom{-}0.4116$,\,$\hphantom{-}0.4934$)}$ \\ 
   \hline
\end{tabular}
\end{table}

\begin{table}[!h]
\centering
\caption{Performances of $\mc{M}_1$ and $\mc{M}_2$ based on ELJPD and the coverage for both components over all locations in the test sets from 10-fold cross-validation.} 
\label{tab:real_data_prediction}
\begin{tabular}{lll}
  \hline
Model & ELJPD \text{\scriptsize(SE)} & Coverage \text{\scriptsize(SE)} \\ 
  \hline
$\mc{M}_1$ & $\mathbf{-1355.146}\!\,\text{\scriptsize(30.759)}$ & $0.950\!\,\text{\scriptsize(0.002)}$ \\ 
  $\mc{M}_2$ & $-1514.617\!\,\text{\scriptsize(30.413)}$ & $0.964\!\,\text{\scriptsize(0.002)}$ \\ 
   \hline
\end{tabular}
\end{table}
We evaluate the predictive performance using $10$-fold cross-validation on randomly partitioned data. While both models show good prediction coverage (Table~\ref{tab:real_data_prediction}), the results indicate that $\mc{M}_1$ dominates $\mc{M}_2$ with higher ELJPD, indicating improved out-of-sample prediction across $10$-fold cross-validation. This finding suggests that accounting for cross-response association enhances predictive accuracy beyond separate modeling. In the context of wildfire management, this can support more efficient resource allocation across and within states.

\section{Discussions}\label{sec:discussions}
We propose a novel joint Bayesian hierarchical model for high-dimensional spatial data with mixed-type responses. This framework captures the cross-dependence among different response types and the spatial dependence across the domain. To our knowledge, this is the first model to offer flexibility, interpretability, and computational scalability.
Unlike the \texttt{GPVecchia} package in \texttt{R} \citep{katzfuss2021general}, which does not extend to matrix-Gaussian approximation, our method can be used
by modeling a sparsity-aware approximation of a multivariate latent spatial random effect with a separable covariance structure. Our approach enables fully Bayesian inference in high-dimensional spatial settings, making it particularly well-suited for modeling spatially indexed point process data over large and complex geographical regions. In contrast, traditional univariate process models fail to account for cross-response correlations, as demonstrated in our simulation-based predictive analysis. To ensure computational efficiency, we impose a sparse structure on the Cholesky factor of the precision matrix via the Vecchia approximation. Additionally, we incorporate a component-wise elliptical slice sampler for the latent GP and a blocked Gibbs sampler for the regression and cross-covariance matrices, which improves the mixing and convergence of the MCMC algorithm. However, there are methodologies for mixed-type response non-Gaussian spatial data \citep{zhang2025bayesian} and spatio-temporal data \citep{pan2024bayesian} that bypass the MCMC algorithm, providing faster methods based on Bayesian predictive stacking. Another domain of interest could be analyzing model selection for mixed-type multivariate spatial data, as in \cite{ghosh2025high}, a recent work in a non-spatial setup.

We have also established key theoretical properties of the model, including identifiability, a known challenge in non-replicated multivariate spatial settings. In our real-data application, we analyze US wildfire data over $3503$ spatial grid cells covering mainland US, demonstrating that our model consistently outperforms independent univariate process models. This superiority is evident through standard Bayesian model comparison metrics such as ELJPD. The EVA 2021 data challenge, however, involves spatio-temporal data with multiple land cover and meteorological variables; several avenues can also be explored to extend our model to temporally replicated spatial data \citep{Zhu05}, spatio-temporal data \citep{zhang2023joint}, or in Bayesian model selection.

While our current formulation assumes spatial isotropy and separability, violations from these assumptions may lead to model misspecification. Extending the framework for nonstationary spatial dependence remains a substantial challenge. Flexible cross-covariance can be used, for instance, through linear models of coregionalization (LMC), which express $\mc{W}(\cdot)$ as a linear combination of latent spatial processes \citep{gelfand2004nonstationary, schmidt2003bayesian}, as well as through more general constructions based on latent dimensions \citep{apanasovich2010cross}. However, such non-separable covariance formulations often introduce over-parameterization and raise identifiability concerns \citep{Genton15}, particularly in the absence of temporally replicated data. For areal or lattice spatial data, an alternative direction can be explored using multivariate conditional autoregressive (MCAR) models \citep{mardia1988multi, gelfand2003proper}. Future work could explore scalable approximations to the high-dimensional posterior of the latent GP, such as domain partitioning or approximate Bayesian fusion \citep{dai2023bayesian} within the mixed-type response modeling framework. 



\bibliography{reference_paper}


\clearpage

\appendix

\setcounter{equation}{0}
\setcounter{figure}{0}
\setcounter{table}{0}
\setcounter{algorithm}{0}

\section*{Supplementary Materials}
\addcontentsline{toc}{section}{Supplementary Materials}
\renewcommand{\thesection}{S\arabic{section}}
\renewcommand{\thesubsection}{S\arabic{section}.\arabic{subsection}}

\addcontentsline{toc}{section}{Supplementary Materials}

\renewcommand{\theequation}{E\S\arabic{equation}}   
\renewcommand{\thefigure}{F\S\arabic{figure}}      
\renewcommand{\thetable}{T\S\arabic{table}}        
\renewcommand{\thealgorithm}{A\S\arabic{algorithm}}

\section{Exponential family distributions}
In Section~\ref{sec:data_analysis} of the main article, we consider models with Gaussian, Binomial, and Poisson response types. However, our spatial mixed-type model accommodates other well-known members of the exponential family. Table~\ref{tab:exp_family_distributions} presents a list of well-known distributions along with their corresponding parameters.
\begin{table}[!h]
\centering
\caption{Exponential family distributions with canonical link functions}
\label{tab:exp_family_distributions}
\renewcommand{\arraystretch}{1}
\setlength{\tabcolsep}{3pt}
\begin{tabular}{llcccc}
\toprule
\vspace{0.2cm}
\text{Distribution} & \text{Params} & \text{Canonical Link} & $\psi$ & $b(w)$ & $h(y, \psi)$ \\
\midrule
\vspace{0.2cm}
\text{Gaussian} & $\mu, \sigma^2$ & $w = \mu$ & $\sigma^2$ & $\dfrac{w^2}{2}$ & $\dfrac{1}{\sqrt{2\pi\sigma^2}} \exp\!\left(-\dfrac{y^2}{2\sigma^2}\right)$ \\
\vspace{0.2cm}
\text{Binomial} & $p$ & $w = \log\!\left(\dfrac{p}{1 - p}\right)$ & $1$ & $\log(1 + e^{w})$ & $1$ \\
\vspace{0.2cm}
\text{Poisson} & $\lambda$ & $w = \log(\lambda)$ & $1$ & $e^{w}$ & $\dfrac{1}{y!}$ \\
\vspace{0.2cm}
\text{Binomial} & $p, m$ & $w = \log\!\left(\dfrac{p}{1 - p}\right)$ & $1$ & $m \log(1 + e^{w})$ & $\binom{m}{y}$ \\
\vspace{0.2cm}
\text{Gamma} & $\theta, \alpha$ & $w = \dfrac{1}{\theta}$ & $\alpha$ & $-\log(-w)$ & $\dfrac{y^{\alpha - 1}}{\Gamma(\alpha)} \, \mathbf{1}_{\{y>0\}}$ \\
\vspace{0.2cm}
Neg-Binomial & $p, r$ & $w = \log\!\left(\dfrac{p}{1 - p}\right)$ & $r$ & $-r \log(1 - e^{w})$ & $\binom{y+r-1}{y}$ \\
\bottomrule
\end{tabular}
\end{table}

 Although Table~\ref{tab:exp_family_distributions} presents canonical link functions, our model is not limited to these specifications. In particular, it can accommodate alternative link functions, such as the probit and complementary log-log links for Binomial responses and the log link for Gamma models.
\section{Matrix-Normal Vecchia approximation}\label{sec:vecchia_matnorm}

In this section, we outline the definition of the Matrix-Normal distribution and a scalable sampling procedure based on the Vecchia approximation, used in inference and prediction in Algorithm~\ref{alg:mixed_model_algorithm} in the main paper.

\begin{definition}
    A random matrix $\bm{Z}_{d_1 \times d_2}$ is said to follow a Matrix-Normal distribution with mean matrix $\bm{M}$, row-wise covariance being $\bm{V}$ and column-wise covariance being $\bm{U}$ and denoted by $\bm{Z} \sim \mc{MN}_{d_1, d_2}(\bm{M}, \bm{U}, \bm{V})$ if its probability density function is given by
\begin{equation*}
    \pi(\bm{Z} \mid \bm{M}, \bm{U}, \bm{V}) =
    \dfrac{1}{ (2\pi)^{d_1 d_2/2} \, |\bm{U}|^{d_2/2} \, |\bm{V}|^{d_1/2} }
    \exp\!\Bigg\{ -\dfrac{1}{2}\, \mathrm{tr}\!\big[ \bm{V}^{-1} (\bm{Z}-\bm{M})^{\top} \bm{U}^{-1} (\bm{Z}-\bm{M}) \big] \Bigg\},
\end{equation*}
where $\bm{M}$ is an $d_1 \times d_2$ mean matrix, $\bm{U}$ is an $d_1 \times d_1$ row covariance matrix, and $\bm{V}$ is an $d_2 \times d_2$ column covariance matrix. 
Equivalently, if $\bm{Z} \sim \mc{MN}_{d_1, d_2}(\bm{M}, \bm{U}, \bm{V})$, then 
$\mathrm{vec}(\bm{Z}) \sim \mc{N}_{d_1 d_2}\!\big(\mathrm{vec}(\bm{M}), \bm{V} \otimes \bm{U}\big)$,
where $\otimes$ denotes the Kronecker product.
\end{definition}
 In our model, we assume that $\bm{U}$ is the spatial covariance matrix that grows with the data size $d_1$. Instead of using sparse cholesky factor of the precision matrix $\bm{U}^{-1}$, we let $\texttt{NNarray}$ encode the ordered neighbor sets, with
\[
\mc{M}(i) = \{j < i : j \in \texttt{NNarray}[i,\cdot]\}, \quad |\mc{M}(i)| = m \ll d_1.
\]
For each $i$, define
\[
\bm{A}_i = \bm{K}_{i,\mc{M}(i)} \bm{K}_{\mc{M}(i),\mc{M}(i)}^{-1}, 
\qquad
r_i = \bm{K}_{i,i} - \bm{K}_{i,\mc{M}(i)} \bm{K}_{\mc{M}(i),\mc{M}(i)}^{-1} \bm{K}_{\mc{M}(i),i},
\]
where $\bm{K}$ is the full underlying covariance kernel. Here $\bm{A}_i \in \mathbb{R}^{1 \times |\mc{M}(i)|}$ and $r_i$ is scalar.
\begin{algorithm}[H]
\caption{Vecchia sampling from $\mc{MN}_{d_1,d_2}(\bm{M}, \bm{U}, \bm{V})$}
\label{alg:vecchia_sampling}
\begin{algorithmic}[1]
\Require $\bm{M}$, $\bm{V}$, $(\bm{A}_i, r_i, \texttt{NNarray})_{i=1}^{d_1}$

\State Sample $\bm{Z}_i \overset{\text{i.i.d.}}{\sim} \mc{N}_{d_2}(\bm{0}, \bm{V})$ for $i=1,\ldots,d_1.$

\State Initialize $\bm{W} \gets \bm{0}$.

\For{$i = 1,\ldots,d_1$}
    \State $\mc{M}(i) \gets \{j < i : j \in \texttt{NNarray}[i,\cdot]\}.$ 
    
    \State $\bm{W}_i \gets \bm{A}_i \bm{W}_{\mc{M}(i)} + \sqrt{r_i}\,\bm{Z}_i.$
\EndFor

\State \textbf{Return:} $\bm{W} = \bm{M} + \bm{W}$

\end{algorithmic}
\end{algorithm}
\noindent Under the Vecchia approximation, the joint distribution factorizes as
\[
\bm{W}_i \mid \bm{W}_{\mc{M}(i)} \sim \mc{N}_n\!\big(\bm{A}_i \bm{W}_{\mc{M}(i)}, \; r_i \bm{V}\big), \quad i=1,\ldots,m,
\]
where $(\bm{A}_i, r_i)$ are obtained from local covariance blocks. This approximation yields linear complexity in $m$ while avoiding dense matrix factorizations. We provide efficient Vecchia-based sampling in Algorithm~\ref{alg:vecchia_sampling} and fast likelihood evaluation for the latent layer of our model~\ref{our_model} in Algorithm~\ref{alg:vecchia_likelihood} \citep{guinness2018permutation}, both of which are used in the inference workflow of our proposed methodology.
\begin{algorithm}[H]
\caption{Fast Vecchia log-likelihood evaluation}
\label{alg:vecchia_likelihood}
\begin{algorithmic}[1]
\Require $\bm{W}$, $\bm{M}$, $\bm{V}$, $(\bm{A}_i, r_i, \texttt{NNarray})_{i=1}^{d_1}$

\State Initialize $\log L \gets 0$.

\For{$i = 1,\ldots,d_1$}
    \State $\mc{M}(i) \gets \{j < i : j \in \texttt{NNarray}[i,\cdot]\}.$
    
    \State $\bm{\mu}_i \gets \bm{M}_i + \bm{A}_i (\bm{W}_{\mc{M}(i)} - \bm{M}_{\mc{M}(i)}).$
    
    \State $\bm{e}_i \gets \bm{W}_i - \bm{\mu}_i.$
    
    \State $\log L \gets \log L 
    - \dfrac{1}{2} \Big[ 
    n \log(2\pi r_i) + \log|\bm{V}| 
    + \dfrac{1}{r_i} \bm{e}_i^\top \bm{V}^{-1} \bm{e}_i 
    \Big].$
\EndFor

\State \textbf{Return:} $\log L$
\end{algorithmic}
\end{algorithm}

\section{MCMC computations}

We present the posterior simulation framework for our proposed model, discussing a blocked Gibbs sampler for $(\bm{B}, \bm{\Sigma})$ with a component-wise elliptical slice sampler algorithm for updating the latent spatial random effect $\bm{W}$.
\subsection{Blocked-Gibbs sampler}\label{sec:blocked_Gibbs_MNIW}
We provide a detailed derivation of the steps of the Matrix-Normal Inverse-Wishart (MNIW) Blocked-Gibbs sampler mentioned in Section~\ref{sec:bayesian_computation}. 
The prior density of the random effect matrix ${\bm{W}}$ is given by
\begin{equation*}
    \begin{aligned}[t]
        & \pi({\bm{W}}\;|\; \bm{B}, \bm{\Sigma}, \phi ) \\
        &= (2\pi)^{-nq/2} {\lvert {\bm{K}} \rvert}^{-q/2} {\lvert \bm{\Sigma} \rvert}^{-n/2}  \exp{-\dfrac{1}{2}\textrm{tr}\left[ \bm{\Sigma}^{-1} ({\bm{W}} - \bm{X}\bm{B})^{\top} \bm{K}^{-1}({\bm{W}} - \bm{X}\bm{B})\right]}.
    \end{aligned}
\end{equation*}
Similarly the prior density of $\bm{\Sigma}$ is given by
$$ \pi(\bm{\Sigma}) = \dfrac{1}{2^{vq/2} \Gamma(v/2)} {\lvert \bm{S} \rvert}^{v/2} {\lvert \bm{\Sigma} \rvert}^{-(v+q+1)/2}\exp{-\dfrac{1}{2}\tr{\bm{S}\bm{\Sigma}^{-1}}}.$$
The marginal posterior density of $\bm{\Sigma}\; |\; {\bm{W}}, \phi , \bm{Y}$ is obtained by
\begin{equation*}
    \begin{split}
        \pi(\bm{\Sigma} \;|\; {\bm{W}}, \phi , \bm{Y}) &= \int \pi(\bm{\Sigma}, \bm{B} \;|\; {\bm{W}}, \phi , \bm{Y}) d\bm{B}\\
        &= \int \pi({\bm{W}} \;|\; \bm{B}, \bm{\Sigma}) \pi(\bm{B} \;|\; \bm{M}, \bm{V}, \bm{\Sigma}) \pi(\bm{\Sigma} \;|\; \bm{S}, v).\;\;\\
    \end{split}
\end{equation*}
In the blocked Gibbs sampler, instead of sampling from the full conditional posterior of $\bm{\Sigma}$ given $\bm{B} \mid \bm{W}, \phi, \bm{Y}$, we integrate out $\bm{B}$ and draw from the marginal full conditional posterior density $\pi(\bm{\Sigma}, \bm{B} \mid \bm{W}, \phi, \bm{Y})$. We outline the derivation of $\pi(\bm{\Sigma}, \bm{B} \mid \bm{W}, \phi , \bm{Y})$ as given below
\begin{equation*}
\begin{aligned}
\pi(\bm{\Sigma}, \bm{B} \mid \bm{W}, \phi , \bm{Y}) &\propto 
    |\bm{K}|^{-q/2} |\bm{\Sigma}|^{-n/2}
    \exp\!\left\{-\tfrac{1}{2}\,\tr\!\big[ \bm{\Sigma}^{-1} (\bm{W}-\bm{X}\bm{B})^{\top}\bm{K}^{-1}(\bm{W}-\bm{X}\bm{B}) \big]\right\} \\[4pt]
&\quad \times | \bm{V} |^{-q/2} |\bm{\Sigma}|^{-p/2}
    \exp\!\left\{-\tfrac{1}{2}\,\tr\!\big[ \bm{\Sigma}^{-1} (\bm{B}-\bm{M})^{\top} \bm{V}^{-1} (\bm{B}-\bm{M}) \big]\right\} \\[4pt]
&\quad \times |\bm{\Sigma}|^{-(v+q+1)/2}
    \exp\!\left\{-\tfrac{1}{2}\,\tr\!\big[ \bm{\Sigma}^{-1}\bm{S} \big]\right\}.
\end{aligned}
\end{equation*}
Substituting $\widetilde{\bm{V}} = \big(\bm{X}^{\top}\bm{K}^{-1}\bm{X} + \bm{V}^{-1}\big)^{-1},\;
\widetilde{\bm{M}} = \widetilde{\bm{V}}\big(\bm{X}^{\top}\bm{K}^{-1}\bm{W} + \bm{V}^{-1}\bm{M}\big)$ in the joint full conditional posterior, we get the simplified expression below
\begin{equation}\label{blocked_gibbs}
\begin{aligned}[t]
& \pi(\bm{\Sigma}, \bm{B} \mid \bm{W}, \phi , \bm{Y}) \\
&\propto\; |\bm{\Sigma}|^{-(v+p+q+n+1)/2}
   \exp\!\left\{-\tfrac{1}{2}\,\tr\!\Big[\bm{\Sigma}^{-1}\big(
   \bm{S}+\bm{W}^{\top}\bm{K}^{-1}\bm{W}
   +\bm{M}^{\top}\bm{V}^{-1}\bm{M}
   -\widetilde{\bm{M}}^{\top}\widetilde{\bm{V}}^{-1}\widetilde{\bm{M}} \big)\Big]\right\}\\[4pt]
&\propto\; 
   |\bm{\Sigma}|^{-(v+p+q+n+1)/2}
   \exp\!\left\{-\tfrac{1}{2}\,\tr\!\big[\bm{\Sigma}^{-1}(\bm{B}-\widetilde{\bm{M}})^{\top}
   \widetilde{\bm{V}}^{-1}(\bm{B}-\widetilde{\bm{M}})\big]\right\}.
\end{aligned}
\end{equation}
Integrating out the density of $\bm{B}$ in \eqref{blocked_gibbs} we obtain, 
$$ \bm{\Sigma} \;|\; {\bm{W}}, \phi   \sim \mc{IW}_q(\widetilde{\bm{S}} = \bm{S} + {\bm{W}}^{\top} \bm{K}^{-1} {\bm{W}} + {\bm{M}}^{\top} {\bm{V}}^{-1} {\bm{M}} - \widetilde{\bm{M}}^{\top} \widetilde{\bm{V}}^{-1} \widetilde{\bm{M}} , \widetilde{v} = v + n).$$
The full conditional distribution of $\bm{B} \;|\; \bm{\Sigma}, \phi , {\bm{W}}, \bm{Y}$ is obtained by
\begin{equation}\label{posterior_B}
\begin{aligned}[t]
&\pi(\bm{B} \mid \bm{\Sigma}, \phi , {\bm{W}}, \bm{Y}) 
\\
& \propto\; \pi({\bm{W}} \mid \bm{B}, \bm{\Sigma}) \, \pi(\bm{B} \mid \bm{M}, \bm{V}, \bm{\Sigma})\\
&\propto\; (2\pi)^{-nq/2} \, {\lvert \bm{K} \rvert}^{-q/2} \, {\lvert \bm{\Sigma} \rvert}^{-n/2}  
   \exp\!\left\{-\tfrac{1}{2}\,\tr\!\Big[\bm{\Sigma}^{-1} ({\bm{W}} - \bm{X}\bm{B})^{\top} 
   \bm{K}^{-1}({\bm{W}} - \bm{X}\bm{B})\Big]\right\}\\
&\quad\times (2\pi)^{-pq/2} \, {\lvert \bm{V} \rvert}^{-q/2} \, {\lvert \bm{\Sigma} \rvert}^{-p/2}
   \exp\!\left\{-\tfrac{1}{2}\,\tr\!\Big[\bm{\Sigma}^{-1} (\bm{B}- \bm{M})^{\top} 
   \bm{V}^{-1}(\bm{B} - \bm{M})\Big]\right\}\\
&\propto\; \exp\!\left\{-\tfrac{1}{2}\,\tr\!\Big[ \bm{\Sigma}^{-1} 
   \big(\bm{B}^{\top}\bm{X}^{\top} \bm{K}^{-1} \bm{X}\bm{B} 
   - \bm{B}^{\top}\bm{X}^{\top}\bm{K}^{-1}{\bm{W}} 
   - {\bm{W}}^{\top}\bm{K}^{-1} \bm{X}\bm{B}  
   + {\bm{W}}^{\top} \bm{K}^{-1} {\bm{W}} ) \big)\Big]\right\}\\
&\quad\times 
   \exp\!\left\{-\tfrac{1}{2}\,\tr\!\Big[ \bm{\Sigma}^{-1} 
   \big(\bm{B}^{\top}\bm{V}^{-1}\bm{B} - \bm{B}^{\top}\bm{V}^{-1}\bm{M} 
   - \bm{M}^{\top}\bm{V}^{-1} \bm{B} + \bm{M}^{\top}\bm{V}^{-1}\bm{M}\big)\Big]\right\}\\
&\propto\; \exp\!\left\{-\tfrac{1}{2}\,\tr\!\Big[\bm{\Sigma}^{-1} 
   \big(\bm{B}^{\top} (\bm{X}^{\top} \bm{K}^{-1} \bm{X} + \bm{V}^{-1}) \bm{B} 
   - 2\bm{B}^{\top} (\bm{X}^{\top}\bm{K}^{-1} {\bm{W}} 
   + \bm{V}^{-1} \bm{M})\big)\Big]\right\}\\
&\propto\; \exp\!\left\{-\tfrac{1}{2}\,\tr\!\Big[\bm{\Sigma}^{-1} 
   (\bm{B} - \widetilde{\bm{M}})^{\top} 
   \widetilde{\bm{V}}^{-1} (\bm{B} - \widetilde{\bm{M}})\Big]\right\}.
\end{aligned}
\end{equation}
Given $\bm{\Sigma}, \phi , {\bm{W}}$, we obtain the Matrix-Normal full conditional posterior of $\bm{B}$ in \eqref{posterior_B} as $\bm{B} \;|\; \bm{\Sigma}, \phi , {\bm{W}} \sim \mc{MN}_{p,q} (\widetilde{\bm{M}}, \widetilde{\bm{V}}, \bm{\Sigma} ).$

\subsection{Elliptical Slice Sampler}\label{sec:ess_W}
We outline a scalable sampling algorithm for the latent spatial effect $\bm{W}$ in our model, as shown in Algorithm~\ref{alg:mixed_model_algorithm} in the main article.
\begin{algorithm}[H]
\caption{Elliptical Slice Sampler for column-wise update of $\bm{W}$}
\label{alg:ESS_W_conditional}
\begin{algorithmic}[1]
\Require $\bm{W}$, $\bm{B}$, $\bm{\Sigma}$, $(\bm{A}, \bm{r}, \texttt{NNarray})$, $\bm{Y}$, $\bm{X}$, family

\State Denote: $\bm{\mu}_{\bm{W}} \gets \bm{X}\bm{B}$,\quad $\log L \gets \log L(\bm{W})$

\State Sample $\bm{Z} \sim \mathcal{MN}_{n,q}(\bm{0}, \bm{I}_n, \bm{I}_q)$
\State Assign $\bm{\nu} \gets \bm{0}$.
\For{$i = 1,\ldots,n$}
\State $\bm{\nu}_i \gets \bm{A}_i \bm{\nu}_{\mc{M}(i)} + \sqrt{r_i}\,\bm{Z}_i$, \quad where $\mc{M}(i) = \{j < i : j \in \texttt{NNarray}[i,\cdot]\}$.

\EndFor

\For{$j = 1,\ldots,q$}
  
    \State Let $\mathcal{I}_{-j} = \{1,\ldots,q\}\setminus\{j\}$
\State Compute conditional mean: $\bm{\mu}_j \gets \bm{\mu}_{\bm{W},j} + (\bm{W}_{\cdot,-j} - \bm{\mu}_{\bm{W},-j})\bm{\Sigma}_{-j,-j}^{-1}\bm{\Sigma}_{-j,j}$
\State Compute conditional variance: $\sigma_j^2 \gets \Sigma_{j,j} - \bm{\Sigma}_{j,-j}\bm{\Sigma}_{-j,-j}^{-1}\bm{\Sigma}_{-j,j}$
    
    \State $\bm{W}_{\text{prior},j} \gets \sqrt{\sigma_j^2}\,\bm{\nu}_{\cdot j}$
    
    \State Sample $\gamma \sim \mathcal{U}(0,2\pi)$, 
    \State Set $(\gamma_{\min}, \gamma_{\max}) \gets (\gamma-2\pi, \gamma)$
    \State $\log y \gets \log L + \log u$, \quad $u \sim \mathcal{U}(0,1)$
    
    \Repeat
        \State $\bm{W}_{\cdot j}^{\text{cand}} \gets \bm{\mu}_j + (\bm{W}_{\cdot j}-\bm{\mu}_j)\cos\gamma + \bm{W}_{\text{prior},j}\sin\gamma$
        
        \State Form $\bm{W}^{\text{cand}}$ and compute $\log L_{\text{cand}}$
        
        \If{$\log L_{\text{cand}} > \log y$}
            \State $\bm{W}_{\cdot j} \gets \bm{W}_{\cdot j}^{\text{cand}}$, \quad $\log L \gets \log L_{\text{cand}}$
            \State \textbf{break}
        \Else
            \State Shrink bracket:
            \[
            (\gamma_{\min}, \gamma_{\max}) \gets
            \begin{cases}
            (\gamma, \gamma_{\max}) & \gamma < 0 \\
            (\gamma_{\min}, \gamma) & \gamma \ge 0
            \end{cases}
            \]
            \State $\gamma \sim \mathcal{U}(\gamma_{\min}, \gamma_{\max})$
        \EndIf
    \Until{accepted}
    
\EndFor

\State \Return $\bm{W}$
\end{algorithmic}
\end{algorithm}

\section{Posterior latent predictive process}\label{sec:latent_ppd}

In this section, we derive the posterior predictive distribution of the latent process in the precision parameterization \eqref{precision_parameterization}, consistent with the predictive modeling formulation in Section~\ref{sec:predictive_modeling}. This representation is essential for efficient sampling under the Vecchia approximation. Let $\bm{W}^{*}$ and $\bm{X}^{*}$ denote the stacked latent process and covariate matrices over the prediction locations $\mc{U}$, defined as in Section~\ref{sec:predictive_modeling}.

\begin{theorem}
Assume the joint distribution of the latent process over $\mc{S} \cup \mc{U}$ is given by
\[
\begin{bmatrix}\bm{W}\\\bm{W}^*\end{bmatrix}
\sim \mc{MN}_{n+u,q}\left(\,
\begin{bmatrix}\bm{X}\bm{B}\\\bm{X}^*\bm{B}\end{bmatrix},\;
\bm{K}^{(u+n)}=
\begin{bmatrix}\bm{K}^{(n,n)} & \bm{K}^{(n,u)}\\
\bm{K}^{(u,n)} & \bm{K}^{(u,u)}\end{bmatrix},\;
\bm{\Sigma}\right).
\]
Then the conditional distribution of $\bm{W}^{*}$ given $\bm{W}$ admits the precision-form representation
\[
\bm{W}^{*} \mid \bm{W}, \bm{B}, \bm{\Sigma}, \phi 
\sim \mc{MN}_{u,q}\Big(
\bm{X}^{*}\bm{B} - (\bm{Q}^{(u,u)})^{-1}\bm{Q}^{(u,n)}(\bm{W}-\bm{X}\bm{B}),
(\bm{Q}^{(u,u)})^{-1},
\bm{\Sigma}
\Big),
\]
where $\bm{Q}^{(u+n)} = (\bm{K}^{(u+n)})^{-1}$ is partitioned as in \eqref{precision_parameterization}.
\end{theorem}

\begin{proof}
From standard Gaussian conditioning, we have
\[
\bm{W}^* \mid \bm{W} \sim \mc{MN}_{u,q}\big(\bm{M}_{W*}, \bm{K}_{W*}, \bm{\Sigma}\big),
\]
where
\[
\bm{M}_{W*} = \bm{X}^*\bm{B} + \bm{K}^{(u,n)}(\bm{K}^{(n,n)})^{-1}(\bm{W}-\bm{X}\bm{B}),
\quad
\bm{K}_{W*} = \bm{K}^{(u,u)} - \bm{K}^{(u,n)}(\bm{K}^{(n,n)})^{-1}\bm{K}^{(n,u)}.
\]
Let $\bm{Q}^{(u+n)} = (\bm{K}^{(u+n)})^{-1}$ with block partition as in \eqref{precision_parameterization}. Using the identity
\[
\bm{K}^{(u+n)}\bm{Q}^{(u+n)} = \bm{I}_{n+u},
\]
block multiplication yields
\begin{align}
\bm{K}^{(u,u)}\bm{Q}^{(u,u)} + \bm{K}^{(u,n)}\bm{Q}^{(n,u)} &= \bm{I}_u, \label{eq1}\\
\bm{K}^{(n,u)}\bm{Q}^{(u,u)} + \bm{K}^{(n,n)}\bm{Q}^{(n,u)} &= \bm{0}. \label{eq2}
\end{align}
From \eqref{eq2}, assuming $\bm{K}^{(n,n)}$ is invertible, we have $\bm{Q}^{(n,u)} = -(\bm{K}^{(n,n)})^{-1}\bm{K}^{(n,u)}\bm{Q}^{(u,u)}.$
Substituting into \eqref{eq1} gives $\left(\bm{K}^{(u,u)} - \bm{K}^{(u,n)}(\bm{K}^{(n,n)})^{-1}\bm{K}^{(n,u)}\right)\bm{Q}^{(u,u)} = \bm{I}_u,$
which implies
\[
\bm{K}_{W*} = \bm{K}^{(u,u)} - \bm{K}^{(u,n)}(\bm{K}^{(n,n)})^{-1}\bm{K}^{(n,u)}
= (\bm{Q}^{(u,u)})^{-1}.
\]
Next, using another block identity, we obtain $\bm{Q}^{(u,n)} = -(\bm{Q}^{(u,u)})\,\bm{K}^{(u,n)}(\bm{K}^{(n,n)})^{-1},$
which implies
\[
\bm{K}^{(u,n)}(\bm{K}^{(n,n)})^{-1}
= -(\bm{Q}^{(u,u)})^{-1}\bm{Q}^{(u,n)}.
\]
Substituting into $\bm{M}_{W*}$ yields
\[
\bm{M}_{W*}
= \bm{X}^*\bm{B} - (\bm{Q}^{(u,u)})^{-1}\bm{Q}^{(u,n)}(\bm{W}-\bm{X}\bm{B}).
\]
Combining these expressions establishes the precision-form representation of the predictive distribution.
\end{proof}

\section{Separate model}\label{sec:sep_model}
We specify the traditional mixed-type separate model, in which, for each response type, a practitioner fits an independent latent univariate Gaussian process to each margin. Under our notations in model specification, with a shared parameter $\phi$ for each response type, the hierarchical structure for a separate model is given by
\begin{equation}\label{eq:sep_model}
\begin{aligned}
\text{Data level:} \quad
& Y_j(\bm{s}) \mid W_j(\bm{s}) \;\overset{\mathrm{ind}}{\sim}\; \mathrm{EF}\big(Y_j(\bm{s}) \mid W_j(\bm{s}), \psi_j \big), 
\quad j = 1, \ldots, q,\quad \bm{s} \in \mc{D},\\[0.3em]
\text{Process level:} \quad
& \mc{W}_j(\cdot) \mid \bm{\beta}_{j}, \bm{\Sigma}_{jj}, \phi \;\overset{\mathrm{ind}}{\sim}\; \mc{GP}\big(\mc{X}^\top(\cdot) \bm{\beta}_{j},\; \bm{\Sigma}_{jj}\mc{K}_\phi(\cdot, \cdot) \big),\quad j = 1, \ldots, q,\\[0.3em]
\text{Parameter level:} \quad
& \bm{\beta}_{j} \mid \bm{\Sigma}_{jj} \;\overset{\mathrm{ind}}{\sim}\; \mathcal{N}_{p}\left(\bm{M}_j, \bm{\Sigma}_{jj}\bm{V}\right),\quad j = 1, \ldots, q, \\
& \bm{\Sigma}_{jj} \;\overset{\mathrm{ind}}{\sim}\; \mathcal{IG}(a, b),\quad j = 1, \ldots, q, \\
& \phi \;\sim\; \mathcal{U}(0, b_{\phi}).
\end{aligned}
\end{equation}
The directed acyclic graph in Figure~\ref{fig:dag_separate} reflects the assumption of independence across $\bm{\Sigma}_{jj}$, each assigned an inverse-gamma prior. At the process level, each $\mc{W}_j(\cdot)$ is modeled as a univariate GP with its own mean function $\mc{X}^\top(\cdot)$. At the same time, its variability is controlled by $\bm{\Sigma}_{jj}$, and the shared range parameter $\phi$. This separate modeling framework is used in Section~\ref{sec:simulation_studies} of the main paper as a baseline for comparison with the proposed joint model.
\begin{figure}[]
\centering
\begin{tikzpicture}[node distance=8mm and 8mm, >=stealth]

\node[param] (Wj) {$W_j(\bm{s}_i)$};
\node[obs, right=of Wj] (Yj) {$Y_j(\bm{s}_i)$};

\node[param, above left=5mm and 5mm of Wj] (beta) {$\bm{\beta}_j$};
\node[param, below left=5mm and 5mm of Wj] (sigmajj) {$\Sigma_{jj}$};

\node[param, below=of Wj] (phi) {$\phi$};
\node[hyper, below=of phi] (bphi) {$b_{\phi}$};

\node[hyper, left=of beta] (Mj) {$\bm{M}_j$};
\node[hyper, below=of Mj] (V) {$\bm{V}$};

\node[hyper, left=of sigmajj] (a) {$a$};
\node[hyper, below=of a] (b) {$b$};


\draw[->] (Mj) -- (beta);
\draw[->] (V) -- (beta);
\draw[->] (sigmajj) -- (beta);

\draw[->] (a) -- (sigmajj);
\draw[->] (b) -- (sigmajj);

\draw[->] (bphi) -- (phi);

\draw[->] (beta) -- (Wj);
\draw[->] (sigmajj) -- (Wj);
\draw[->] (phi) -- (Wj);

\draw[->] (Wj) -- (Yj);

\node[
  draw,
  fit=(Wj)(Yj),
  inner sep=6pt,
  label=below right:{\normalsize $i = 1,\ldots,n$}
] (plate_i) {};

\node[
  draw,
  dashed,
  fit=(plate_i)(beta)(sigmajj)(phi),
  inner sep=8pt,
  label=below right:{\normalsize $j = 1,\ldots,q$}
] (plate_j) {};

\end{tikzpicture}
\caption{Directed acyclic graph for the separate model. The outer plate over $j$ explicitly encodes conditional independence across response types, with each response modeled using its univariate latent GP.}
\label{fig:dag_separate}
\end{figure}

\section{Additional simulation results}\label{sec:additional simulation results}

This section presents additional simulation results that complement the main findings in Section~\ref{sec:simulation_studies}. We report detailed posterior summaries of the spatial range parameter $\phi$ across different mixed-type response models, along with additional diagnostics for the Gaussian-Poisson model, including reductions in regression coefficient variability, predictive performance, and cross-response dependence.

\subsection{Choice of $m$ in Vecchia approximation}\label{sec:choice_of_m}
We first examine the choice of the neighbor size $m$ in the Vecchia approximation. Existing implementations in \texttt{R}, such as \texttt{GpGp} and \texttt{GPvecchia}, provide heuristic guidelines for selecting $m$. In the context of nearest neighbor Gaussian process models, \cite{Datta16} recommend choosing $m \in \{10, \ldots, 50\}$, based on minimizing the residual mean squared predictive error. For our proposed mixed-type multivariate spatial model, however, directly evaluating predictive loss across different values of $m$ is computationally infeasible. Instead, we assess the quality of the approximation through the covariance structure. In particular, \cite{schafer2021a} derive theoretical bounds on the Kullback-Leibler divergence between the true GP with full conditioning set and an approximated univariate GP when the covariance kernel is approximated using the Vecchia approximation; the authors consider different choices of $m$. Motivated by these results, we evaluate the approximation error using the relative Frobenius norm of the covariance matrix,
\[
\frac{\norm{\bm{K} - \widetilde{\bm{K}}(m)}_{F}}{\norm{\bm{K}}_{F}},
\]
where $\bm{K}$ denotes the true full covariance matrix and $\widetilde{\bm{K}}(m)$ denotes its Vecchia approximation with neighbor size $m$.
\begin{figure}[!h]
    \centering
    \centering
    \begin{subfigure}[t]{0.49\textwidth}
        \centering
        \includegraphics[width=\linewidth]{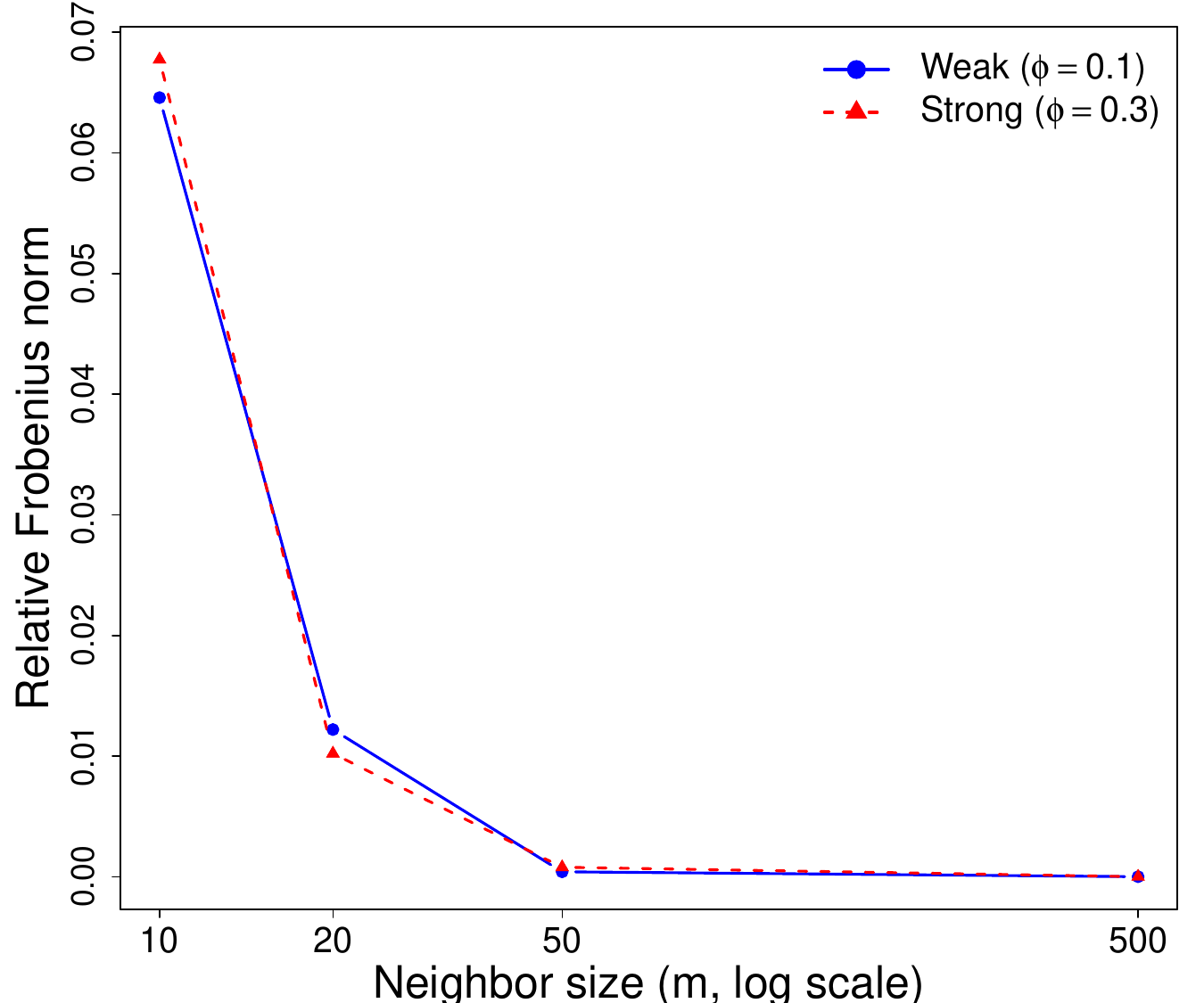}
    \end{subfigure}
    \begin{subfigure}[t]{0.49\textwidth}
        \centering
        \includegraphics[width=\linewidth]{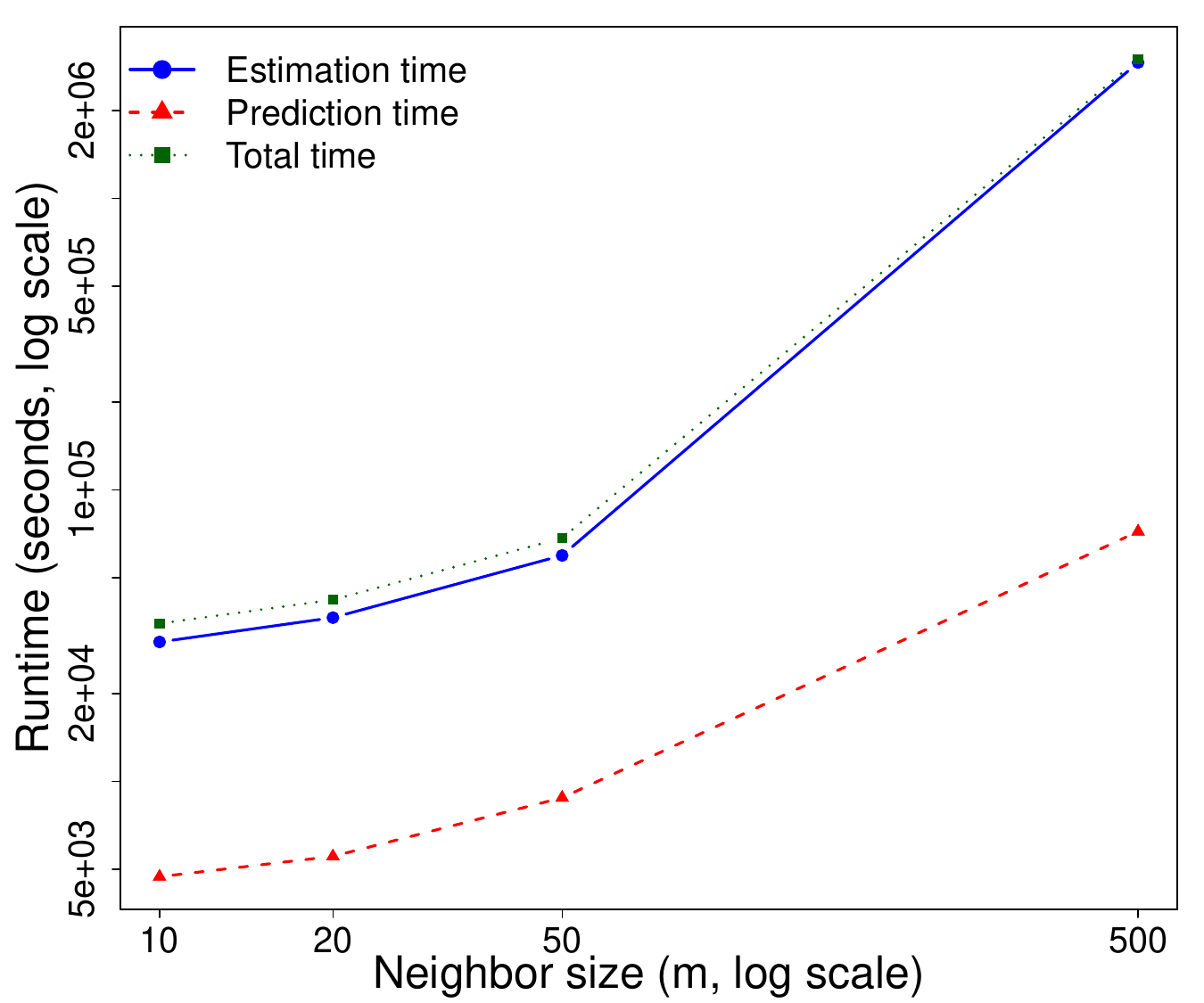}
    \end{subfigure}
    \caption{Elbow plots of relative Frobenius norms under weak ($\phi = 0.1$) and strong ($\phi = 0.3$) spatial dependence (left panel), and wall-clock times (in seconds) for estimation, prediction, and total computation (right panel), across different neighbor sizes $m = 10, 20, 50, 500$ in the Vecchia approximation for the Binomial--Gaussian--Poisson model with $n = 500$ locations for $10^4$ MCMC iterations.}
    \label{fig:vecchia_comparison}
\end{figure}
Since, for each MCMC iteration, $\phi$ and consequently, the large covariance matrix $\bm{K}$ are also updated. Thus, we have to choose the conditioning set size so that it well approximates our model within a feasible time in our complete workflow. From the elbow plot, we see that the relative Frobenius norm decreases rapidly to zero. On the other hand, the clock-wall time increases as $m$ increases. From the elbow plot in Figure~\ref{fig:vecchia_comparison}, we choose $m = 20$ as a reasonable choice to proceed with further analysis.

\begin{table}[!h]
\centering
\caption{Wall-clock times (in seconds) for pre-computation, estimation, and prediction over 100 MCMC iterations of Algorithm~\ref{alg:mixed_model_algorithm} under the Vecchia approximation ($m=20$) and full GP. Results for the full GP at $n = 2500$ are omitted due to prohibitive memory requirements.}
\label{tab:wall_time}
\renewcommand{\arraystretch}{1.1}
\begin{tabular}{ccccccc}
   \hline
 & \multicolumn{2}{c}{\text{Pre-computation}} & \multicolumn{2}{c}{\text{Estimation}} & \multicolumn{2}{c}{\text{Prediction}} \\[2pt]
\cline{2-3} \cline{4-5} \cline{6-7} 
 $n$ & \text{Vecchia} & \text{Full} & \text{Vecchia} & \text{Full} & \text{Vecchia} & \text{Full} \\[2pt]
\hline
100 & 0.04 & 0.08 & 0.75 & 1.15 & 0.04 & 0.05 \\ 
  500 & 0.16 & 1.59 & 3.13 & 128.64 & 0.05 & 0.13 \\ 
  2500 & 3.99 & 13.81 & 15.06 &  & 0.17 &  \\ 
   \hline
\end{tabular}
\end{table}
Table~\ref{tab:wall_time} summarizes the wall-clock time from the Vecchia approximation relative to the full GP. While pre-computation costs are comparable for small $n$, the estimation phase exhibits substantial speedups under the Vecchia approximation with $m = 20$, particularly as $n$ increases. For moderate sample sizes ($n=500$), the reduction in estimation time is over an order of magnitude, and for larger datasets ($n=2500$), the full GP becomes computationally infeasible. These results demonstrate that the Vecchia approximation enables scalable inference with minimal loss of accuracy, supporting our choice of $m=20$ as neighbor size in our data analysis in Section~\ref{sec:data_analysis} in the main article.

\subsection{Quality assessment of samplers}\label{sec:quality_of_samplers}
In the literature on spatial latent Gaussian process models, the Metropolis-Hastings algorithm and its variants are typically used to update $\bm{W}$. While easy to implement, they suffer from slow mixing in high dimensions and thus fail to properly explore the posterior landscape due to the strong correlation in the latent random effect $\bm{W}$. To assess sampler performance in our model, we consider four methods for updating $\bm{W}$, such as elliptical slice sampling with component-wise updates, elliptical slice sampling with joint updates, the preconditioned Crank–Nicolson (pCN) sampler with component-wise updates, and a random-walk Metropolis–Hastings sampler with joint updates. The comparison is carried out on a simulated Binomial–Gaussian–Poisson model with $n = 500$ locations, so that $\bm{W}$ is of dimension $500 \times 3$. Each sampler is run for $10^5$ iterations. We summarize performance using trace plots of the unnormalized posterior density and box plots of log-transformed effective sample sizes computed across $1500$ components. The trace plots in Figure~\ref{fig:comp_samplers} (left panel) indicate that the proposed sampler explores the posterior distribution more effectively than the competing methods. While a well-tuned pCN sampler can achieve comparable behavior with a component-wise elliptical slice sampler, such tuning requires additional effort, often based on information gathered during a warm-up phase. In contrast, our chosen sampler performs reliably without extensive tuning. This is further supported by consistently higher effective sample sizes as shown in Figure~\ref{fig:comp_samplers} (right panel), indicating improved sampling efficiency.
\begin{figure}[]
    \centering
    \centering
    \begin{subfigure}[t]{0.49\textwidth}
        \centering
        \includegraphics[width=\linewidth]{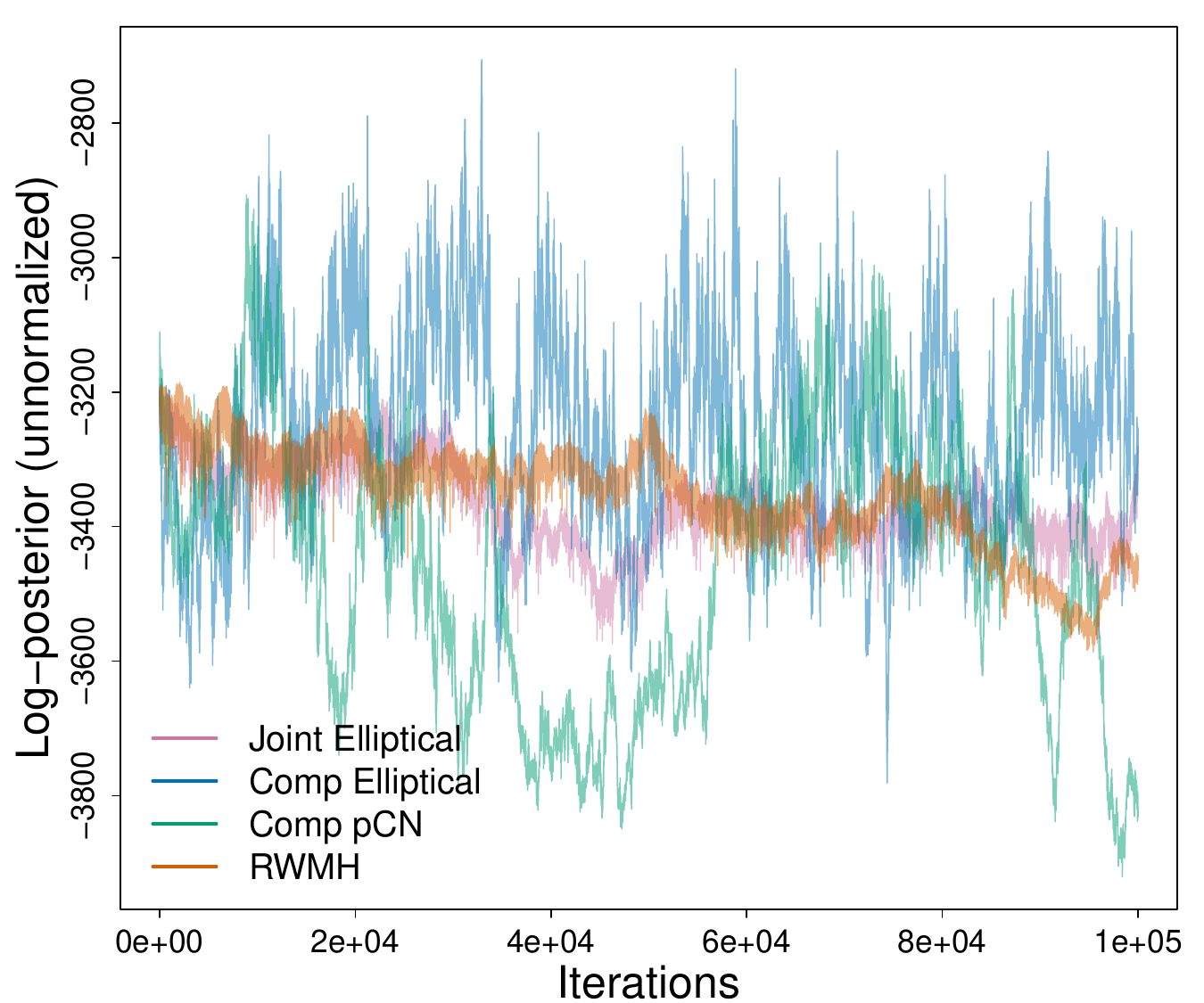}
    \end{subfigure}
    \begin{subfigure}[t]{0.49\textwidth}
        \centering
        \includegraphics[width=\linewidth]{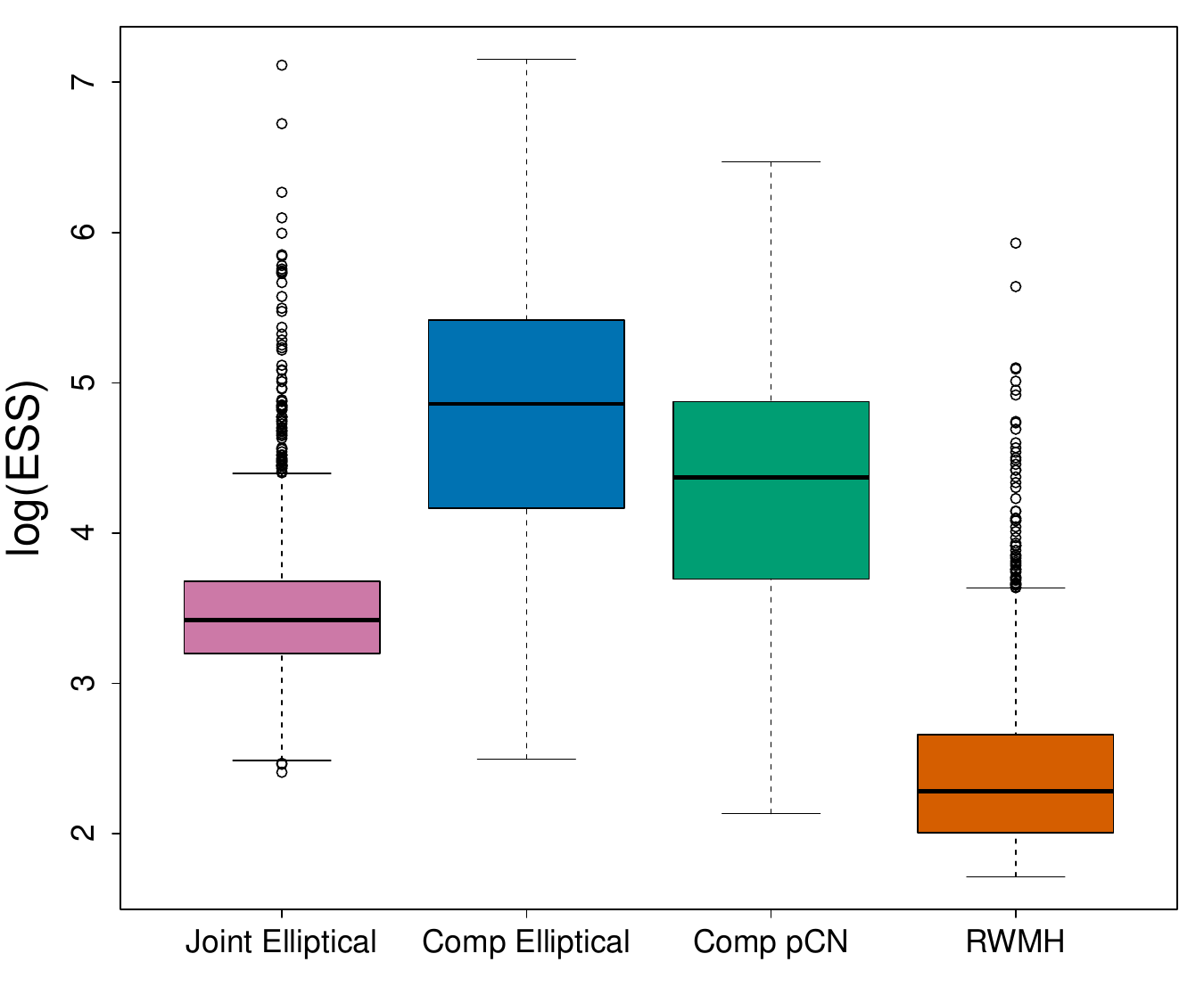}
    \end{subfigure}
    \caption{Traceplots of the unnormalized log-posterior (left panel) and boxplots of $\log(\mathrm{ESS})$ (right panel) of all components of $\bm{W}$ based on $10^5$ iterations, comparing joint elliptical slice sampling, component-wise elliptical slice sampling, component-wise preconditioned Crank--Nicolson (pCN), and joint random-walk Metropolis--Hastings samplers.}
    \label{fig:comp_samplers}
\end{figure}

\subsection{Simulation results for Gaussian-Poisson model}\label{sec:gp_simulations}

We provide a complementary analysis of Section~\ref{sec:simulation_studies} by providing a thorough simulation study of a Gaussian-Poisson model. We summarize cross-covariance estimation in Table~\ref{tab:crosscov_gp}, predictive performance in Table~\ref{tab:elpd_diff_gp}, and the variability of regression coefficients in Figure~\ref{fig:gp_beta_var}.

Table~\ref{tab:crosscov_gp} shows that the cross-covariance parameter $\bm{\Sigma}_{12}$ is well estimated under the joint model across all scenarios. When $\bm{\Sigma}$ is dependent, the estimated posterior mean moves closer to the true value  $1.735$ to $2.070$ for $\phi_0=0.3$ as the sample size increases from $n = 100$ to $n = 2500$, with a noticeable reduction in uncertainty reflected through tighter credible intervals. Coverage also improves from $0.80$ to $0.92$, indicating better estimation in larger samples. Under weak spatial correlation for $\phi_0=0.1$, the estimation remains stable with high coverage ($0.96$ and $0.94$). When $\bm{\Sigma}$ is independent, the estimated posterior mean centers around zero (e.g., $0.013$, $-0.006$), and the credible intervals shrink substantially as $n$ increases, while maintaining high coverage (up to $0.98$).
\begin{table}[!h]
\centering
\caption{Posterior estimation summary of $\bm{\Sigma}_{12}$ (posterior mean, 95\% posterior credible interval, and empirical coverage at the nominal 95\% level) across 50 replicated datasets under varying spatial correlation ($\phi_0$, the true value of $\phi$) and varying cross-covariance matrix ($\bm{\Sigma}^{(0)}$, the true value of $\bm{\Sigma}$) for Gaussian-Poisson model. We choose the diagonal entries of $\bm{\Sigma}^{(0)}$ to be $\bm{\Sigma}^{(0)}_{11} = 3$ and $\bm{\Sigma}^{(0)}_{22} = 2$.}
\label{tab:crosscov_gp}
\renewcommand{\arraystretch}{1}
\begin{tabular}{c c c c c}
\hline
$\bm{\Sigma}_{12}^{(0)}$ & $\phi_0$ & Posterior mean \text{\scriptsize(SE)} & Credible interval \text{\scriptsize(SE)} & Coverage \text{\scriptsize(SE)} \\[2pt]
\rowcolor{lightgray}
\multicolumn{5}{c}{Sample size: $n = 100$} \\ 
\multirow{2}{*}{$2.25$} & $0.3$ & $1.735\!\,\text{\scriptsize(0.081)}$ & $[0.824\!\,\text{\scriptsize(0.048)},\; 3.273\!\,\text{\scriptsize(0.127)}]$ & $0.800\!\,\text{\scriptsize(0.057)}$ \\
& $0.1$ & $2.488\!\,\text{\scriptsize(0.105)}$ & $[1.337\!\,\text{\scriptsize(0.051)},\; 4.799\!\,\text{\scriptsize(0.227)}]$ & $0.960\!\,\text{\scriptsize(0.028)}$ \\
\rowcolor{lightviolet}
\multicolumn{5}{c}{Sample size: $n = 2500$} \\ 
\multirow{2}{*}{$2.25$} & $0.3$ & $2.070\!\,\text{\scriptsize(0.065)}$ & $[1.266\!\,\text{\scriptsize(0.042)},\; 3.093\!\,\text{\scriptsize(0.067)}]$ & $0.920\!\,\text{\scriptsize(0.039)}$ \\
& $0.1$ & $2.467\!\,\text{\scriptsize(0.066)}$ & $[1.810\!\,\text{\scriptsize(0.035)},\; 3.649\!\,\text{\scriptsize(0.139)}]$ & $0.940\!\,\text{\scriptsize(0.034)}$ \\
\rowcolor{lightgray}
\multicolumn{5}{c}{Sample size: $n = 100$} \\ 
\multirow{2}{*}{$0$} & $0.3$ & $0.013\!\,\text{\scriptsize(0.065)}$ & $[-0.754\!\,\text{\scriptsize(0.088)},\; 0.812\!\,\text{\scriptsize(0.083)}]$ & $0.840\!\,\text{\scriptsize(0.052)}$ \\
& $0.1$ & $0.066\!\,\text{\scriptsize(0.056)}$ & $[-0.742\!\,\text{\scriptsize(0.074)},\; 0.892\!\,\text{\scriptsize(0.063)}]$ & $0.940\!\,\text{\scriptsize(0.034)}$ \\
\rowcolor{lightviolet}
\multicolumn{5}{c}{Sample size: $n = 2500$} \\ 
\multirow{2}{*}{$0$} & $0.3$ & $-0.006\!\,\text{\scriptsize(0.019)}$ & $[-0.192\!\,\text{\scriptsize(0.023)},\; 0.186\!\,\text{\scriptsize(0.020)}]$ & $0.900\!\,\text{\scriptsize(0.043)}$ \\
& $0.1$ & $-0.005\!\,\text{\scriptsize(0.014)}$ & $[-0.183\!\,\text{\scriptsize(0.016)},\; 0.175\!\,\text{\scriptsize(0.016)}]$ & $0.980\!\,\text{\scriptsize(0.020)}$ \\
\hline
\end{tabular}
\end{table}

In contrast to the models discussed in the main article, the Gaussian-Poisson model exhibits more nuanced behavior in posterior uncertainty for $\bm{B}$, as illustrated in the left panel of Figure~\ref{fig:gp_beta_var}. When spatial correlation is strong, the joint model achieves a clear reduction in posterior variance compared to the separate model (seen through the yellow boxplots in Figure~\ref{fig:gp_beta_var}), indicating effective borrowing of strength across responses for both $\bm{\Sigma}$ being dependent and independent scenarios. In contrast, under weak spatial correlation with the true value at $\phi = 0.1$, the variability remains comparable between the two modeling approaches (seen through the blue boxplots in Figure~\ref{fig:gp_beta_var}), suggesting that the gains from joint modeling are limited when dependence is weak.
\begin{figure}[ht]
    \centering
    \includegraphics[width=\linewidth]{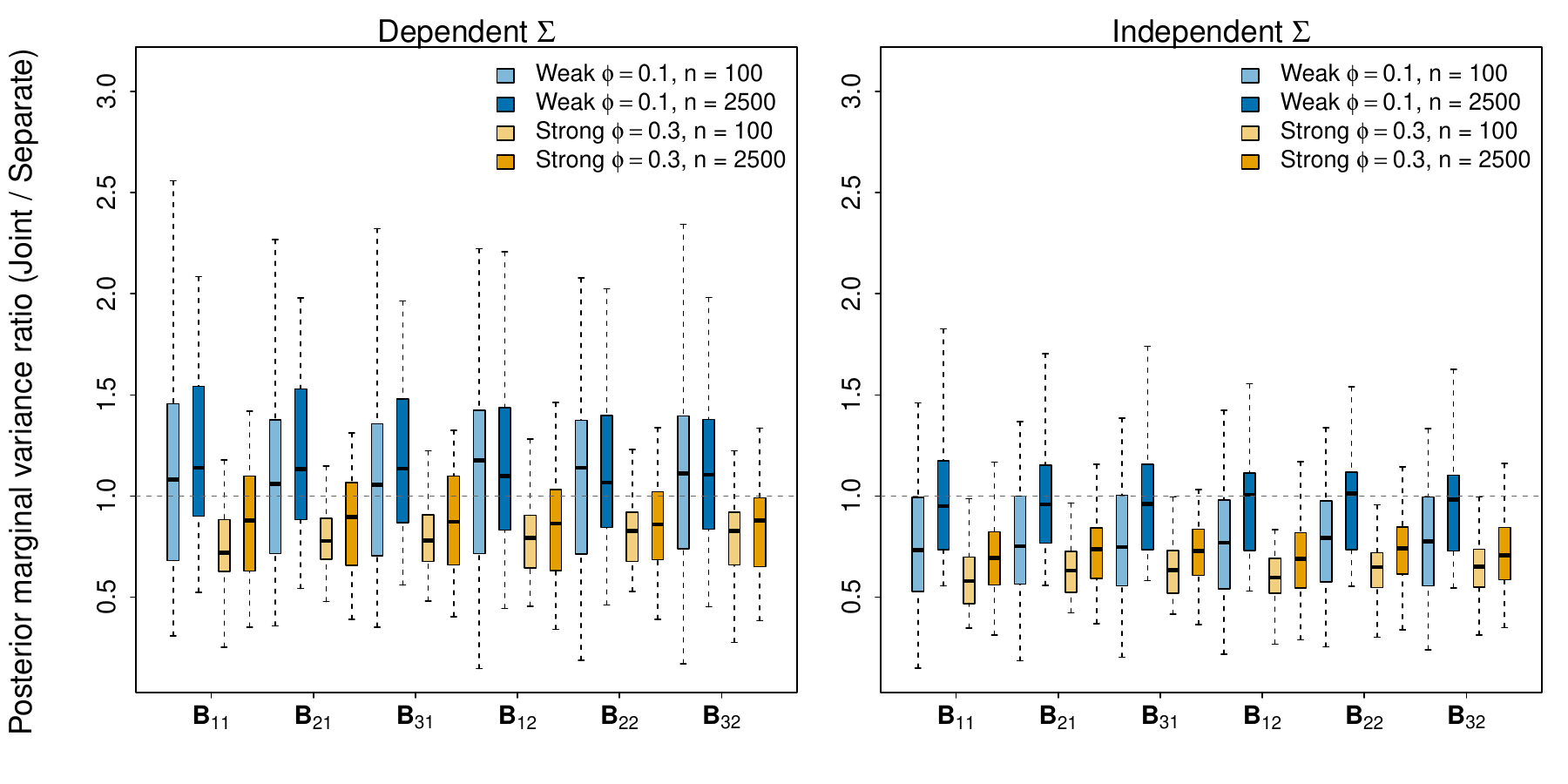}
    \caption{Efficacy in variance reduction in regression coefficient matrix $\bm{B}$ components in the joint model in comparison to the separate model for the Gaussian-Poisson case study on 50 replicated datasets.}
    \label{fig:gp_beta_var}
\end{figure}

These findings are further supported by predictive performance, as demonstrated in Table~\ref{tab:elpd_diff_gp}. When $\bm{\Sigma}$ is dependent, the joint model consistently outperforms the separate model, with positive ELJPD differences (e.g., $4.601$ and $10.913$ for $n=100$, increasing further for $n=2500$), and the gains are particularly pronounced under weak spatial correlation. When $\bm{\Sigma}$ is independent, however, the ELJPD differences become negative, indicating that the joint model offers no clear advantage in the absence of cross-dependence and may introduce slight inefficiency.
\begin{table}[!h]
\centering
\caption{ELJPD differences of joint and separate models across 50 replications for Gaussian-Poisson response types.}
\label{tab:elpd_diff_gp}
\renewcommand{\arraystretch}{1}
\small
\begin{tabular}{c c c c}
\hline
$\textrm{vec}(\bm{\Sigma}^{(0)})$ & $\phi_0$ & \cellcolor{lightgray}{$n = 100$} & \cellcolor{lightviolet}{$n = 2500$} \\ [2pt]
\multirow{2}{*}{$[3,\tfrac{9}{4}, \tfrac{9}{4}, 2]^{\top}$} & $0.3$ & $\hphantom{-}4.601\!\,\text{\scriptsize(0.697)}$ & $\hphantom{-}4.269\!\,\text{\scriptsize(2.277)}$ \\
& $0.1$ & $\hphantom{-}10.913\!\,\text{\scriptsize(1.240)}$ & $\hphantom{-}27.863\!\,\text{\scriptsize(4.152)}$ \\
\hline
\multirow{2}{*}{$[3,0,0,2]^{\top}$} & $0.3$ & $-0.335\!\,\text{\scriptsize(0.465)}$ & $-3.783\!\,\text{\scriptsize(1.950)}$ \\
& $0.1$ & $-0.262\!\,\text{\scriptsize(0.875)}$ & $-9.444\!\,\text{\scriptsize(5.179)}$ \\
\hline
\end{tabular}
\end{table}

Despite accurate recovery of the cross-covariance parameter $\bm{\Sigma}_{12}$ (Table~\ref{tab:crosscov_gp}), the inherent variability in count responses leads to comparable uncertainty levels between joint and separate models. Nevertheless, the joint model continues to provide clear advantages in overall inference and predictive performance, supporting the use of our methodology in these settings.

\subsection{Estimation of Mat\'{e}rn range parameter}\label{sec:phi_summaries}
We report posterior inference for the spatial range parameter $\phi$ across mixed-type models introduced in Section~\ref{sec:simulation_studies}. As discussed in the main article, estimating $\phi$ is inherently challenging under fixed-domain asymptotics \citep{Zhang2004inconsistent}. Nevertheless, the joint model consistently improves uncertainty quantification by borrowing strength across response types, leading to more reliable coverage compared to separate modeling.

For the Binomial-Gaussian settings on $50$ replicated datasets, Table~\ref{tab:phi_bg} shows that when $\bm{\Sigma}$ is dependent with strong spatial correlation ($\phi_0=0.3$), the joint model yields posterior mean $0.198$ for $n=100$ and $0.254$ for $n=2500$, compared to $0.375$ and $0.347$ under the separate model, indicating reduced upward bias. The bias is significantly reduced when the dimension is high. Under weak correlation ($\phi_0=0.1$), the average posterior means $0.098$ and $0.101$ remain close to the truth. In contrast, the separate model overestimates $0.194$ in a low-dimensional setup while showing an improved estimate of $0.117$ at high dimension. When the true model is dependent with strong correlation being present, on average, the joint model produces wider credible intervals (e.g., $[0.078,0.390]$ in comparison to the separate model $[0.242,0.458]$) for $n = 100$, yielding substantially improved coverage (up to $0.98$ vs $0.68$). At the same time, it is narrower for the $n = 2500$ scenario, but the coverage reduces for both the joint and the separate models. When the off-diagonal entries of $\bm{\Sigma}$ are zeros, the same pattern persists. However, differences in posterior means narrow slightly for large $n$ (e.g., $0.338$ vs $0.343$), with the joint model still achieving better coverage (e.g., $0.78$ vs $0.66$).
\begin{table}[t]
\centering
\caption{Posterior summaries of $\phi$ (posterior mean, 95\% credible interval, and empirical coverage at the nominal 95\% level) across 50 replications for joint and separate models for Binomial-Gaussian settings.}
\label{tab:phi_bg}
\renewcommand{\arraystretch}{1}
\setlength{\tabcolsep}{5pt}   
\small
\begin{tabular}{c c c c c c}
\hline
$\bm{\Sigma}^{(0)}$ & $\phi_0$ & Model & Posterior mean \text{\scriptsize(SE)} & Credible interval \text{\scriptsize(SE)} & Coverage \text{\scriptsize(SE)} \\[2pt]
\rowcolor{lightgray}
\multicolumn{6}{c}{{Sample size: $n = 100$}} \\
\multirow{2}{*}{$\begin{bmatrix}
9 & 5 \\
5 & 3
\end{bmatrix}$} & $0.3$ & Joint & $0.198\!\,\text{\scriptsize(0.007)}$ & $[0.078\!\,\text{\scriptsize(0.005)},\; 0.390\!\,\text{\scriptsize(0.008)}]$ & $\mathbf{0.96}\!\,\text{\scriptsize(0.028)}$ \\
& & Separate & $0.375\!\,\text{\scriptsize(0.010)}$ & $[0.242\!\,\text{\scriptsize(0.012)},\; 0.458\!\,\text{\scriptsize(0.004)}]$ & $0.68\!\,\text{\scriptsize(0.067)}$ \\
& $0.1$ & Joint & $0.098\!\,\text{\scriptsize(0.006)}$ & $[0.029\!\,\text{\scriptsize(0.004)},\; 0.204\!\,\text{\scriptsize(0.010)}]$ & $\mathbf{0.98}\!\,\text{\scriptsize(0.020)}$ \\
& & Separate & $0.194\!\,\text{\scriptsize(0.012)}$ & $[0.075\!\,\text{\scriptsize(0.007)},\; 0.348\!\,\text{\scriptsize(0.016)}]$ & $0.74\!\,\text{\scriptsize(0.063)}$ \\
\hline
\multirow{2}{*}{$\begin{bmatrix}
9 & 0 \\
0 & 3
\end{bmatrix}$} & $0.3$ & Joint & $0.263\!\,\text{\scriptsize(0.008)}$ & $[0.114\!\,\text{\scriptsize(0.007)},\; 0.440\!\,\text{\scriptsize(0.005)}]$ & $\mathbf{0.98}\!\,\text{\scriptsize(0.020)}$ \\
& & Separate & $0.387\!\,\text{\scriptsize(0.006)}$ & $[0.238\!\,\text{\scriptsize(0.009)},\; 0.464\!\,\text{\scriptsize(0.002)}]$ & $0.86\!\,\text{\scriptsize(0.050)}$ \\
& $0.1$ & Joint & $0.142\!\,\text{\scriptsize(0.008)}$ & $[0.045\!\,\text{\scriptsize(0.005)},\; 0.313\!\,\text{\scriptsize(0.015)}]$ & $\mathbf{0.88}\!\,\text{\scriptsize(0.046)}$ \\
& & Separate & $0.188\!\,\text{\scriptsize(0.013)}$ & $[0.073\!\,\text{\scriptsize(0.008)},\; 0.337\!\,\text{\scriptsize(0.015)}]$ & $0.72\!\,\text{\scriptsize(0.064)}$ \\
\rowcolor{lightviolet}
\multicolumn{6}{c}{{Sample size: $n = 2500$}} \\
\multirow{2}{*}{$\begin{bmatrix}
9 & 5 \\
5 & 3
\end{bmatrix}$} & $0.3$ & Joint & $0.254\!\,\text{\scriptsize(0.006)}$ & $[0.198\!\,\text{\scriptsize(0.005)},\; 0.327\!\,\text{\scriptsize(0.007)}]$ & $\mathbf{0.76}\!\,\text{\scriptsize(0.061)}$ \\
& & Separate & $0.347\!\,\text{\scriptsize(0.004)}$ & $[0.292\!\,\text{\scriptsize(0.002)},\; 0.402\!\,\text{\scriptsize(0.006)}]$ & $0.70\!\,\text{\scriptsize(0.065)}$ \\
& $0.1$ & Joint & $0.101\!\,\text{\scriptsize(0.002)}$ & $[0.083\!\,\text{\scriptsize(0.001)},\; 0.123\!\,\text{\scriptsize(0.003)}]$ & $\mathbf{0.88}\!\,\text{\scriptsize(0.046)}$ \\
& & Separate & $0.117\!\,\text{\scriptsize(0.001)}$ & $[0.098\!\,\text{\scriptsize(0.001)},\; 0.136\!\,\text{\scriptsize(0.002)}]$ & $0.62\!\,\text{\scriptsize(0.069)}$ \\
\hline
\multirow{2}{*}{$\begin{bmatrix}
9 & 0 \\
0 & 3
\end{bmatrix}$} & $0.3$ & Joint & $0.338\!\,\text{\scriptsize(0.006)}$ & $[0.264\!\,\text{\scriptsize(0.005)},\; 0.416\!\,\text{\scriptsize(0.007)}]$ & $\mathbf{0.88}\!\,\text{\scriptsize(0.046)}$ \\
& & Separate & $0.343\!\,\text{\scriptsize(0.005)}$ & $[0.289\!\,\text{\scriptsize(0.003)},\; 0.398\!\,\text{\scriptsize(0.007)}]$ & $0.70\!\,\text{\scriptsize(0.065)}$ \\
& $0.1$ & Joint & $0.133\!\,\text{\scriptsize(0.003)}$ & $[0.095\!\,\text{\scriptsize(0.001)},\; 0.179\!\,\text{\scriptsize(0.006)}]$ & $\mathbf{0.78}\!\,\text{\scriptsize(0.068)}$ \\
& & Separate & $0.116\!\,\text{\scriptsize(0.001)}$ & $[0.098\!\,\text{\scriptsize(0.001)},\; 0.133\!\,\text{\scriptsize(0.002)}]$ & $0.66\!\,\text{\scriptsize(0.068)}$ \\
\hline
\end{tabular}
\end{table}

For the Binomial-Poisson settings (see Table~\ref{tab:phi_bp}), under nonzero off-diagonal entries of $\bm{\Sigma}$ and strong spatial correlation $\phi = 0.3$, both the joint model and the separate model produce biased posterior means for $n=100$. At the same time, the biases reduce when $n=2500$, for both the joint and separate models, with average posterior means of $0.261$ and $0.327$, respectively. For weak correlation, the joint model produces nearly unbiased estimates of $\phi$ ($0.097$ and $0.098$) in both low and high-dimensional regimes. In comparison, the separate model remains upwardly biased in the low-dimensional scenario ($0.167$), yielding an improved estimate of $0.108$ at high-dimension with $n = 2500$. The credible intervals for the joint model are moderately wider, leading to near-nominal or perfect coverage (e.g., $1.00$ vs $0.82$) in the presence of weak spatial correlation with $\phi = 0.1$ for $n=100$ and for $n = 2500$, respectively. However, for long-range spatial correlation ($\phi = 0.3$), the truth is not captured with high coverage; the opposite is true when the off-diagonal entries of $\bm{\Sigma}$ are zeros. Posterior means for the two models are $0.350$ and $0.325$ for large $n = 2500$. Still, the joint model provides slightly higher coverage ($0.82$ vs $0.76$) with only a modest increase in the credible interval width.
\begin{table}[t]
\centering
\caption{Posterior summaries of $\phi$ (posterior mean, 95\% credible interval, and empirical coverage at the nominal 95\% level) across 50 replications for joint and separate models for Binomial-Poisson settings.}
\label{tab:phi_bp}
\renewcommand{\arraystretch}{1}
\setlength{\tabcolsep}{5pt}   
\small
\begin{tabular}{c c c c c c}
\hline
$\bm{\Sigma}^{(0)}$ & $\phi_0$ & Model & Posterior mean \text{\scriptsize(SE)} & Credible interval \text{\scriptsize(SE)} & Coverage \text{\scriptsize(SE)} \\[2pt]
\rowcolor{lightgray}
\multicolumn{6}{c}{{Sample size: $n = 100$}} \\
\multirow{2}{*}{$\begin{bmatrix}
9 & 4 \\
4 & 2
\end{bmatrix}$} & $0.3$ & Joint & $0.212\!\,\text{\scriptsize(0.008)}$ & $[0.090\!\,\text{\scriptsize(0.005)},\; 0.393\!\,\text{\scriptsize(0.009)}]$ & $\mathbf{0.88}\!\,\text{\scriptsize(0.046)}$ \\
& & Separate & $0.357\!\,\text{\scriptsize(0.010)}$ & $[0.224\!\,\text{\scriptsize(0.013)},\; 0.454\!\,\text{\scriptsize(0.004)}]$ & $0.80\!\,\text{\scriptsize(0.057)}$ \\
& $0.1$ & Joint & $0.097\!\,\text{\scriptsize(0.005)}$ & $[0.028\!\,\text{\scriptsize(0.004)},\; 0.205\!\,\text{\scriptsize(0.010)}]$ & $\mathbf{1.00}\!\,\text{\scriptsize(0.000)}$ \\
& & Separate & $0.167\!\,\text{\scriptsize(0.012)}$ & $[0.063\!\,\text{\scriptsize(0.008)},\; 0.299\!\,\text{\scriptsize(0.014)}]$ & $0.82\!\,\text{\scriptsize(0.055)}$ \\
\hline
\multirow{2}{*}{$\begin{bmatrix}
9 & 0 \\
0 & 2
\end{bmatrix}$} & $0.3$ & Joint & $0.256\!\,\text{\scriptsize(0.008)}$ & $[0.110\!\,\text{\scriptsize(0.006)},\; 0.434\!\,\text{\scriptsize(0.006)}]$ & $\mathbf{0.96}\!\,\text{\scriptsize(0.028)}$ \\
& & Separate & $0.369\!\,\text{\scriptsize(0.009)}$ & $[0.230\!\,\text{\scriptsize(0.012)},\; 0.459\!\,\text{\scriptsize(0.002)}]$ & $0.76\!\,\text{\scriptsize(0.061)}$ \\
& $0.1$ & Joint & $0.132\!\,\text{\scriptsize(0.008)}$ & $[0.041\!\,\text{\scriptsize(0.005)},\; 0.299\!\,\text{\scriptsize(0.015)}]$ & $\mathbf{0.92}\!\,\text{\scriptsize(0.039)}$ \\
& & Separate & $0.183\!\,\text{\scriptsize(0.013)}$ & $[0.075\!\,\text{\scriptsize(0.010)},\; 0.318\!\,\text{\scriptsize(0.015)}]$ & $0.78\!\,\text{\scriptsize(0.059)}$ \\
\rowcolor{lightviolet}
\multicolumn{6}{c}{{Sample size: $n = 2500$}} \\
\multirow{2}{*}{$\begin{bmatrix}
9 & 4 \\
4 & 2
\end{bmatrix}$} & $0.3$ & Joint & $0.261\!\,\text{\scriptsize(0.006)}$ & $[0.214\!\,\text{\scriptsize(0.006)},\; 0.325\!\,\text{\scriptsize(0.006)}]$ & $\mathbf{0.88}\!\,\text{\scriptsize(0.046)}$\\
& & Separate & $0.327\!\,\text{\scriptsize(0.003)}$ & $[0.289\!\,\text{\scriptsize(0.002)},\; 0.368\!\,\text{\scriptsize(0.005)}]$ & $0.78\!\,\text{\scriptsize(0.059)}$  \\
& $0.1$ & Joint & $0.098\!\,\text{\scriptsize(0.002)}$ & $[0.083\!\,\text{\scriptsize(0.002)},\; 0.118\!\,\text{\scriptsize(0.002)}]$ & $\mathbf{0.82}\!\,\text{\scriptsize(0.055)}$ \\
& & Separate & $0.108\!\,\text{\scriptsize(0.001)}$ & $[0.097\!\,\text{\scriptsize(0.001)},\; 0.121\!\,\text{\scriptsize(0.001)}]$ & $0.76\!\,\text{\scriptsize(0.061)}$ \\
\hline
\multirow{2}{*}{$\begin{bmatrix}
9 & 0 \\
0 & 2
\end{bmatrix}$} & $0.3$ & Joint & $0.350\!\,\text{\scriptsize(0.006)}$ & $[0.271\!\,\text{\scriptsize(0.004)},\; 0.430\!\,\text{\scriptsize(0.006)}]$ & $\mathbf{0.88}\!\,\text{\scriptsize(0.046)}$ \\
& & Separate & $0.325\!\,\text{\scriptsize(0.004)}$ & $[0.287\!\,\text{\scriptsize(0.002)},\; 0.364\!\,\text{\scriptsize(0.005)}]$ & $0.84\!\,\text{\scriptsize(0.052)}$ \\
& $0.1$ & Joint & $0.129\!\,\text{\scriptsize(0.002)}$ & $[0.096\!\,\text{\scriptsize(0.001)},\; 0.167\!\,\text{\scriptsize(0.005)}]$ & $\mathbf{0.82}\!\,\text{\scriptsize(0.055)}$ \\
& & Separate & $0.109\!\,\text{\scriptsize(0.001)}$ & $[0.097\!\,\text{\scriptsize(0.001)},\; 0.121\!\,\text{\scriptsize(0.002)}]$ & $0.76\!\,\text{\scriptsize(0.061)}$ \\
\hline
\end{tabular}
\end{table}

For the Gaussian-Poisson setting, a similar pattern is observed, with the joint model providing more stable inference across all scenarios. When off-diagonal entries of $\bm{\Sigma}$ are nonzero and spatial correlation is strong ($\phi_0 = 0.3$), the separate model shows clear upward bias in the posterior mean ($0.376$ for $n=100$ and $0.334$ for $n=2500$). In contrast, the joint model reduces this bias ($0.262$ and $0.292$), though it tends to underestimate in smaller samples. Under weak correlation ($\phi_0 = 0.1$), the joint model remains reasonably close to the truth ($0.136$, $0.116$), while the separate model is less consistent, slightly overestimates for $n=100$ ($0.148$), and becomes nearly unbiased for $n=2500$ ($0.110$). 
This improvement in point estimation is accompanied by wider credible intervals under the joint model (for instance, $[0.116,0.441]$ compared to $[0.254,0.455]$ when $n=100$), which leads to a clear gain in coverage (up to $1.00$ vs $0.68$). When off-diagonal entries of $\bm{\Sigma}$ are zeros, the same trend continues: for strong correlation, the joint model remains closer to the true value ($0.244$ and $0.259$) than the separate model ($0.379$ and $0.330$). In contrast, for weak correlation, both models become comparable as $n$ increases (e.g., $0.108$ vs $0.110$ for $n=2500$). Even in this setting, the joint model consistently achieves higher coverage (e.g., $0.98$ vs $0.72$), reflecting better-calibrated uncertainty despite only moderately wider intervals.
\begin{table}[t]
\centering
\caption{Posterior summaries of $\phi$ (posterior mean, 95\% credible interval, and empirical coverage at the nominal 95\% level) across 50 replications for joint and separate models for Gaussian-Poisson settings.}
\label{tab:phi_gp}
\renewcommand{\arraystretch}{1}
\setlength{\tabcolsep}{5pt}   
\small
\begin{tabular}{c c c c c c}
\hline
$\bm{\Sigma}^{(0)}$ & $\phi_0$ & Model & Posterior mean \text{\scriptsize(SE)} & Credible interval \text{\scriptsize(SE)} & Coverage \text{\scriptsize(SE)} \\[2pt]
\rowcolor{lightgray}
\multicolumn{6}{c}{{Sample size: $n = 100$}} \\[2pt]
\multirow{2}{*}{$\begin{bmatrix}
3 & \tfrac{9}{4} \\
\tfrac{9}{4} & 2
\end{bmatrix}$} & $0.3$ & Joint & $0.262\!\,\text{\scriptsize(0.008)}$ & $[0.116\!\,\text{\scriptsize(0.006)},\; 0.441\!\,\text{\scriptsize(0.005)}]$ & $\mathbf{1.00}\!\,\text{\scriptsize(0.000)}$ \\
& & Separate & $0.376\!\,\text{\scriptsize(0.009)}$ & $[0.254\!\,\text{\scriptsize(0.010)},\; 0.455\!\,\text{\scriptsize(0.005)}]$ & $0.68\!\,\text{\scriptsize(0.067)}$ \\
& $0.1$ & Joint & $0.136\!\,\text{\scriptsize(0.008)}$ & $[0.048\!\,\text{\scriptsize(0.005)},\; 0.309\!\,\text{\scriptsize(0.015)}]$ & $\mathbf{0.92}\!\,\text{\scriptsize(0.039)}$ \\
& & Separate & $0.148\!\,\text{\scriptsize(0.010)}$ & $[0.069\!\,\text{\scriptsize(0.006)},\; 0.248\!\,\text{\scriptsize(0.013)}]$ & $0.78\!\,\text{\scriptsize(0.059)}$ \\
\hline
\multirow{2}{*}{$\begin{bmatrix}
3 & 0 \\
0 & 2
\end{bmatrix}$} & $0.3$ & Joint & $0.244\!\,\text{\scriptsize(0.007)}$ & $[0.119\!\,\text{\scriptsize(0.005)},\; 0.421\!\,\text{\scriptsize(0.006)}]$ & $\mathbf{0.98}\!\,\text{\scriptsize(0.020)}$ \\
& & Separate & $0.379\!\,\text{\scriptsize(0.009)}$ & $[0.258\!\,\text{\scriptsize(0.011)},\; 0.457\!\,\text{\scriptsize(0.003)}]$ & $0.72\!\,\text{\scriptsize(0.064)}$ \\
& $0.1$ & Joint & $0.111\!\,\text{\scriptsize(0.006)}$ & $[0.047\!\,\text{\scriptsize(0.004)},\; 0.216\!\,\text{\scriptsize(0.011)}]$ & $\mathbf{0.98}\!\,\text{\scriptsize(0.020)}$ \\
& & Separate & $0.158\!\,\text{\scriptsize(0.010)}$ & $[0.077\!\,\text{\scriptsize(0.006)},\; 0.259\!\,\text{\scriptsize(0.013)}]$ & $0.82\!\,\text{\scriptsize(0.055)}$ \\[2pt]
\rowcolor{lightviolet}
\multicolumn{6}{c}{{Sample size: $n = 2500$}} \\[2pt]
\multirow{2}{*}{$\begin{bmatrix}
3 & \tfrac{9}{4} \\
\tfrac{9}{4} & 2
\end{bmatrix}$} & $0.3$ & Joint & $0.292\!\,\text{\scriptsize(0.008)}$ & $[0.178\!\,\text{\scriptsize(0.005)},\; 0.431\!\,\text{\scriptsize(0.007)}]$ & $\mathbf{0.94}\!\,\text{\scriptsize(0.034)}$ \\
& & Separate & $0.334\!\,\text{\scriptsize(0.004)}$ & $[0.293\!\,\text{\scriptsize(0.002)},\; 0.374\!\,\text{\scriptsize(0.006)}]$ & $0.70\!\,\text{\scriptsize(0.065)}$ \\
& $0.1$ & Joint & $0.116\!\,\text{\scriptsize(0.003)}$ & $[0.083\!\,\text{\scriptsize(0.002)},\; 0.175\!\,\text{\scriptsize(0.007)}]$ & $\mathbf{0.96}\!\,\text{\scriptsize(0.028)}$ \\
& & Separate & $0.110\!\,\text{\scriptsize(0.001)}$ & $[0.098\!\,\text{\scriptsize(0.001)},\; 0.122\!\,\text{\scriptsize(0.001)}]$ & $0.66\!\,\text{\scriptsize(0.068)}$ \\
\hline
\multirow{2}{*}{$\begin{bmatrix}
3 & 0 \\
0 & 2
\end{bmatrix}$} & $0.3$ & Joint & $0.259\!\,\text{\scriptsize(0.006)}$ & $[0.165\!\,\text{\scriptsize(0.004)},\; 0.402\!\,\text{\scriptsize(0.008)}]$ & $\mathbf{0.92}\!\,\text{\scriptsize(0.039)}$ \\
& & Separate & $0.330\!\,\text{\scriptsize(0.003)}$ & $[0.292\!\,\text{\scriptsize(0.002)},\; 0.367\!\,\text{\scriptsize(0.005)}]$ & $0.70\!\,\text{\scriptsize(0.065)}$ \\
& $0.1$ & Joint & $0.108\!\,\text{\scriptsize(0.002)}$ & $[0.082\!\,\text{\scriptsize(0.001)},\; 0.150\!\,\text{\scriptsize(0.004)}]$ & $\mathbf{0.98}\!\,\text{\scriptsize(0.020)}$ \\
& & Separate & $0.110\!\,\text{\scriptsize(0.001)}$ & $[0.098\!\,\text{\scriptsize(0.000)},\; 0.122\!\,\text{\scriptsize(0.001)}]$ & $0.72\!\,\text{\scriptsize(0.064)}$ \\
\hline
\end{tabular}
\end{table}

Lastly, we discuss the key insights of Table~\ref{tab:phi_bgp} for the trivariate response model. When off-diagonal entries of $\bm{\Sigma}$ are nonzero, and spatial correlation is also strong, the joint model substantially reduces bias (e.g., $0.183$ vs $0.363$ for $n=100$, $0.228$ vs $0.336$ for $n=2500$). Under weak correlation, the posterior mean estimates of the joint model ($0.086$, $0.096$) remain close to $\phi_0$, while the separate model overestimates ($0.140$, $0.112$) for $n = 100$ and $n = 2500$, respectively. The wider credible intervals under joint modeling (e.g., $[0.086,0.344]$ vs $[0.276,0.470]$) result in significantly improved coverage (e.g., $0.96$ vs $0.64$) for the joint model compared to the separate model. When off-diagonal entries of $\bm{\Sigma}$ are zeros, posterior means across models become comparable for large $n$ (e.g., $0.323$ vs $0.333$). Still, the joint model maintains better-calibrated credible intervals with coverage of $0.84$ for $n = 100$ and $0.90$ for $n = 2500$, vs $0.66$ for $n = 100$ and $0.72$ for $n = 2500$.
\begin{table}[t]
\centering
\caption{Posterior summaries of $\phi$ (posterior mean, 95\% credible interval, and empirical coverage at the nominal 95\% level) across 50 replications for joint and separate models for Binomial-Gaussian-Poisson settings.}
\label{tab:phi_bgp}
\renewcommand{\arraystretch}{1}
\setlength{\tabcolsep}{5pt}   
\small
\begin{tabular}{@{}c c c c c c@{}}
\hline
$\bm{\Sigma}^{(0)}$ & $\phi_0$ & Model & Posterior mean \text{\scriptsize(SE)} & Credible interval \text{\scriptsize(SE)} & Coverage \text{\scriptsize(SE)} \\[2pt]
\rowcolor{lightgray}
\multicolumn{6}{c}{Sample size: $n = 100$} \\

\multirow{2}{*}{$\!\begin{bmatrix}
9 & 5 & 4 \\
5 & 3 & \tfrac{9}{4} \\
4 & \tfrac{9}{4} & 2
\end{bmatrix}\!$}
& $0.3$ & Joint & $0.183\!\,\text{\scriptsize(0.007)}$ & $[0.086\!\,\text{\scriptsize(0.004)},\; 0.344\!\,\text{\scriptsize(0.010)}]$ & $\mathbf{0.78}\!\,\text{\scriptsize(0.067)}$ \\
& & Separate & $0.363\!\,\text{\scriptsize(0.004)}$ & $[0.276\!\,\text{\scriptsize(0.001)},\; 0.470\!\,\text{\scriptsize(0.001)}]$ & $0.64\!\,\text{\scriptsize(0.044)}$ \\

& $0.1$ & Joint & $0.086\!\,\text{\scriptsize(0.005)}$ & $[0.030\!\,\text{\scriptsize(0.004)},\; 0.165\!\,\text{\scriptsize(0.008)}]$ & $\mathbf{0.96}\!\,\text{\scriptsize(0.028)}$ \\
& & Separate & $0.140\!\,\text{\scriptsize(0.003)}$ & $[0.092\!\,\text{\scriptsize(0.001)},\; 0.267\!\,\text{\scriptsize(0.001)}]$ & $0.72\!\,\text{\scriptsize(0.005)}$ \\

\hline

\multirow{2}{*}{$\!\begin{bmatrix}
9 & 0 & 0 \\
0 & 3 & 0 \\
0 & 0 & 2
\end{bmatrix}\!$}
& $0.3$ & Joint & $0.227\!\,\text{\scriptsize(0.008)}$ & $[0.114\!\,\text{\scriptsize(0.006)},\; 0.391\!\,\text{\scriptsize(0.009)}]$ & $\mathbf{0.92}\!\,\text{\scriptsize(0.039)}$ \\
& & Separate & $0.351\!\,\text{\scriptsize(0.004)}$ & $[0.274\!\,\text{\scriptsize(0.001)},\; 0.468\!\,\text{\scriptsize(0.001)}]$ & $0.74\!\,\text{\scriptsize(0.022)}$ \\

& $0.1$ & Joint & $0.106\!\,\text{\scriptsize(0.005)}$ & $[0.047\!\,\text{\scriptsize(0.004)},\; 0.197\!\,\text{\scriptsize(0.009)}]$ & $\mathbf{0.94}\!\,\text{\scriptsize(0.034)}$ \\
& & Separate & $0.162\!\,\text{\scriptsize(0.004)}$ & $[0.082\!\,\text{\scriptsize(0.001)},\; 0.263\!\,\text{\scriptsize(0.002)}]$ & $0.76\!\,\text{\scriptsize(0.024)}$ \\

\rowcolor{lightviolet}
\multicolumn{6}{c}{Sample size: $n = 2500$} \\

\multirow{2}{*}{$\!\begin{bmatrix}
9 & 5 & 4 \\
5 & 3 & \tfrac{9}{4} \\
4 & \tfrac{9}{4} & 2
\end{bmatrix}\!$}
& $0.3$ & Joint & $0.228\!\,\text{\scriptsize(0.005)}$ & $[0.183\!\,\text{\scriptsize(0.004)},\; 0.309\!\,\text{\scriptsize(0.006)}]$ & $\mathbf{0.84}\!\,\text{\scriptsize(0.061)}$ \\
& & Separate & $0.336\!\,\text{\scriptsize(0.004)}$ & $[0.295\!\,\text{\scriptsize(0.002)},\; 0.376\!\,\text{\scriptsize(0.005)}]$ & $0.66\!\,\text{\scriptsize(0.068)}$ \\

& $0.1$ & Joint & $0.096\!\,\text{\scriptsize(0.001)}$ & $[0.080\!\,\text{\scriptsize(0.001)},\; 0.115\!\,\text{\scriptsize(0.002)}]$ & $\mathbf{0.90}\!\,\text{\scriptsize(0.043)}$ \\
& & Separate & $0.112\!\,\text{\scriptsize(0.001)}$ & $[0.099\!\,\text{\scriptsize(0.001)},\; 0.124\!\,\text{\scriptsize(0.001)}]$ & $0.68\!\,\text{\scriptsize(0.067)}$ \\

\hline

\multirow{2}{*}{$\!\begin{bmatrix}
9 & 0 & 0 \\
0 & 3 & 0 \\
0 & 0 & 2
\end{bmatrix}\!$}
& $0.3$ & Joint & $0.323\!\,\text{\scriptsize(0.006)}$ & $[0.255\!\,\text{\scriptsize(0.005)},\; 0.400\!\,\text{\scriptsize(0.008)}]$ & $\mathbf{0.88}\!\,\text{\scriptsize(0.046)}$ \\
& & Separate & $0.333\!\,\text{\scriptsize(0.003)}$ & $[0.294\!\,\text{\scriptsize(0.001)},\; 0.372\!\,\text{\scriptsize(0.004)}]$ & $0.72\!\,\text{\scriptsize(0.064)}$ \\

& $0.1$ & Joint & $0.124\!\,\text{\scriptsize(0.003)}$ & $[0.094\!\,\text{\scriptsize(0.001)},\; 0.157\!\,\text{\scriptsize(0.005)}]$ & $\mathbf{0.84}\!\,\text{\scriptsize(0.052)}$ \\
& & Separate & $0.112\!\,\text{\scriptsize(0.001)}$ & $[0.098\!\,\text{\scriptsize(0.000)},\; 0.124\!\,\text{\scriptsize(0.001)}]$ & $0.68\!\,\text{\scriptsize(0.067)}$ \\

\hline
\end{tabular}
\end{table}

Overall, when off-diagonal entries of $\bm{\Sigma}$ are nonzero, joint modeling substantially mitigates upward bias and improves coverage, particularly under weak spatial correlation and small sample sizes. When off-diagonal entries of $\bm{\Sigma}$ are zeros, gains are less pronounced in posterior means but persist in uncertainty quantification, with the joint model delivering more reliable interval calibration at the expense of slightly wider credible intervals.
The reported coverage values in Table~\ref{tab:phi_bgp} demonstrate that, in the trivariate Binomial–Gaussian–Poisson setting, the joint model continues to outperform separate models, highlighting the substantial gains in accurately recovering $\phi$ achieved by leveraging cross-dependence through joint modeling.

\end{document}